\newcites{latex}{References}
\newcolumntype{Y}{>{\centering\arraybackslash}X}
\newtheorem{remark}{Remark}
\newtheorem{lemma}{Lemma}
\newtheorem{theorem}{Theorem}
\newtheorem{corollary}{Corollary}
\newtheorem*{theorem*}{Theorem}
\newtheorem{proposition}{Proposition}
\def\pP{\mathbb{P}}  
\def\nN{\mathbb{N}}
\def\rR{\mathbb{R}}
\def\eE{\mathbb{E}}
\def\D{{\cal D}}
\def\E{{\cal E}}
\def\F{{\cal F}}
\def\M{{\cal M}}
\def\diag{\mathrm{diag}}
\def\wt{\widetilde}
\def\FDR{\hbox{FDR}}
\def\var{\hbox{var}}
\def\cov{\hbox{cov}}
\def\trace{\mathrm{trace}}
\def\vect{\mathrm{vec}}
\def\Beta{\hbox{Beta}}
\def\Dir{\hbox{Dir}}
\def\Exp{\hbox{Exp}}
\def\Ga{\hbox{Ga}}
\def\Normal{\hbox{N}}
\newcommand{\mn}{\mathrm{N}}
\def\P_25_ICML{{\it Proceedings of the 25th international conference on Machine learning}}
\def\bse{\begin{eqnarray*}}
\def\ese{\end{eqnarray*}}
\def\be{\begin{eqnarray}}
\def\ee{\end{eqnarray}}
\def\bq{\begin{equation}}
\def\eq{\end{equation}}
\def\trans{^{\rm T}}
\def\th{^{th}}
\def\bA{{\mathbf A}}
\def\bB{{\mathbf B}}
\def\bc{{\mathbf c}}
\def\bD{{\mathbf D}}
\def\b1e{{\mathbf e}}
\def\b1f{{\mathbf f}}
\def\bg{{\mathbf g}}
\def\bH{{\mathbf H}}
\def\bI{{\mathbf I}}
\def\bL{{\mathbf L}}
\def\bP{{\mathbf P}}
\def\bQ{{\mathbf Q}}
\def\bR{{\mathbf R}}
\def\bu{{\mathbf u}}
\def\bv{{\mathbf v}}
\def\bV{{\mathbf V}}
\def\bw{{\mathbf w}}
\def\bx{{\mathbf x}}
\def\bX{{\mathbf X}}
\def\by{{\mathbf y}}
\def\bz{{\mathbf z}}
\def\bzero{{\mathbf 0}}
\def\simind{\stackrel{\mbox{\scriptsize{ind}}}{\sim}}
\def\simiid{\stackrel{\mbox{\scriptsize{iid}}}{\sim}}
\newcommand{\bmu}{\mbox{\boldmath $\mu$}}
\newcommand{\bvarepsilon}{\mbox{\boldmath $\varepsilon$}}
\newcommand{\bdelta}{\mbox{\boldmath $\delta$}}
\newcommand{\bDelta}{\mbox{\boldmath $\Delta$}}
\newcommand{\bphi}{\mbox{\boldmath $\phi$}}
\newcommand{\bxi}{\mbox{\boldmath $\xi$}}
\newcommand{\bepsilon}{\mbox{\boldmath $\epsilon$}}
\newcommand{\btheta}{\mbox{\boldmath $\theta$}}
\newcommand{\bbeta}{\mbox{\boldmath $\beta$}}
\newcommand{\bzeta}{\mbox{\boldmath $\zeta$}}
\newcommand{\bSigma}{\mbox{\boldmath $\Sigma$}}
\newcommand{\balpha}{\mbox{\boldmath $\alpha$}}
\newcommand{\blambda}{\mbox{\boldmath $\lambda$}}
\newcommand{\bnablan}{\mbox{\boldmath $\nabla$}_{n}}
\newcommand{\bLambda}{\mbox{\boldmath $\Lambda$}}
\newcommand{\bLambdan}{\mbox{\boldmath $\Lambda$}_{n}}
\newcommand{\bLambdann}{\mbox{\boldmath $\Lambda$}_{0n}}
\newcommand{\tLambdann}{\widetilde {\mbox{\boldmath $\Lambda$}}_{0n}}
\newcommand{\bDeltann}{\mbox{\boldmath $\Delta$}_{0n}}
\newcommand{\deltann}{\delta_{0n}}
\newcommand{\deltan}{\delta_{n}}
\newcommand{\Pnn}{\mathcal{P}_{0n}}
\newcommand{\Pnon}{\mathcal{P}_{1,n}}
\newcommand{\Pntw}{\mathcal{P}_{2,n}}
\newcommand{\Pn}{\mathcal{P}_{n}}
\newcommand{\Snt}{S_{0}}
\newcommand{\bDeltan}{\mbox{\boldmath $\Delta$}_{n}}
\newcommand{\bGammann}{\mbox{\boldmath $\Gamma$}_{0n}}
\newcommand{\bXinn}{\mbox{\boldmath $\Xi$}_{0n}}
\newcommand{\bOmega}{\mbox{\boldmath $\Omega$}}
\newcommand{\bpsi}{\mbox{\boldmath $\psi$}}
\newcommand{\half}{\sfrac{1}{2}}
\newcommand{\de}{\mathrm{d}}
\newcommand{\ssparse}{\wt{s}}
\newcommand{\sn}{s_{n}}
\newcommand{\qn}{q_{n}}
\newcommand{\qnn}{q_{0n}}
\newcommand{\dn}{d_{n}}
\newcommand{\en}{e_{n}}
\newcommand{\an}{a_{n}}
\newcommand{\bnn}{b_{n}}
\newcommand{\Hn}{H_{n}}
\newcommand{\tn}{t_{n}}
\newcommand{\pin}{\Pi_{n}}
\newcommand{\kl}[2] {\mathbb{KL} \left(#1\parallel #2\right)  }
\newcommand{\lnt}[2] {\ell_{0} \left[#1, #2\right]  }
\newcommand{\klv}[2] {\mathbb{V} \left(#1 \parallel #2\right)  }
\newcommand{\bOmegan}{\mbox{\boldmath $\Omega$}_{n}}
\newcommand{\bOmegann}{\mbox{\boldmath $\Omega$}_{0n}}
\newcommand{\bLamtil}{\wt{\bLambda}}
\newcommand{\bDeltil}{\wt{\bDelta}}
\newcommand{\delmin}{{\delta}_{\min}}
\newcommand*{\Scale}[2][4]{\scalebox{#1}{$#2$}}%
\newcommand{\fnorm}[1] {  \norm{#1}_F}
\newcommand{\specnorm}[1] {  \norm{#1}_2}
\newcommand{\hamm}[2]{ \left\langle {#1}, {#2} \right\rangle_{H} }
\newcommand{\smin}[1] { s_{\min}\left(#1\right) }
\renewcommand\footnoterule{\kern-3pt \hrule \textwidth 2in \kern 2.6pt}
\def\colred#1{\textcolor{red}{ #1}}
\def\boxit#1{\vbox{\hrule\hbox{\vrule\kern6pt \vbox{\kern6pt \textcolor{blue}{#1}\kern6pt}\kern6pt\vrule}\hrule}}
\def\authorfootnote#1{{\let\thefootnote\relax\footnotetext{#1}}}
\begin{document}
\thispagestyle{empty}
\baselineskip=28pt

\begin{center}
{\LARGE{\bf Bayesian Scalable Precision Factor\\
		\vskip -9pt
		Analysis 
		for Massive Sparse \\ 
		Gaussian Graphical Models
}}
\end{center}
\baselineskip=12pt
\vskip 15pt 

\newcommand{\authors}{\begin{center}
		Noirrit Kiran Chandra$^{a}$ (noirrit.chandra@utdallas.edu)\\
		Peter M\"uller$^{b,c}$ (pmueller@math.utexas.edu)\\
		Abhra Sarkar$^{b}$ (abhra.sarkar@utexas.edu)\\
		\vskip 10pt
		$^{a}$Department of Mathematical Sciences, \\
		The University of Texas at Dallas,\\
		800 W. Campbell Rd, 
		Richardson, TX 75080-3021, USA\\ 
		\vskip 5pt
		$^{b}$Department of Statistics and Data Sciences, \\
		The University of Texas at Austin,\\ 2317 Speedway D9800, Austin, TX 78712-1823, USA\\
		\vskip 5pt 
		$^{c}$Department of Mathematics, \\
		The University of Texas at Austin,\\ 2515 Speedway, PMA 8.100, Austin, TX 78712-1823, USA\\
\end{center}}

\authors


\vskip 15pt 
\begin{center}
{\Large{\bf Abstract}} 
\end{center}
\baselineskip=12pt

	We propose a novel approach to estimating the precision matrix of multivariate Gaussian data
	that relies on decomposing them 
	into a low-rank and a diagonal component. 
	Such decompositions are very popular for modeling large covariance matrices 
	as they admit a latent factor based representation that allows easy
	inference.
	The same is however not true for precision matrices due to the lack of
	computationally convenient representations 
	which restricts inference to low-to-moderate dimensional problems.
	We address this remarkable gap in the literature by building on a 
	latent variable representation for such decomposition for precision matrices. 
	The construction leads to an efficient 
	Gibbs sampler that scales very well to high-dimensional problems 
	far beyond the limits of the current state-of-the-art. 
	The ability to efficiently explore the full posterior space also allows 
	the model uncertainty to be easily assessed. 
	The decomposition crucially additionally 
	allows us to adapt sparsity inducing priors to  
	shrink the insignificant entries of the precision matrix toward zero, 
	making the approach adaptable to 
	high-dimensional small-sample-size sparse settings. 
	Exact zeros in the matrix encoding the underlying conditional independence graph 
	are then determined 
	via a novel posterior false discovery rate control procedure. 
	A near minimax optimal posterior concentration rate for estimating precision matrices is attained 
	by our method under mild regularity assumptions. 
	We evaluate the method's empirical performance through synthetic experiments 
	and illustrate its practical utility in data sets from two different application domains.

\vskip 20pt 
\baselineskip=12pt
\noindent\underline{\bf Key Words}: 
Factor models, 
False discovery rate control, 
Gaussian graphical models, 
Markov chain Monte Carlo, 
Posterior concentration, 
Precision matrix estimation, 
Scalable computation,
Shrinkage priors

\par\medskip\noindent
\underline{\bf Short/Running Title}: Precision Factor Analysis

\par\medskip\noindent
\underline{\bf Corresponding Author}: Abhra Sarkar (abhra.sarkar@utexas.edu)

\pagenumbering{arabic}
\setcounter{page}{0}
\newlength{\gnat}
\setlength{\gnat}{26pt}
\baselineskip=\gnat


\newpage
\section{Introduction}
\label{sec:intro}
For multivariate Gaussian distributed data $\by=(y_{1},\dots,y_{d})\trans \sim \mn_{d}(\bzero, \bSigma)$, 
all conditional dependence information is contained in the inverse covariance matrix $\bSigma^{-1} = \bOmega = ((\omega_{j,j'})) $, or the precision. 
Two `nodes' $y_{j}$ and $y_{j'}$ are conditionally independent given the rest if and only if $\omega_{j,j'}=0$. 
The underlying conditional (in)dependence graph is then obtained by connecting the pairs of nodes $\{(j,j'): \omega_{j,j'} \neq 0\}$ by undirected `edges'.
Estimating the precision matrix, including especially its sparsity patterns, 
for such data is therefore an important statistical problem 
\citep{lauritzen1996graphical,koller2009probabilistic}.

{In this article, we model $\bOmega$ via a \textit{low-rank and diagonal} (LRD) decomposition. 
	\citet{bhattacharya2016fast} introduced a representation for efficient sampling from Gaussian distributions with known LRD structured precision matrices.
	In this article, we {adapt the} representation 
	in a different way to obtain a novel {factor analytic framework} for unknown precision matrices modeled via LRD decompositions with applications to graphical models.
	While such decomposition is already a widely used standard tool for modeling high-dimensional $\bSigma$, 
	to our knowledge, 
	the construction has not been utilized before for modeling $\bOmega$. 
	Our research addresses this remarkable gap in the literature.} 
Although any positive definite matrix can always be factorized this way, 
the main challenge is to introduce such constructions for $\bOmega$ in a way that allows efficient and scalable posterior inference. 
Going significant steps further, we also adapt this approach to sparse high-dimensional settings. 
Noting that any sparse $\bOmega$ can always be represented with a sparse factorization, we impose sparsity in $\bOmega$ by using sparsity-inducing priors for the factorization.

\noindent \textbf{Existing Methods for Covariance and Precision Matrix Estimation:} 
The existing literature on sparse covariance and precision matrix estimation is vast. 
In sparse covariance matrix estimation problems, the entire matrix $\bSigma$ is often directly penalized \citep{levina2008sparsecov,bien2011covest}. 
In an alternative approach, $\bSigma$ is assumed to admit a LRD structure $\bSigma=\bLamtil \bLamtil\trans+\bDeltil$ 
where $\bLamtil$ is a $d\times q$ order matrix 
and $\bDeltil$ is a diagonal matrix with all positive entries. 
In theory, all positive definite matrices admit such a representation for some $0 \leq q \leq d$,
and in practice, $q\ll d$ often suffice to produce good approximations. 
Also importantly, this model admits the latent variable representation $\by= \bLamtil \wt{\bu}+\wt{\bv}$ 
where $\wt{\bu} \sim \mn_{q}(\bzero, \bI_{q})$ and $\wt{\bv} \sim \mn_{d}(\bzero, \bDeltil)$.
The formulation allows massive scalability in computation, 
making LRD based methods popular in the high-dimensional covariance matrix estimation literature \citep{fan2011penalizingfactor,fan2018penalizingfactor,daniele2019penalizingfactor}, 
especially in Bayesian settings \citep[and others]{bhattacharya2011sparse,pati2014, zhu2014,KASTNER2019}.
A sparse $\bSigma$, however, does not usually produce a sparse $\bOmega$ 
and the strategy of inverting the estimated $\bSigma$ to obtain an estimate of $\bOmega$ 
tends to exhibit poor empirical performance \citep{pourahmadi2013high}.

As in covariance matrix estimation problems,
penalized likelihood based methods that directly penalize the number and/or absolute values of non-zero entries in $\bOmega$ 
have also been developed in the frequentist setting \citep{yuan2007model, banerjee2008, rothman2008sparse, d2008first, friedman2008sparse, witten2011new, mazumder2012graphical,zhang2014}.
Alternatively, \citet{meinshausen2006high,peng2009partial} developed
neighborhood selection methods that learn the {edges} by regressing each variable on the rest with penalties on large regression coefficients. 

When the main focus is on estimating the underlying dependence graph, 
Bayesian approaches instead rely on defining 
a hierarchical prior on $\bOmega$ preceded by a prior on the graph.
Choices for the latter include uniform priors over graph sizes \citep{armstrong2009bayesian},
priors centered around some informed location \citep{mitra2013bayesian}, 
priors with edge inclusions following a binomial distribution \citep{dobra2004sparse,carvalho2009objective},  
a hyper-Markov distribution on decomposable graphs \citep{dawid1993hyper},
its generalizations to non-decomposable settings \citep{roverato2002hyper,khare2018bayesian}, etc. 
However, such hierarchical construction with a separate model layer for the underlying graph structure makes posterior exploration quite challenging. 
Markov chain Monte Carlo (MCMC) algorithms have been designed specifically for such models \citep[and others]{dellaportas2003bayesian,atay2005monte,carvalho2007simulation,dobra2011bayesian,green2013sampling,lenkoski2013direct,mohammadi2015bayesian} 
but these strategies still rely on expensive local exploration moves, 
often involving trans-dimensional proposals in the graph space and/or approximations of intractable normalizing constants, 
hence remaining computationally infeasible beyond only a few tens of dimensions \citep{jones2005experiments}.

Bayesian methods that directly penalize $\bOmega$, thereby avoiding to have to specify a separate prior for the underlying graph, 
have started to get some attention but the literature remains sparse. 
{\citet{yoshida_west} proposed a factor model with complex constraints 
	enforcing identical sparsity patterns in the covariance and precision matrices which may be restrictive in practice while also having scalability issues.} 
\cite{banerjee2015bayesian} studied spike and slab type priors \citep{Ishwaran2005spikeslab} to shrink irrelevant off-diagonals to zero.
For point mass mixture priors, MCMC based model space exploration can generally be daunting and may lead to slow mixing and convergence 
even in simple mean and linear regression problems.
Continuous shrinkage priors \citep{polson2010shrink} that allow fast and efficient posterior exploration have also been adapted for precision matrices. 
\cite{wang2012bayesian,khondker2013bayesian} and \cite{li2019graphical} designed block Gibbs samplers 
that allow updating entire columns of $\bOmega$ at once.
{\citet{mohammadi2021jasa} proposed an approximated sampler that}
can scale up to a few hundreds but the problem
remains infeasible for modern applications with many thousands of nodes. 
More recent developments along these lines \citep{gan2019bayesian,li2019bayesianB,deshpande2019simultaneous} 
have focused on 
fast deterministic Expectation-Maximization (EM) algorithms instead,  
which scale well to problems with a few hundred dimensional nodes but only estimate the posterior mode (MAP) and not the full posterior.
\cite{KsheeraSagar2021precision} studied both MCMC and MAP estimation for element-wise horseshoe like priors on the precision matrix 
and studied their convergence properties. 
The approach, however, suffers from similar scalability issues.
Also, in many of these approaches \citep{friedman2008sparse, peng2009partial}, 
the estimated $\bOmega$ is not guaranteed to be positive definite, 
requiring post-hoc analysis to fix the estimate.

Owing to the lack of computationally tractable hierarchical structures, 
in high-dimensions, 
posterior explorations in existing Bayesian precision matrix and graph estimation methods thus still remain prohibitively expensive if not entirely impossible. 
With the few exceptions, 
existing Bayesian approaches also do not come with rigorous theoretical guarantees.

\noindent {\bf Our Proposed Precision Factor Model:} 
In contrast to covariance matrices, 
there are currently no flexible LRD decomposition methods that 
admit easily tractable latent variable representations for precision matrices, 
posing significant methodological and computational challenges, especially for MCMC based Bayesian inference. 

In this article, 
we derive a factor analytic framework 
{building on} a 
representation for Gaussian distributions with an LRD decomposed precision matrix
$\bOmega=\bLambda\bLambda\trans +\bDelta$ for some $d\times q$ order matrix $\bLambda$ and some diagonal matrix $\bDelta$ \citep{bhattacharya2016fast}\label{page:bhattacharya}.  
The representation is immediately useful 
in facilitating efficient posterior simulation in Bayesian inference of $\bOmega$ 
and easily scales to problems with dimensions $d$ far beyond the limits of the current state-of-the-art.

Since all positive definite matrices admit LRD representations ($q=0$ and $q=d$ being the two extremes), 
the proposed approach {imposes no} restrictive assumptions on the
precision matrix or the underlying graph.
{Additionally, we discuss a constructive approach to find an LRD representation of arbitrary sparse $\bOmega$.}
In simulation examples, we show
that $q \ll d$ often suffices 
to produce good approximations of $\bOmega$ even when they do not exactly admit an LRD for such $q$. 
%
Conversely, since the representation always produces a positive
definite matrix, unlike many existing procedures such as
\cite{meinshausen2006high,friedman2008sparse,gan2019bayesian}, we
obtain positive definite estimates simply by design.

As the off-diagonals elements of $\bOmega$ are contributed entirely by $\bLambda\bLambda\trans$,
a sparse $\bLambda$ {is expected} to produce a sparse $\bOmega$
{(see Figure \ref{fig: sparsity patterns} and Section \ref{sec: sm sparsity})}.
A suitably chosen penalty on $\bLambda$ therefore {allows to flexibly adapt its non-zero elements such that} 
{insignificant off-diagonals of $\bOmega$ are shrunk towards zero}.
The strategy has been successfully employed in high-dimensional sparse covariance matrix estimation literature in both Bayesian \citep{pati2014} and frequentist paradigms \citep{daniele2019penalizingfactor}.

In this article, we adapt the Dirichlet-Laplace shrinkage priors \citep{bhattacharya2015dirichlet} on $\bLambda$
for their theoretical and computational tractability.
As an artifact of Bayesian methods with continuous shrinkage priors, 
exact zeroes do not appear in the posterior samples.
Thus, we address the problem of non-zero off-diagonal/edge selection from the posterior MCMC samples
through a novel multiple hypothesis testing {approach}
that allows the posterior false edge discovery rate (FDR) \citep{chandra2019non} to be controlled at any desired level. 
This is another salient feature of our proposed method that properly accommodates posterior uncertainty in reporting a point estimate for the graph. 

With our model and prior specifications, we establish near minimax-optimal contraction rates of the posterior around the true $\bOmega$ under mild assumptions when the number of nodes increases {exponentially} with the sample size. 
We evaluate the proposed method's finite sample efficacy through simulation experiments 
where it either outperforms or is competitive with previously existing methods in moderately high-dimensional problems 
while also scaling to dimensions far beyond the reach of many of those methods. 
We illustrate our method's practical utility in real data sets from two different application domains.

\noindent{\bf Our Key Contributions:} 
Overall, our main contributions to the literature include 
(a) proposing a likelihood based approach built on a low-rank and diagonal decomposition of the precision matrix,
(b) build on a latent factor representation for such decomposition that provides new interpretations for such models, 
(c) designing a Gibbs sampling algorithm that exploits this latent factor representation and scales very well to high-dimensional problems, 
(d) adapting shrinkage priors for the low-rank component that further makes the method applicable to high-dimensional small-sample-size sparse settings, 
(e) developing a novel FDR control procedure for graph selection from the posterior samples of the precision matrix, and 
(f) establishing rigorous asymptotic properties of the posterior of the proposed approach.

\noindent{\bf Outline of the Article:} The rest of this article is organized as follows. 
Section \ref{sec: models} details our matrix decomposition based model. 
Section \ref{sec: precision factor analysis} discusses our novel factor analytic representation; 
{Section \ref{subsec:lowrank_model} discusses how a low-rank approach can be used to learn arbitrary sparse precision matrices;}
Section \ref{sec: priors} discusses the priors; 
Section \ref{sec: post inference main} describes the posterior sampling algorithm; 
Section \ref{sec: graph} presents our graph selection procedure via FDR control;
Section \ref{sec: asymptotics} discusses the posterior's asymptotic properties. 
Section \ref{sec: sim studies} summarizes the results of simulation experiments.
Section \ref{sec: applications} presents the results for two real data sets. 
Section \ref{sec: discussion} contains concluding remarks.


\section{Gaussian Precision Factor Models} \label{sec: models}
We consider a random sample of $n$ observations assumed to be independently and identically distributed (iid) $d$-dimensional random vectors following
a multivariate Normal distribution with mean zero and precision matrix $\bOmega$ as
\vspace{-8.5ex}\\
\be
\by_{i} \simiid \mn_{d}(\bzero,\bOmega^{-1}), ~~~ i=1,\dots,n, \label{eq: data}
\ee
\vspace{-8.5ex}\\
where 
$\by_{i} = (y_{i,1},\dots,y_{i,d})\trans$. 
The primary goal is to estimate $\bOmega$, especially identifying its sparsity pattern that characterizes conditional independence relationships between different components of $\by$.

To this end, we consider an LRD decomposition of $\bOmega$ as 
\vspace{-8ex}\\
\be
\bOmega=\bLambda \bLambda\trans +\bDelta, \label{eq: precision decomposition}
\ee
\vspace{-8ex}\\
where 
$\bLambda^{d \times q} = ((\lambda_{r,c}))$ 
and $\bDelta = \diag(\delta_{1}^{2},\dots,\delta_{d}^{2})$. 
Factorization \eqref{eq: precision decomposition} is completely flexible in the sense that 
a matrix is positive definite if and only if it admits such a representation {for some $0\leq q\leq d$}. 
Any precision matrix $\bOmega$ can therefore be written as \eqref{eq: precision decomposition} for a sufficiently large value of $q$. 
For most practical cases, 
values of $q \ll d$ suffice to approximate $\bOmega$ well, resulting in a huge reduction in dimensions and allowing massive  scalability in computation.
In model (\ref{eq: precision decomposition}), $\bLambda$ is not strictly identifiable 
since, for example, $\bLambda\bLambda\trans = \bLambda\bQ\bQ\trans\bLambda\trans$ for any orthogonal matrix $\bQ$. 
%
Inference on $\bOmega$ being the primary interest, individual identifiability or interpretation of the model parameters in \eqref{eq: precision decomposition} is, however, not required.

\subsection{Factor Analytic Representation} \label{sec: precision factor analysis}
{One main advantage of modeling covariance matrices via LRD decompositions is 
	the existence of a latent factor  
	representation 
	that greatly facilitates computation. 
		Tractable latent factor representations are, however, not known to exist for precision matrices, presenting major barriers against efficient inference. 
		Recently \citet{bhattacharya2016fast} introduced a representation that greatly facilitates 
		fast sampling from multivariate Gaussian distributions with known precision matrices with an LRD structure. 
		We show that, 
		by exploiting this representation in a different way,  
		a latent factor model can in fact be obtained for LRD decomposed precision matrices as well 
		that allows 
		efficient posterior inference of its unknown components 
		in much the same way as classical factor models facilitate computation for unknown LRD decomposed covariance matrices. 
		The representation, formalized in Proposition \ref{prop: main construction} below and 
		referred to in this article as the `precision factor model', 
		also provides novel insights into the very construction of Gaussian graphical models.} 
	

	\begin{proposition}   \label{prop: main construction}
		The model $\by_{i} \simiid \mn_{d}(\bzero, \bOmega^{-1}), i=1,\dots,n$, 
		with $\bOmega = \bLambda\bLambda\trans + \bDelta$, 
		where $\bLambda$ is ${d \times q}$ with $q \leq d$ 
		and $\bDelta = \diag(\delta_{1}^{2},\dots,\delta_{d}^{2})$, 
		admits the equivalent representation 
				\vskip-9ex
		\be
		& \by_{i} = -\bDelta^{-1}\bLambda (\bI_{q}+\bLambda\trans\bDelta^{-1}\bLambda)^{-1}\bu_{i} + \bv_{i}, \label{eq: latent factor model}\\
		& \text{where}~		
		\begin{bmatrix}
			\bu_{i}\\
			\bv_{i}
		\end{bmatrix} \simiid 
		\mn_{q+d}\left(\bzero, \begin{bmatrix}
			\bI_{q}+\bLambda\trans\bDelta^{-1}\bLambda & \bLambda\trans\bDelta^{-1}\\
			\bDelta^{-1} \bLambda & \bDelta^{-1}
		\end{bmatrix}\right).    \label{eq: joint_dist}
		\ee
	\end{proposition}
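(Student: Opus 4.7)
The plan is to verify that the random vector $\by_i$ defined through the representation in \eqref{eq: latent factor model}--\eqref{eq: joint_dist} has exactly the law $\mn_d(\bzero, \bOmega^{-1})$ with $\bOmega = \bLambda\bLambda\trans + \bDelta$. Since $(\bu_i, \bv_i)$ is jointly Gaussian with mean zero and $\by_i$ is an affine function of them, $\by_i$ is automatically mean-zero Gaussian, so it suffices to compute its covariance and match it to $\bOmega^{-1}$.

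As a preliminary sanity check, I would verify that the proposed joint covariance in \eqref{eq: joint_dist} is positive definite so that the representation is well defined. Taking the Schur complement of the lower-right block $\bDelta^{-1}$ gives $(\bI_q + \bLambda\trans\bDelta^{-1}\bLambda) - \bLambda\trans\bDelta^{-1}(\bDelta^{-1})^{-1}\bDelta^{-1}\bLambda = \bI_q$, which together with $\bDelta^{-1} \succ 0$ shows positive definiteness of the full block matrix.

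Next, set $\bK = \bI_q + \bLambda\trans\bDelta^{-1}\bLambda$ and $\bA = -\bDelta^{-1}\bLambda \bK^{-1}$, so that \eqref{eq: latent factor model} reads $\by_i = \bA\bu_i + \bv_i$. Using $\var(\bu_i) = \bK$, $\cov(\bu_i, \bv_i\trans) = \bLambda\trans\bDelta^{-1}$, and $\var(\bv_i) = \bDelta^{-1}$, I would expand
$$\var(\by_i) = \bA\bK\bA\trans + \bA\bLambda\trans\bDelta^{-1} + \bDelta^{-1}\bLambda\bA\trans + \bDelta^{-1}.$$
A direct substitution shows that each of the first three terms equals $\pm\,\bDelta^{-1}\bLambda \bK^{-1}\bLambda\trans\bDelta^{-1}$ (positive for the first, negative for the other two), so they collapse to $-\bDelta^{-1}\bLambda \bK^{-1}\bLambda\trans\bDelta^{-1}$, leaving
$$\var(\by_i) = \bDelta^{-1} - \bDelta^{-1}\bLambda(\bI_q + \bLambda\trans\bDelta^{-1}\bLambda)^{-1}\bLambda\trans\bDelta^{-1}.$$

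Finally, invoking the Sherman--Morrison--Woodbury identity identifies the right-hand side with $(\bDelta + \bLambda\bLambda\trans)^{-1} = \bOmega^{-1}$, which completes the argument. There is no real obstacle here: the proof is essentially a bookkeeping exercise in Gaussian linear algebra, and the only non-trivial ingredient is recognizing that the bespoke cross-covariance $\bLambda\trans\bDelta^{-1}$ in \eqref{eq: joint_dist} is precisely what is needed to make the mixed terms cancel against $\bA\bK\bA\trans$ and leave behind the Woodbury expansion of $\bOmega^{-1}$.
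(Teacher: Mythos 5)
Your proof is correct and rests on the same key ingredient as the paper's: the Sherman--Morrison--Woodbury identity $(\bLambda\bLambda\trans+\bDelta)^{-1}=\bDelta^{-1}-\bDelta^{-1}\bLambda(\bI_q+\bLambda\trans\bDelta^{-1}\bLambda)^{-1}\bLambda\trans\bDelta^{-1}$ combined with routine Gaussian covariance bookkeeping. The only difference is one of direction --- you take the joint law of $(\bu_i,\bv_i)$ in \eqref{eq: joint_dist} as the starting point (usefully checking its positive definiteness via the Schur complement, which the paper leaves implicit) and verify that $\by_i$ in \eqref{eq: latent factor model} has covariance $\bOmega^{-1}$, whereas the paper reverse-engineers the joint law by drawing $\bu_i\sim\mn_q(\bzero,\bP)$ with $\bP=\bI_q+\bLambda\trans\bDelta^{-1}\bLambda$ independently of $\by_i$ and setting $\bv_i=\bDelta^{-1}\bLambda\bP^{-1}\bu_i+\by_i$ --- but the two computations are transparently equivalent.
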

	The proposition follows straightforwardly using Sherman-Woodbury identity for the covariance matrix $\bSigma = \bOmega^{-1}$ given by 
		\vskip-9ex
	\bse
	\bOmega^{-1}=(\bLambda\bLambda\trans+\bDelta)^{-1}=\bDelta^{-1}- \bDelta^{-1} \bLambda(\bI_{q}+\bLambda\trans\bDelta^{-1}\bLambda)	^{-1}\bLambda\trans\bDelta^{-1}.
	\ese
		\vskip-14pt	
	Note that $\cov(\by_{i},\bu_{i})=\bzero$ in (\ref{eq: latent factor model}).
	A reverse-engineered construction that is particularly useful for posterior simulation in Bayesian settings 
	utilizes this fact and 
	generates 
	$\bu_{i} \simiid \mn_{q}(\bzero,\bP)$ with $\bP = (\bI_{q}+\bLambda\trans\bDelta^{-1}\bLambda)$ independent of $\by_{1:n}$ first  
	and then sets 
	$\bv_{i} = \bDelta^{-1}\bLambda \bP^{-1}\bu_{i} + \by_{i}$.
	Clearly then, $\bv_{i}\simiid \mn_{d}(\bzero,\bDelta^{-1})$.  
	Also, \\
		\vspace*{-6ex}
	\bse
	\begin{bmatrix}
		\bu_{i}\\
		\by_{i}
	\end{bmatrix} \simiid \mn_{q+d}\left(\bzero, \begin{bmatrix}
		\bP& \bzero\\
		\bzero & \bOmega^{-1}
	\end{bmatrix}\right)
	\text{ then implies }
	\begin{bmatrix}
		\bu_{i}\\
		\bv_{i}
	\end{bmatrix} \simiid \mn_{q+d}\left(\bzero, \begin{bmatrix}
		\bP& \bLambda\trans\bDelta^{-1}\\
		\bDelta^{-1} \bLambda & \bDelta^{-1}
	\end{bmatrix}\right),
	\ese
	as in (\ref{eq: joint_dist}) in Proposition \ref{prop: main construction}. 
	Importantly, we can also write 
		\vskip-9ex
	\be
	\bu_{i} = \bLambda\trans \bv_{i} +\bvarepsilon_{i}, \text{ where } \bvarepsilon_{i} \simiid \mn_{q}(\bzero, \bI_{q}).  \label{eq:factormodel}
	\ee
		\vspace{-9ex}\\
	{Although mathematically simple, Proposition \ref{prop: main construction} has far-reaching implications.}
	An efficient and highly scalable Gibbs sampler 
	follows immediately from the construction 
	by first generating the latent vectors $\bu_{i},\bv_{i}$ given $\bLambda,\bDelta$ and $\by_{i}$ as described above, 
	and then updating the rows of $\bLambda$ given $\bu_{1:n}$ and $\bv_{1:n}$ using (\ref{eq:factormodel}). 

	
	\begin{figure}[!ht]
		\centering
		\vskip -10pt
		\includegraphics[trim={1cm 1cm 1cm 0cm},clip, width=0.485\linewidth]{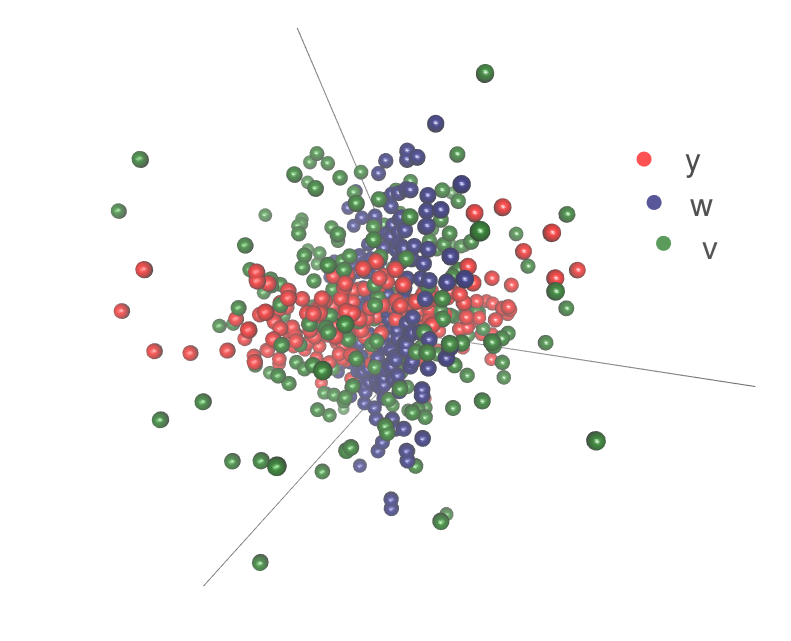}\quad
		\includegraphics[trim={1cm 1cm 1cm 0cm},clip, width=0.485\linewidth]{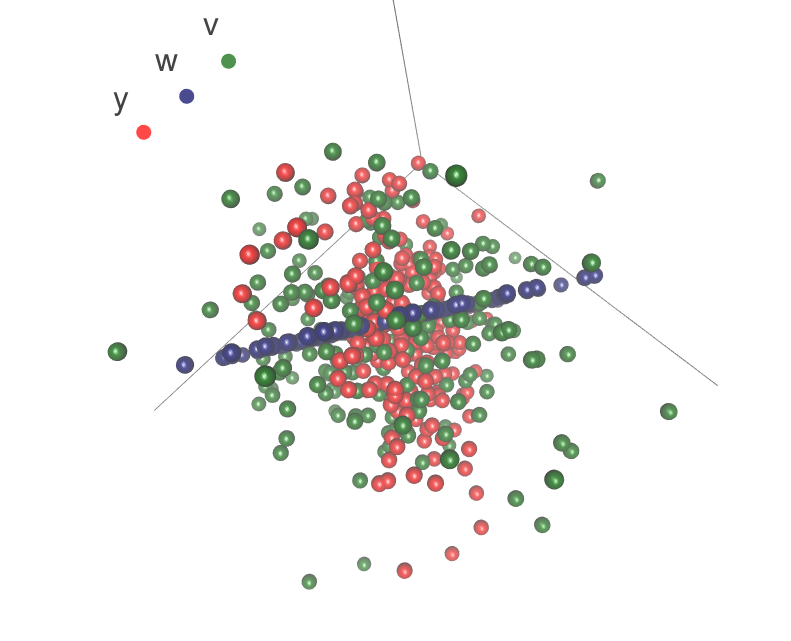}
		\caption{Graphical view of the precision factor model: 
			Plots of observed $\by_{i}$, latent $\bw_{i}=\bDelta^{-1}\bLambda \bP^{-1}\bu_{i}$ and latent $\bv_{i} = \bw_{i} + \by_{i}$  
			for a model with dimension $d=3$ and rank $q=2$ from different viewing angles.
			The $d$-dimensional vectors $\bw$ are supported on a lower $q$-dimensional plane. 
			Also, $\bu$ and $\by$ are independent of each other. 
			Since $\bw$ are linear transformations of $\bu$, 
			they are also independent of $\by$. 
			In the plots, $\bw$ and the average values of $\by$ can be seen to be living on planes that are orthogonal to each other.
			The vectors $\bv$ are more scattered than $\by$ and, 
			in obtaining the variance-covariance of $\by=\bv-\bw$,
			the larger variances of $\bv$ 
			are perfectly compensated by the negative covariances between $\bv$ and $\bw$.
		}
		\label{fig:latents}	
	\end{figure}

	Continuing the thread at the beginning of this subsection, 
	parallels can be drawn with latent factor models for covariance matrices 
	where a similar decomposition $\bSigma=\bLamtil \bLamtil\trans + \bDeltil$ arises from the model 
	$\by_{i} = \bLamtil \wt\bu_{i} + \wt\bv_{i}$ 
	with independent latent components $\wt\bu_{i}$ and $\wt\bv_{i}$,
	where the latent factors $\wt\bu_{i} \simiid\mn_{q}(\bzero,\bI_{q})$, with $q$ typically $\ll d$, and the errors $\wt\bv_{i} \simiid \mn_{d}(\bzero,\wt\bDelta)$. 
	The covariance between the components of $\by_{i}$ and a part of the variance of $\by_{i}$ are thus explained by 
	(a) the variance-covariance of the latent factors $\wt\bu_{i}$ 
	(b) while the remaining unexplained variance is attributed to the errors $\wt\bv_{i}$. 
	
	In contrast, for the precision factor model depicted in Figure \ref{fig:latents}, 
	we have from model \eqref{eq: latent factor model} that 
	$\by_{i}= -\bDelta^{-1}\bLambda \bP^{-1}\bu_{i} + \bv_{i}$ 
	with dependent latent components $\bu_{i}$ and $\bv_{i}$,
	where the latent factors $\bu_{i} \simiid\mn_{q}(\bzero,\bP)$, with $q$ expected again to be $\ll d$, and the `errors' $\bv_{i} \simiid \mn_{d}(\bzero,\bDelta^{-1})$. 
	The variance-covariance of $\by_{i}$ is thus explained by (a) the variance-covariance of the latent factors $\bu_{i}$, 
	(b) the variance of the `errors' $\bv_{i}$, and (c) the covariance between $\bu_{i}$ and $\bv_{i}$.

	If independence between the variance contributing components is desired, 
	an alternative view $\bv_{i} = \bDelta^{-1}\bLambda \bP^{-1}\bu_{i} + \by_{i}$ of model \eqref{eq: latent factor model} is to see 
	the latent $\bv_{i}$'s be composed of independent components $\bu_{i}$ and $\by_{i}$, 
	where the latent factors $\bu_{i} \simiid \mn_{q}(\bzero,\bP)$, with $q \ll d$ as before, and the `errors' $\by_{i} \simiid \mn_{d}(\bzero,\bSigma)$. 
	The $\bv_{i}$'s can thus be represented in an orthogonal decomposition with components $\bw_{i}=\bDelta^{-1}\bLambda \bP^{-1}\bu_{i}$ and $\by_{i}$.
	The larger variances $\bDelta^{-1}$ of $\bv_{i}$ are now being explained by (a) the variance-covariance of the latent factors $\bu_{i}$ and 
	(b) the variance-covariance of the `errors' $\by_{i}$, the off-diagonal covariance terms of these components perfectly cancelling each other to produce the diagonal matrix $\bDelta^{-1}$.

	In yet another view based on equation \eqref{eq:factormodel}, 
	the vectors $\bv_{i}$ can instead be interpreted as independent heterogeneous latent factors and $\bu_{i}$ the associated response vectors subject to white noises $\bvarepsilon_{i}$. 
	Contrary to the classical factor model, in this view, we have the factor dimension $d \gg$ the response dimension $q$. 
	Clearly, $\bOmega^{-1}$ is now the conditional variance-covariance of $\bv_{i}$ given $\bu_{i}$, that is, given the response variables $\bu_{i}$, 
	we can study the behavior of the latent factors $\bv_{i}$ by examining $\bOmega$.

	Although the precision factor model described here has immediate practical implications for Bayesian inference of Gaussian precision matrices considered in this article, 
	the representation itself is not specific to the chosen inferential paradigm 
	but is a mathematical identity 
	that may be of broad general interest in understanding the basic construction of such models.

	\subsection{Low-rank Modeling of Sparse Precision Matrices}
	\label{subsec:lowrank_model}
	In this section we discuss how a sparse precision matrix can indeed admit an LRD decomposition. 
	Then we discuss our strategy of inducing sparsity in high-dimensional precision matrix estimation problems.

	\subsubsection{LRD Decomposition of Sparse Precision Matrices}
	\label{subsec:lrd_sparse}
	We first discuss how a sparse $\bOmega$ can admit an LRD decomposition.
	To get some insights, consider the stylized example from Figure \ref{fig:toy_example} where $d=5$ and $\bOmega$ has $\ssparse=3$ non-zero off-diagonals.
	We recall the factor analytic representation of our model from Section \ref{sec: precision factor analysis} where we write $\bu=\bLambda\trans\bv +\bvarepsilon$ in equation \eqref{eq:factormodel} 
	and interpret $\bOmega=((\omega_{j,h}))$ as the precision matrix of $\bv$ conditionally on $\bu$.
	Under this representation, $\omega_{j,h}\neq 0$ if and only if  $\blambda_{j}\trans \blambda_{h}\neq 0$, that is, if $v_{j}$ and $v_{h}$ are connected to the same $u_{\ell}$, $1 \leq \ell \leq q$.
	
	\begin{figure}[!ht]
		\centering
		\includegraphics[width=.6\linewidth,trim={.5cm .5cm .6cm .45cm},clip]{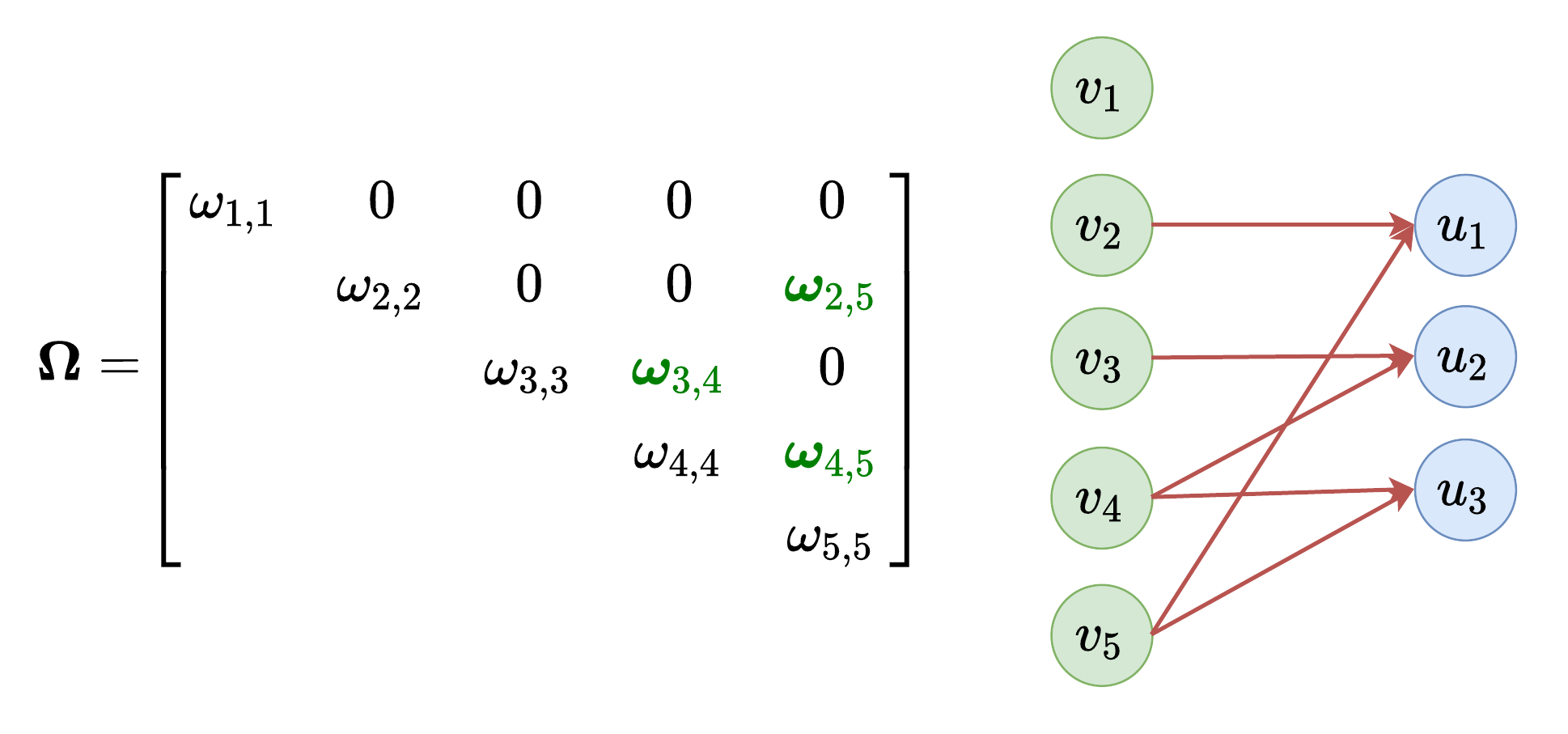}
		\vskip-1ex
		\caption{Constructing an LRD decomposition of a sparse $\bOmega^{5\times 5}$: 
			For each non-zero $\omega_{j,h}$, we connect $v_{j}$ and $v_{h}$ to $u_{\ell}$.
			Notably, $y_{1}$ is marginally independent to every other variable and therefore we do not connect $v_{1}$ to any $u_{\ell}$.}
		\label{fig:toy_example}
	\end{figure}	
	
	Likewise, we can construct a sparse $\bLambda$ as follows
	\vskip-3ex
	\begin{equation*}
		\underbrace{\begin{bmatrix}
				u_{1}\\
				u_{2}\\
				u_{3}
		\end{bmatrix}}_{\bu} = 
		\underbrace{\begin{bmatrix}
				0 & \lambda_{2,1} & 0 & 0 &\lambda_{5,1} \\
				0 & 0 & \lambda_{3,2} & \lambda_{4,2} & 0 \\
				0 & 0 & 0 & \lambda_{4,3} & \lambda_{5,3} 
		\end{bmatrix} }_{\bLambda\trans}
		\underbrace{\begin{bmatrix}
				v_{1}\\
				v_{2}\\
				v_{3}\\
				v_{4}\\
				v_{5}\\
		\end{bmatrix}}_{\bv} +
		\underbrace{\begin{bmatrix}
				\varepsilon_{1}\\
				\varepsilon_{2}\\
				\varepsilon_{3}
		\end{bmatrix}}_{\bvarepsilon},
	\end{equation*}
	\vskip-2ex
	\noindent
	subject to $\lambda_{2,1} \lambda_{5,1} = \omega_{2,5}$,  $\lambda_{3,2} \lambda_{4,2} = \omega_{3,4}$, $\lambda_{4,3} \lambda_{5,3} = \omega_{4,5}$ and $\delta_{j}^{2} + \blambda_{j}\trans \blambda_{j} = \omega_{j,j}$ for all $j=1,\dots,d$.
	Although this construction is not unique, we see that a sparse LRD decomposition indeed exists for our concerned $\bOmega$.
	
	We can generalize this argument for arbitrary sparse $\bOmega$ with $\ssparse$ number of non-zero off-diagonals.
	We start with a $d\times q$ order $\bLambda$ with all entries equal to zero.
	Let $\E$ be the set of edges in the conditional dependence graph. 
	Note that each edge corresponds to a partial correlation between $Y_{j}$ and $Y_{h}$, $1\leq j,h \leq d$.
	For an edge  $\omega_{j,h} \in \E$, we set $\lambda_{j,\ell_{1}}$ and $\lambda_{h,\ell_{1}}$ to be non-zero for some $\ell_{1}$ so that $\blambda_{j}\trans \blambda_{h}\neq 0$.
	For the next edge $\omega_{j',h'} \in \E$, suppose $\omega_{j,j'}=\omega_{j,h'}=\omega_{h,h'}=0$.
	Then we set $\lambda_{j,\ell_{2}}$ and $\lambda_{h,\ell_{2}}$ to be non-zero for some $\ell_{2} \neq \ell_{1}$ so that $\blambda_{j'}\trans \blambda_{h'}\neq 0$ but $\blambda_{j}\trans \blambda_{j'}= \blambda_{j}\trans \blambda_{h'}= \blambda_{h}\trans \blambda_{h'}=0$, and so on.
	Note that in this construction (which need not be unique), 
	for each edge $\omega_{j,j'}\neq 0$, we need to connect $v_{j}$ and $v_{j'}$ to a $u_{\ell}$.	
	\uline{Therefore, we do not need $q$ to be larger than $\ssparse$}.
	The existence of such a $\bLambda$ and $\bDelta$ is subject to the solution of the following system of quadratic equations on $\{\vect(\bLambda),\bdelta_{1:d}\}$ with $dq+d$ unknowns and $d+\ssparse$ equations
		\vskip-5ex
	\begin{equation*}
		\blambda_{j}\trans \blambda_{h} =\omega_{j,h} \text{ for all } \omega_{j,h}\in \E, \quad
		\delta_{j}^{2}  + \blambda_{j}\trans \blambda_{j} = \omega_{j,j} \text{ for all } j=1,\dots,d.
		\label{eq:quad_eqns}
	\end{equation*}
		\vskip-2ex
	In practical high-dimensional applications, even though $d$ is large, the conditional dependence graph is often sparse, that is, $\ssparse \ll d$. 
	\citet{BARVINOK2022quad_system} noted that when the number of unknown variables are considerably larger than the number of equations, a (not necessarily unique) solution to quadratic system of equations exists under very generic conditions.	

\subsubsection{Inducing Sparsity in $\bOmega$ via Sparsity in $\bLambda$} \label{sec: sm sparsity}

\begin{figure}[!ht]	
	\centering
	\begin{equation*}
		\includegraphics[scale=.36, trim={0cm 2cm 0cm 0cm}]{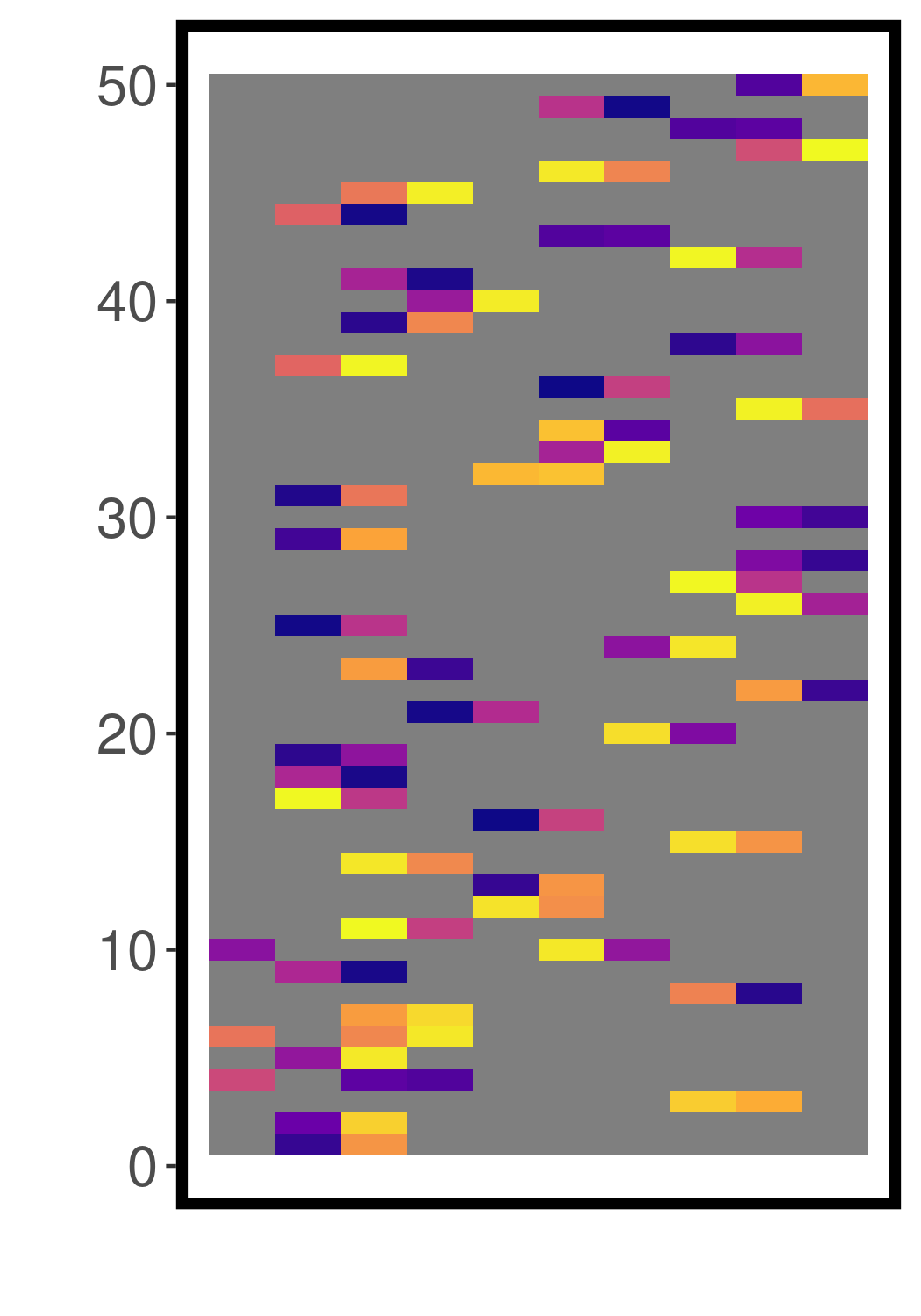} ~ \Scale[2.5]{\times}
		\includegraphics[scale=.36, trim={0cm 2cm 0cm 0cm}]{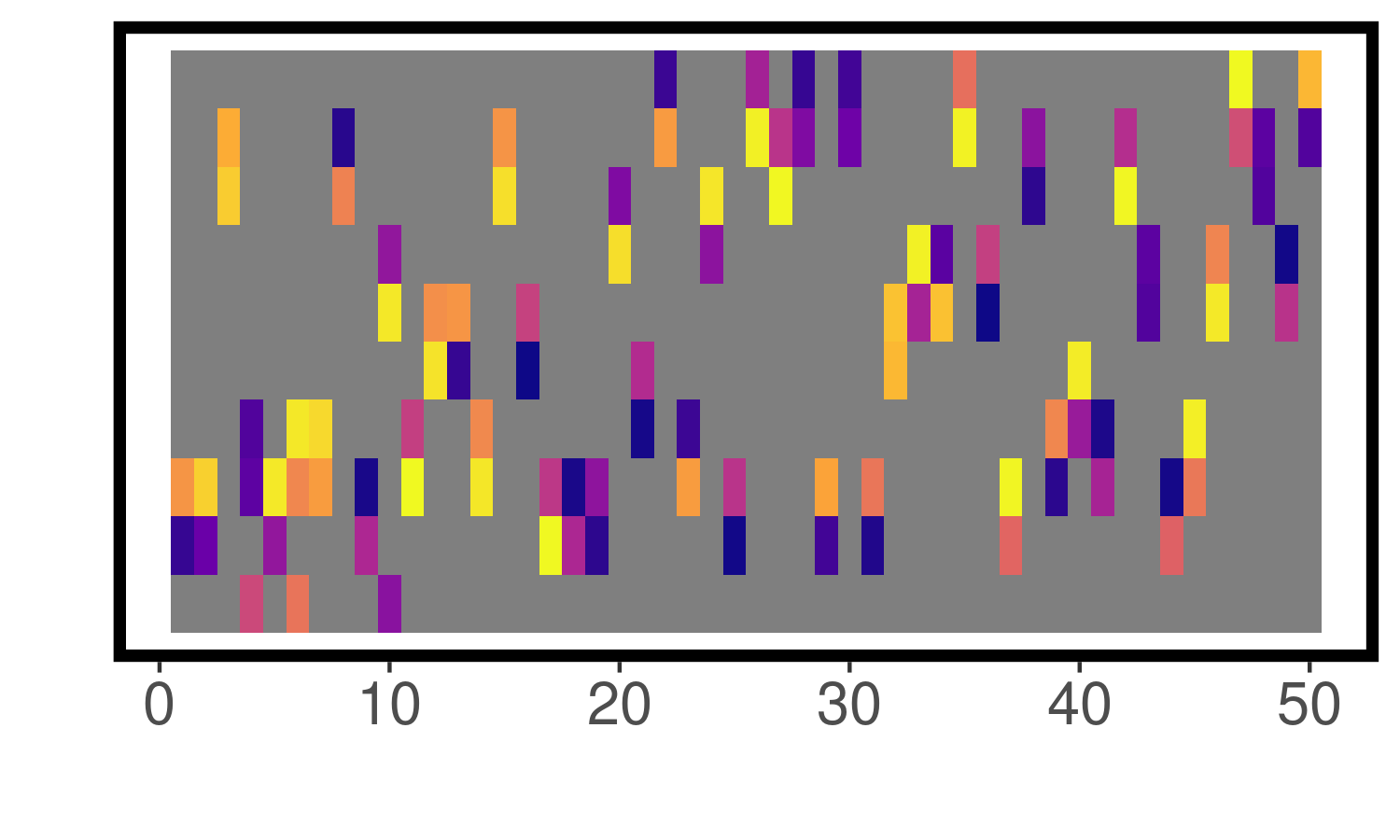} ~ \Scale[2.5]{=}
		\includegraphics[scale=.36, trim={0cm 2cm 0cm 0cm}]{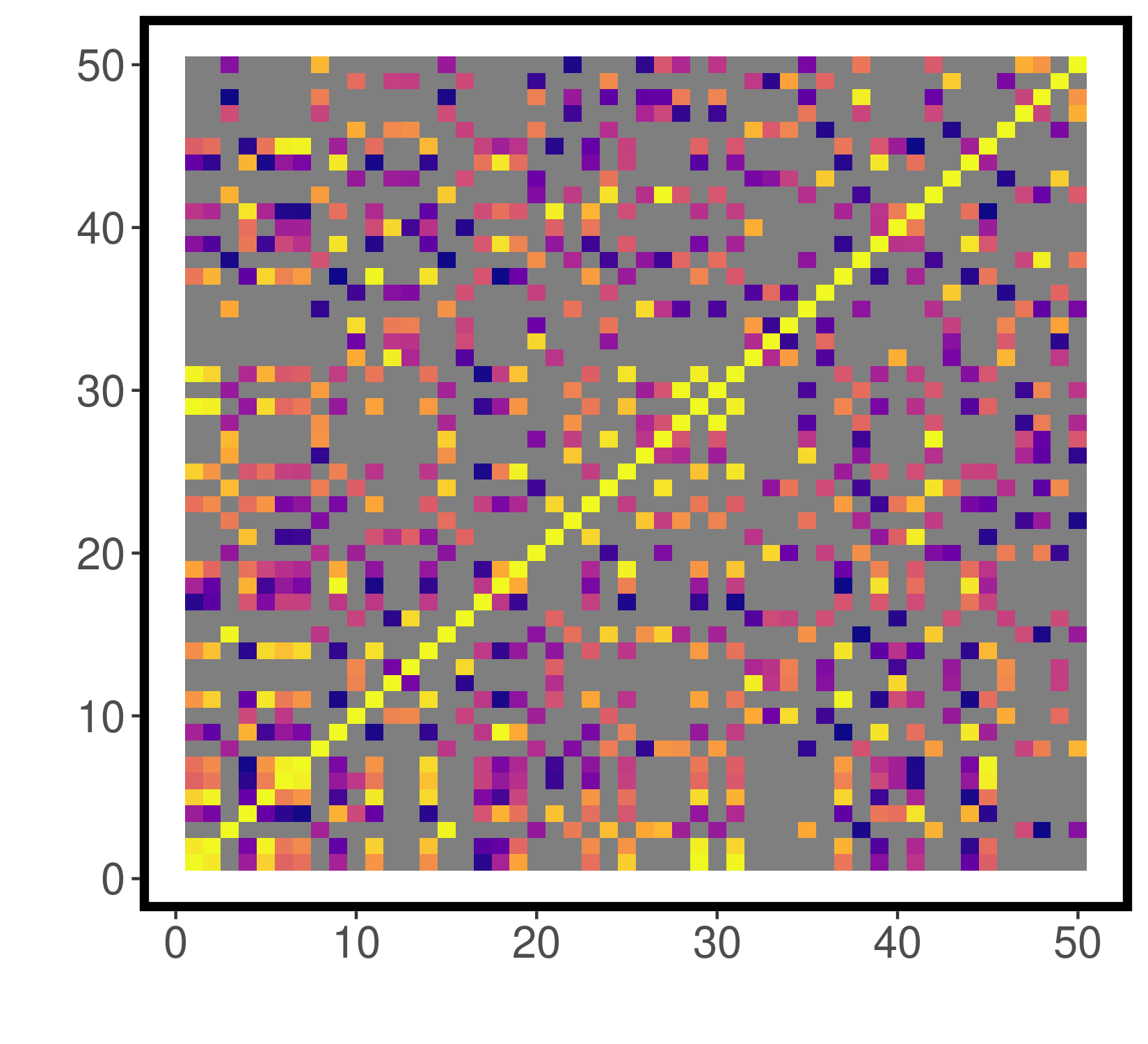}
	\end{equation*}	
	\caption{Sparse $\bLambda$ producing a sparse $\bLambda\bLambda\trans$. 
		Here, the gray cells represent exact zeros, and purple to yellow represent smaller (negative) to larger (positive) values.}
	\label{fig: sparsity patterns}
\end{figure}

The off-diagonals elements of $\bOmega=\bLambda\bLambda\trans+\bDelta$ are contributed entirely by $\bLambda\bLambda\trans$. 
{This allows to achieve sparsity in $\bOmega$ by inducing sparsity in $\bLambda$. 
	Following the discussion in Section \ref{subsec:lrd_sparse}, a carefully designed data-adaptive penalty on $\bLambda$ 
	(e.g., a shrinkage prior on $\bLambda$ that sufficiently increases the probability of obtaining a sparse $\bOmega$) 
	would induce sparsity in $\bLambda$ in such a (not necessarily unique) way that 
	the sparsity patterns in $\bOmega$ are also accurately recovered.}
In this section, we show that by inducing sparsity in $\bLambda^{d\times q}= ((\lambda_{j,h}))$, sparsity {is} also induced in $\bOmega^{d\times d}=((\omega_{j,j'}))$. 
We begin with the spike-and-slab prior \citep{Ishwaran2005spikeslab} that allows exact zeros via a spike at zero and large nonzero elements via a continuous slab $g(\cdot)$ as 
\vskip-6ex
\begin{equation*}
	\lambda_{j,h}\mid \pi, \theta_{j,h} \sim \pi \mathbbm{1}_{\{0\}}(\cdot)+ (1-\pi) g(\cdot),     \quad    \pi\sim \Beta(a_{\pi}, b_{\pi}), 
	\label{eq:globallocalprior}
\end{equation*}
\vskip-2ex
\noindent where $\pi$ is the prior probability of observing a zero for $\lambda_{j,h}$. 
For such priors $\Pi(\omega_{j,j'}=0 \mid \pi)$ can be analytically reduced to a simple form that helps gain insights into 
how inducing sparsity in $\bLambda$ can induce sparsity in $\bOmega$ with high probability.
Specifically, since $\omega_{j,j'}=\blambda_{j} \blambda_{j'}\trans$, where 
$\blambda_{j}$ is the $j\th$ row of $\bLambda$,
$\omega_{j,j'}=0$ can only happen when the $r\th$ entries of $\blambda_{j}$ and $\blambda_{j'}$ are both not from the slab distribution $g()$ for all $r=1,\dots,q$.
Therefore, we have 
\vskip-6ex
\begin{equation*}
	\textstyle{\Pi(\omega_{j,j'}=0 \lvert \pi) 
	=\sum_{r=0}^{q} \left\{{q \choose r} \pi^{r} (1-\pi)^{q-r}\right\} \times \pi^{q-r}= \pi^{q}(2-\pi)^{q}}.
\end{equation*}
\vskip-2ex
The behavior of $\Pi(\omega_{j,j'}=0 \mid \pi)$ for varying values of $\pi$ and $q$ are shown in Figure \ref{sm fig:sparsity_omega}.
It can be seen that by controlling $\pi$, it is possible to induce any desired level of sparsity in $\bOmega$.
{The hierarchical Beta prior on $\pi$ allows for data-adaptive learning  and shrink the elements of $\bOmega$ accordingly \citep{scott10}.}
\begin{figure}[h]
	\centering
	\includegraphics[width=.45\linewidth]{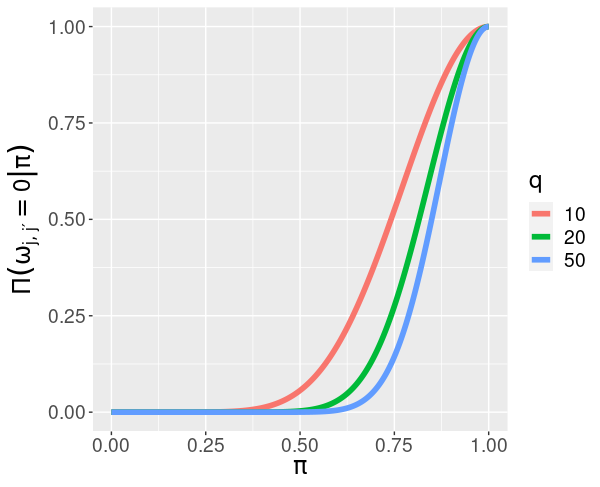}
	\caption{$\pi$ versus $\Pi(\omega_{j,j'}=0 \mid \pi)$ for different values of $q$. 
		Clearly, by controlling $\pi$, it is possible to induce desired level of sparsity in $\bOmega$.
		A hierarchical prior on $\pi$ adaptively learns from the data and shrinks the elements of $\bOmega$ accordingly.
	}
	\label{sm fig:sparsity_omega}
\end{figure}

Implementation of spike-and-slab type mixture priors is computationally challenging.
Thus, 
continuous global-local shrinkage priors have gained popularity in the Bayesian sparse estimation literature 
as they often greatly simplify posterior computation \citep{polson2010shrink} while retaining almost similar statistical properties of the classical spike-and-slab prior.
In particular, we use the Dirichlet-Laplace  \citep[DL,][]{bhattacharya2015dirichlet} prior that has equivalent asymptotic properties with respect to the spike-and-slab in a similar context \citep[Theorem 5.1]{pati2014} while being amenable to scalable posterior computation.
{The results for the spike-and-slab discussed here provides the intuitions on how continuous shrinkage priors like the DL induce sparsity in $\bOmega$ by inducing shrinkage on $\bLambda$.}


\subsubsection{Advantages of the LRD Decomposition}

For high-dimensional sparse covariance matrix estimation, this LRD decomposition strategy is extensively used in both Bayesian \citep{pati2014} and frequentist paradigms \citep{daniele2019penalizingfactor} 
where the associated latent factor formulation allows 
scalable computation in big data problems \citep[and others]{bhattacharya2011sparse,fan2011penalizingfactor,sabnis2016divide, fan2018penalizingfactor, KASTNER2019}.
Penalizing $\bLambda$ is thus a sensible approach to induce sparsity in $\bOmega$. 
It also comes with many practical advantages described below.

Cholesky factorization based covariance and precision matrix estimation methods also use a similar strategy and penalize the lower triangular matrix $\bL$ to induce sparsity in $\bOmega = \bL\bL\trans$ \citep{dallakyan2020fused}.
However, for undirected graphs, the estimates can vastly differ depending on the ordering of the variables \citep{Xiaoning2020}. 
In contrast, 
our proposed LRD decomposition based approach with a penalty on $\bLambda$ 
is invariant to the ordering of the variables. 

In most existing approaches that directly penalize $\bOmega$, 
frequentist or Bayesian, 
ensuring the positive definiteness of $\bOmega$ 
is also non-trivial, particularly in high dimensional settings.
In frequentist regimes, for example, the graphical lasso \citep{friedman2008sparse} does not guarantee positive definiteness of $\bOmega$ \citep{mazumder2012graphical};
similar behavior has also been observed for neighborhood selection methods 
\citep{meinshausen2006high, peng2009partial} 
and post-hoc treatments are required for positive definiteness. 
{In Bayesian MAP approaches such as} \citet{gan2019bayesian}, 
{special care is needed in designing the optimization algorithm to ensure positive definiteness.}
For many existing Bayesian methods relying on MCMC \citep{ wang2012bayesian, khondker2013bayesian, li2019graphical}, 
constrained update of $\bOmega$ in each step is difficult and expensive.
Applications of these methods thus remain fairly limited to small to moderate dimensional problems.
In contrast, the LRD formulation is always guaranteed to produce a positive definite estimate by construction.

	As also discussed in the Introduction, 
	posterior exploration is extremely challenging 
	in Bayesian approaches that  
	penalize the underlying the dense graphs first and then assign a prior on $\bOmega$ conditional on the graph, 
	often requiring restrictive assumptions on the graph \citep{dawid1993hyper} 
	while still 
	remaining computationally infeasible beyond only a few tens of dimensions \citep{jones2005experiments}. 
	Our proposed approach on the other hand, works directly in the $\bOmega$ space, albeit via the LRD decomposition, 
	avoiding having to define separate complex priors on the graph space 
	thereby also avoiding restrictive assumptions on the graph 
	while also achieving scalability far beyond such approaches via its latent factor representation.

	\subsection{Prior Specification} \label{sec: priors}
	To accommodate high-dimensional data, with $d \gg n$, it is crucial to reduce the effective number of parameters in the $d \times q$ loadings matrix $\bLambda$.  
	A wide variety of sparsity inducing shrinkage priors for $\bLambda$ can be considered for this purpose. 
	In this article, we employ a {two-parameter} generalization of the original Dirichlet-Laplace (DL) prior from \cite{bhattacharya2015dirichlet} that allows more flexible tail behavior. 
	On a $d$-dimensional vector $\btheta$, our DL prior with parameters $a$ and $b$, denoted by $\mbox{DL}(a,b)$, can be specified in the following hierarchical manner 
		\vskip-8ex
	\bse
	\theta_{j} \lvert \bpsi,\bphi,\tau \simind \Normal(0,\psi_{j}\phi_{j}^{2}\tau^{2}),~~\psi_{j} \simiid \Exp(1/2),~~ \bphi\sim \Dir(a,\ldots,a),~~ \tau \sim \Ga(da,b), 
	\ese
		\vskip-3ex \noindent
	where $\theta_{j}$ is the $j\th$ element of $\btheta$, $\bphi$ is a vector of same length as $\btheta$, 
	$\Exp(a)$ is an exponential distribution with mean $1/a$, 
	$\Dir(a_{1}, \dots, a_{d})$ is the $d$-dimensional Dirichlet distribution 
	and $\Ga(a,b)$ is the gamma distribution with mean $a/b$ and variance $a/b^{2}$. 
	The original DL prior is a special case with $b=1/2$.
	We let $\mbox{vec}(\bLambda) \sim \mbox{DL}(a,b)$. 
	
	The column dimension $q$ of $\bLambda$ will almost always be unknown.
	Assigning a prior on $q$ and implementing a reversible jump MCMC \citep{rjmcmc} type algorithm can be inefficient and expensive.
	In this paper, we adopt an empirical Bayes type approach to set $q$ to a large value determined from the data 
	and let the prior shrink the extra columns to zeros, substantially simplifying the computation. 
	The strategy is discussed in greater details later in this section.
	We show in Section \ref{sec: asymptotics} that this approach is sufficient for the recovery of true $\bOmega$ 
	under very mild conditions.

	When $d\gg n$, $d$ distinct $\delta_{j}^{2}$'s also result in over-parametrization.
	To reduce the number of parameters, we assume the $\delta_{j}^{2}$'s to comprise a small number of unique values. 
	To achieve this in a data adaptive way, we use a Dirichlet process (DP) prior \citep{ferguson73} on the $\delta_{j}^{2}$'s as 
		\vskip-8ex
	\be
	\delta_{j}^{2} \lvert G\simiid G,~~~G\lvert \alpha \sim \mathrm{DP}(\alpha,G_{0})\text{ with } G_{0}=\Ga(a_{\delta},b_{\delta}), ~~~\alpha \sim \Ga(a_{\alpha},b_{\alpha}),    \label{eq:dpm}
	\ee
		\vskip-4ex \noindent
	where 
	$\alpha$ is the concentration parameter and $G_{0}$ is the base measure.
	Samples drawn from a DP are almost surely discrete, inducing a clustering of the $\delta_{j}^{2}$'s and thus reducing the dimension. 
	Integrating out $G$, model \eqref{eq:dpm} leads to a recursive Polya urn scheme 
	illustrative of the clustering mechanism 
	while also being convenient for posterior computation \citep{escobar1995bayesian, neal2000markov}. 
	Specifically, we have
		\vskip-8ex
	\bse
	\textstyle \delta_{j+1}^{2}\lvert \alpha, \bdelta_{1:j}^{2} \propto \sum_{\ell=1}^{k_{j}} d_{j,\ell} \mathbbm{1}(\delta_{j+1}^{2}=\delta_{\ell}^{*2})+\alpha G_{0}(\delta_{j+1}^{2}),
	\ese
		\vskip-2ex \noindent
	where 
	$\{\delta_{1}^{*2},\dots,\delta_{k_{j}}^{*2}\}$ denote the unique values among $\bdelta_{1:j}^{2}$ 
	and $d_{j,r}=\sum_{\ell=1}^{j} \mathbbm{1}(c_{\ell}=r)$ denote their multiplicities, 
	$c_{1},\dots,c_{d}$ being the latent variables 
	such that $c_{j}=r$ if $\delta_{j}^{2}$ belongs to the $r\th$ cluster. 
	
	
	\noindent\textbf{Choice of Hyperparameters:} 
	To specify the value of the latent dimension $q$, we adopt a principal component analysis (PCA) based empirical Bayes type approach \citep{bai2008factorest}.
	First, we perform a sparse PCA \citep{irlba} on the data matrix $\by_{1:n}$  and compute the reciprocals of the singular values.
	We set $q$ to be the number of inverse singular values in decreasing order that adds up to $95\%$ of the total sum.
	We set $a=0.5$ and $b=2.0$ in all our simulation studies and real data applications.
	For the prior on the residual variances and the DP concentration parameter, we set $a_{\delta}=b_{\delta}=a_{\alpha}=b_{\alpha}=0.1$.

	\subsection{Posterior Computation} \label{sec: post inference main}
	The latent factor construction discussed in Section \ref{sec: precision factor analysis} leads to a novel, elegant Gibbs sampler 
	that is operationally simple and free of any tuning parameter. 
	There is also no need for additional 
	constraints to ensure positive definiteness 
	which used to be a major setback for MCMC based methods for precision matrix estimation. 
	These features allow the sampler to be applied to dimensions far beyond the reach of the current state-of-the-art. 
	Scalable posterior computation in LRD decomposed covariance matrix models 
	is pretty standard in the literature 
	but remained a major barrier for precision matrix models. 
	The Gibbs sampler described here addresses this significant gap in the literature.
	
	
	Starting with some initial values of $\bLambda,\bDelta$ and other parameters, our sampler iterates between the following steps.
	Other parameters and hyperparameters being implicitly understood in the conditioning, 
	Step 1 defines a transition for $\bu,\bv \vert \by,\bLambda,\bDelta$; 
	Step 2 for $\bLambda \vert \bu, \bv$ (which is identical to $\bLambda \vert \by,\bu,\bv, \bDelta$); 
	Step 3 for $\bDelta \vert \bv$ (which is identical to $\bDelta \vert \by,\bu,\bv, \bLambda$);
	and Steps 4 and 5 are standard updates for the parameters and hyper-parameters of the DL and DP priors, respectively. 
	
	\begin{description}[leftmargin=0pt]
		\item[Step 1]  Generate $\bu_{1},\dots,\bu_{n} \simiid \mn_{q}(\bzero,\bP)$ with $\bP = (\bI_{q}+\bLambda\trans\bDelta^{-1}\bLambda)$ independently from $\by_{1:n}$ and let $\bv_{i} = \by_{i}+\bDelta^{-1}\bLambda \bP^{-1}\bu_{i}$. 
		
		\item[Step 2] We have $\bu_{i}=\sum_{r=1}^{d} \blambda_{r} v_{r,i}+\bvarepsilon_{i}$, 
		where $\blambda_{r} = (\lambda_{r,1},\dots,\lambda_{r,q})$ is the $r\th$ row of $\bLambda$ and $\bv_{i}=(v_{1,i},\dots,v_{d,i})\trans$. 
		Define $\bu_{i}^{(j)}=\bu_{i}-\sum_{r\neq j}\blambda_{r} v_{r,i}$. 
		Then $\bu_{i}^{(j)}=\blambda_{j} v_{j,i}+\bvarepsilon_{i}$. 
		Conditioned on $\bu_{i}^{(j)}$, $\bv_{i}$ {and the associated hyper-parameters}, $\blambda_{j}$'s can be updated sequentially for $j=1,\dots,d$ from the distribution
				\vspace{-4ex}
		\bse
		\blambda_{j} \sim \mn_{q}\{(\bD_{j}^{-1} + \norm{\bv^{(j)}}^{2} \bI_q )^{-1} \bw_{j},  (\bD_{j}^{-1} + \norm{\bv^{(j)}}^{2} \bI_q )^{-1}\},	
		\ese
				\vskip-4ex
		where  $\bD_{j}=\tau^{2}\diag\left( \psi_{j,1}\phi^{2}_{j,1},\dots, \psi_{j,q}\phi^{2}_{j,q} \right)$, $\bv^{(j)}=(v_{j,1},\dots,v_{j,n})\trans$ 
		and $\bw_{j}=\sum_{i=1}^n v_{j,i} \bu_{i}^{(j)}$.

		\item[Step 3] 	
		Sample the $\delta_{j}^{2}$'s through the following steps. 
		\begin{enumerate}[label=({\roman*})]		
			\item\label{step2} Let $d_{r,-j}=\sum_{\ell\neq j}\mathbbm{1}(c_{\ell}=r)$ and $\bv^{(-j)}$ to be the collection of all $\bv^{(\ell)}$'s, $\ell=1,\dots,d$, excluding $\bv^{(j)}$.
			For $j=1,\dots,d$, sample the cluster indicators sequentially from the distribution
						\vspace{-7ex}\\
			\bse
			p( c_{j}=r ) \propto\begin{cases}
				d_{r,-j} \int  \mn (\bv^{(j)}; 0,  \delta_{r}^{*-2} ) \de G_{0}\left(\delta_{r}^{*2}\lvert \bv^{(-j)}\right) \text{ for } r\in\{c_{\ell}\}_{\ell \neq j};\\
				\alpha \int \mn (\bv^{(j)}; 0,  \delta_{r}^{*-2} )\de G_{0} (\delta_{r}^{*2})\text{ for }r\neq c_{\ell} \text{ for all } \ell \neq j.
			\end{cases} 
			\ese
						\vspace{-6ex}\\
			The above integrals are analytically available and involves the density of a multivariate central Student's $t$-distribution for $G_{0} = \Ga(a_{\delta},b_{\delta})$.
			\item Let the unique values in $\bc_{1:d}$ be \{$1,\dots,k\}$. 
			For $r=1,\dots,k$, 
			set $d_{r}=\sum_{j}\mathbbm{1}(c_{j}=r)$ and $\bV_{r}=\sum_{j:c_{j}=r} \norm{\bv^{(j)}}^{2}$, and 
			independently sample $\delta_{r}^{*2} \sim \Ga\left(a_\delta + {n d_{r}}/{2}, b_\delta + {\bV_{r}}/{2}  \right)$.
			\item Set $\delta_{j}^{2}= \delta_{c_{j}}^{*2}$.
		\end{enumerate}
		
		\item[Step 4] Sample the hyper-parameters in the priors on $\bLambda$ through the following steps. 
		\begin{enumerate}[label=({\roman*})]
			\item For $j=1, \dots, d$ and $h=1,\dots q$ sample $\widetilde{\psi}_{j,h}$ independently from an inverse-Gaussian distribution $\mbox{iG}\left(\tau{\phi_{j,h}}/{\abs{\lambda_{j,h}}},1\right)$ and set $\psi_{j,h}=1/\widetilde{\psi}_{j,h}$.
			\item Sample the full conditional posterior distribution of $\tau$  from a generalized inverse Gaussian  $\mbox{giG}\left\{dq(1-a),2b, 2\sum_{j,h} {\abs{\lambda_{j,h}}}/{\phi_{j,h} }\right\}$ distribution.
			\item Draw $T_{j,h}$ independently with $T_{j,h} \sim \mbox{giG}(a - 1, 1, 2\abs{\lambda_{j,h}} )$ and set $\phi_{j,h}  = T_{j,h}/T$ with $T=\sum_{j,h} T_{j,h}$.
		\end{enumerate}

		\item[Step 5] Following \cite{west1992hyperparameter}, 
		first generate $\varphi\sim \Beta(\alpha+1,d)$, 
		evaluate $\pi/(1-\pi)=(a_{\alpha}+k-1)/\left\{d(b_{\alpha}-\log\varphi)\right\}$ and then generate\\
				\vspace{-7ex}
		\bse
		\alpha\lvert \varphi, k \sim \begin{cases}
			\Ga(\alpha+k,b_{\alpha}-\log\varphi)\text{ with probability }\pi, \\
			\Ga (\alpha+k-1,b_{\alpha}-\log\varphi )\text{ with probability }1-\pi.
		\end{cases}
		\ese
				\vspace{-7ex}	
		
	\end{description}
	

	\begin{remark}
		Note that the main strategies underlying the algorithm above are not specific to the DL prior considered here. 
		We are free to choose any other shrinkage prior that admits a conditionally Gaussian hierarchical representation 
		for the entries of $\bLambda$ and modify Steps 2 and 4 accordingly. 
		This is still a very large class of priors \citep{polson2010shrink}, 
		including, e.g., 
		horseshoe \citep{carvalho2009handling}, 
		multiplicative gamma \citep{bhattacharya2011sparse}, 
		etc. 
		For non-conjugate priors, Steps 3 and 4 can also be modified with appropriate Metropolis-Hastings schemes. 
		The other steps remain the same, making it a very broadly adaptable algorithm. 
		An MCMC scheme for generic priors is outlined in Section \ref{sec: sm post computation} of the supplementary materials. 
	\end{remark}
	\begin{remark}
		In the above sampler, the conditional posterior covariance matrix of $\blambda_{j}$ is diagonal which allows us to update it with linear complexity. 
		Thus, in each MCMC iteration, we only need a single small dimensional $q\times q$ order matrix factorization operation in Step 1 to simulate $\bu_{1:n}$. 
		While high-dimensional matrix factorization operations are usally numerically very expensive, we are able to completely avoid that, facilitating substantial scalability. 
	\end{remark}
	\begin{remark}
		All but Steps 1 and 3 can be divided into parallel operations in a straightforward manner. 
		Additionally, in recent versions of many popular statistical software, including \texttt{R}, 
		matrix operations are inherently parallelized and hence Step 1 is also highly scalable in any decent computing system. 
	\end{remark}

	
	We implemented the Gibbs sampler in \texttt{C++} and ported to \texttt{R} using
	the \texttt{Rcpp} package \citep{rcpp}.
	In each case considered in this article, synthetic or real, we ran $5,500$ iterations
	which takes approximately {37 minutes} on {a system with an i9-10900K CPU and 64GB memory} for a $d=1,000$ dimensional problem with sample size $n=1,000$.
	The initial $1,250$ samples were discarded as burn-in and  
	the remaining samples were thinned by an interval of $5$. 
	In all our experiments, 
	convergence was swift and mixing was excellent. 
	A comparison of the runtimes of our method and a few other existing methods is presented in Figure \ref{fig:time} below.

	\subsection{Graph Selection} \label{sec: graph}
	{As discussed in Section \ref{sec: sm sparsity}, the off-diagonals of $\bOmega$ are penalized by inducing shrinkage on $\bLambda$.
		The theoretical results in Section \ref{sec: asymptotics} and
		numerical experiments in Section \ref{sec: sim studies} show that,
		with our carefully constructed data adaptive shrinkage priors on $\bLambda$, 
		the inferred sparsity patterns in $\bLambda$ are such that 
		a sparse $\bOmega$ is also accurately recovered 
		(see Figures \ref{fig:quantiles} and \ref{fig:precision_PF} for estimates of sparse precision matrices obtained by our method).
		Exact zero estimates are, however, not obtained even for the insignificant off-diagonal elements of $\bOmega$ {for finite samples}.
		This is an artifact of continuous shrinkage priors, since the probabilities of exact zeroes are {\textit{almost surely} null for finite samples} although the posterior probabilities of arbitrary sets around zeroes are very high.}
	
	
	We address the issue of non-zero edge selection through a novel multiple hypothesis testing based approach. 
	For $i=1,\dots,d,~j=i+1,\dots,d$ and some $\epsilon>0$, 
	we consider testing 
		\vskip-12ex
	\be
	H_{0,i,j}: \abs{\rho_{i,j}}\leq \epsilon ~~~ \text{ versus } ~~~ H_{1,i,j}:\abs{\rho_{i,j}}> \epsilon,
	\label{eq:hypothesis}
	\ee
		\vskip-3.5ex \noindent
	where $\rho_{i,j}$ is the $(i,j)\th$ element of  $\diag(\bOmega)^{-\half} ~\bOmega~ \diag(\bOmega)^{-\half}$, the partial correlation matrix derived from $\bOmega$. 
	Here we follow \citet[Chapter 4, pp. 148]{berger_book} in replacing the point nulls $H_{0,i,j}: \rho_{i,j} = 0$ by reasonable interval nulls $H_{0,i,j}: \abs{\rho_{i,j}}\leq \epsilon$. 
	If $H_{0,i,j}$ is rejected in favor of $H_{1,i,j}$, we conclude that there is an edge between nodes $i$ and $j$.
	
	We utilize posterior uncertainty to resolve these testing problems. 
	Specifically, we define $d_{i,j}=\mathbbm{1}\left\{\Pi(H_{1,i,j} \vert \by_{1:n}) >\beta \right\}$ as the decision rule 
	which controls the posterior FDR 
	defined as 
		\vskip-10ex
	\bse
	\FDR_{\by} = \frac{\sum_{i,j} d_{i,j} \Pi(H_{0,i,j} \vert \by_{1:n})}{\max (\sum_{i,j} d_{i,j},1)},
	\ese
		\vskip-4ex\noindent
	at the level $1-\beta$. 
	Importantly, the decision rule also incurs the lowest false non-discovery rate \citep{muller04}. 
	For a fixed $\beta$, the $\FDR_{\by}$ depends on the choice of $\epsilon$. 
	To obtain the optimal $\epsilon$, we compute the $\FDR_{\by}$'s on a grid of $\epsilon$ values in $(0,1)$ 
	and then set $\epsilon = \inf_{\epsilon'}\mathrm{FDR}_{\by}(\epsilon')\leq 1-\beta$. 
	This way, we have the FDR, that is, a quantified statistical uncertainty associated with the estimated graph. 
	In all simulation experiments and real data applications in this paper, we control $\FDR_{\by}$ at the $0.10$ level of significance. 
	
	\begin{remark}
		Although this FDR control procedure is widely applicable, efficient posterior exploration is crucial for this step and hence may not be achievable or scalable to high dimensional problems when adapted to previously existing Bayesian approaches.
	\end{remark}

	\subsection{Posterior Concentration} \label{sec: asymptotics}
	
	\noindent{\bf Preliminaries and Notation: } 
	We let $\fnorm{\bA}$  and $\specnorm{\bA}$ denote the Frobenius and spectral norms of a matrix $\bA$, respectively; 
	$s^{2}_{\min}{(\bA)}$ be the smallest singular value of $\bA\trans \bA$; 
	and $\lambda_{1}(\bA),\dots,\lambda_{d}(\bA)$ be the eigenvalues of $\bA$ in decreasing order when $\bA$ is a $d$-dimensional diagonalizable matrix.
	Also, 
	$\an=o(b_{n})$ and $\an=O(b_{n})$ imply that $\lim \abs{{\an}/{b_{n}}}=0$ and $\limsup \abs{{\an}/{b_{n}}} <\infty$, respectively.
	Throughout, $C, C'$ and $\wt{C}$ are used to denote positive constants whose values might change from one line to the next but are independent from everything else.
	A $d$-dimensional vector $\btheta$ is said to be $s$-sparse if only $s$ among the $d$ elements of $\btheta$ are non-zero.
	We denote the set of all $s$-sparse vectors in $\rR^{d}$ by $\lnt{s}{d}$.
	
	We allow the model parameters to increase in dimensions with sample size $n$, indicated by  associating them with the suffix $n$.
	We let $\pin(\cdot)$ denote the prior and $\pin(\cdot\lvert \by_{1:n})$ the corresponding posterior given data $\by_{1:n}$, respectively.
	
	\noindent{\bf Assumptions on the Data Generating Process: }
	We assume that $\by_{i}\simiid \mn_{\dn}(\bzero, \bOmegann^{-1}), i=1,\dots,n$.
	{In Section \ref{subsec:lrd_sparse} we discussed how an LRD decomposition of a sparse $\bOmegann$ can be constructed.
		Hence, we assume that} 
	$\bOmegann$ admits the factor representation $\bOmegann=\bLambdann\bLambdann\trans+\deltann^{2}\bI_{\dn}$, 
	where $\bLambdann$ is a $\dn\times\qnn$ order sparse matrix and $\deltann^{2}>0$ is a scalar.
	To simplify the theoretical analysis, we deviate here slightly from the proposed model and assume 
	that the $\delta^{2}$'s all come from a single cluster with the common value $\deltann^{2}$.
	We show that the precision factor model with an appropriate DL prior can recover $\bOmegann$ in ultra high-dimensional settings.
	As discussed in Section \ref{sec: priors}, we fix $\qn$ to a liberal large value and let the prior shrink the extra columns to zeros.
	We show that with appropriate sparsity conditions, the posterior of the precision factor model with the DL prior then concentrates around $\bOmegann$, 
	even when the data dimension $\dn$ increases in exponential order with $n$.
	The requisite conditions 
	are stated below.
	\begin{enumerate}[label=(C\arabic*)]					
		\item \label{ass1} 
		Let $\{\qnn\}_{n=1}^{\infty}$ and $\{\sn\}_{n=1}^{\infty}$ be increasing sequences of positive integers such that $\sn^{2}=O(\log\dn)$, $\qnn=O\{\sn\log(\dn\qnn)\}$ {and $\sn\qnn  \log(\dn\qnn)=o(n)$}.
		\item \label{ass3} 
		$\bLambdann$ is a $\dn\times\qnn$ order full rank matrix such that each column of $\bLambdann$ belongs to $\lnt{\sn}{\dn}$, $\liminf \smin{\bLambdann}>0$ and $\specnorm{\bLambdann}=O(1)$.
		\item \label{ass2} 
		The scalar $\deltann^{2}$ lies in some compact set $[\delta_{\min}^{2},\delta_{\max}^{2}]$. 
	\end{enumerate}
	

	\noindent{\bf Specifics of the Postulated Precision Factor Model: } 
	Let $\{\qn\}_{n=1}^{\infty}$ be an increasing sequence of positive integers such that $\qnn\leq\qn=o(n)$ 
	and $\sn\qnn  \log(\dn\qn)=o(n)$. 
	We assume that 
	$\by_{1:n}\simiid \mn_{\dn}(\bzero, \bOmegan^{-1})$, where $\bOmegan=\bLambdan\bLambdan\trans+ \deltan^{2}\bI_{\dn}$ and $\bLambdan$ is a $\dn\times \qn$ matrix.
	We consider a $\mathrm{DL}(\an,\bnn)$ prior on $\vect(\bLambdan)$ with $\an={1}/{\dn\qn}$ and $\bnn=\log^{3/2}(\dn\qn)$. 
	We assume a gamma prior on  $\deltan^{2}\sim\Ga(a_{\delta},b_{\delta})$ independent of $\bLambdan$ and truncated to the compact set $[\delta_{\min}^{2},\delta_{\max}^{2}]$.
	\vskip 10pt

	Condition \ref{ass1} specifies the requisite sparsity conditions. 
	It also imposes a condition on $\dn$.
	Specifically, it can be seen that $\dn$ can be of exponential order of $n$, 
	allowing the recovery of massive precision matrices based on relatively small sample sizes.	
	Note that, subject to appropriate orthogonal transformation, 
	marginally $\var(y_{j})=1/\left\{\lambda_{j}\left(\bLambdann\trans\bLambdann\right)+\deltann^{2} \right\}$.
	Conditions \ref{ass3} and \ref{ass2} ensure that these variances lie in a compact set.
	The prior specification provides a set of sufficient conditions on the class of proposed models. 
	The true number of latent factors $\qnn$ is also assumed smaller than the number of latent factors $\qn$ in the postulated models.
	
	We let $\Pnn$ denote the class of precision matrices satisfying \ref{ass1}-\ref{ass2} 
	and $\Pn$ denote the class of positive definite matrices parametrized by $(\qn,\deltan^{2},\bLambdan)$ in our precision factor model.
	Notably, $\Pnn$ can also be parametrized by $(\qnn,\deltann^{2},\bLambdann)$.
	However, since $\qnn\leq \qn$, for a fixed $\qn$, $\Pn$ may be overparametrized for $\Pnn$. 
	In Theorem \ref{thm: posterior concentration}, we show that with increasing sample size,
	the posterior distribution supported on $\Pn$ still concentrates around the true $\bOmegann$ from $\Pnn$. 
	Using general results from \citet{ghosal_book}, 
	we establish a convergence rate in terms of the operator norm. 
	A rigorous proof is detailed in Section \ref{sec: sm proofs} in the supplementary materials. 
	
	\begin{theorem}
		\label{thm: posterior concentration}
		For any $\bOmegann\in \Pnn$, 
		$\epsilon_{n} = (\sn\qnn)^{4}\log(\dn\qn)\sqrt{\qnn\sn\log(\dn\qn)/n}$ 
		and any sequence $M_{n}\to\infty$, 
				\vspace{-10ex}\\
		\bse
		\lim_{n\to \infty} \eE_{\bOmegann} \pin \left(\specnorm{\bOmegan-\bOmegann}>M_{n}\epsilon_{n}\lvert \by_{1:n} \right) = 0.
		\ese
	\end{theorem}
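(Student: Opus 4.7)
The plan is to apply the general posterior contraction framework of \citet{ghosal_book}, whose hypotheses reduce to three ingredients: (i) a prior concentration bound around $\bOmegann$ in KL-type neighborhoods, (ii) a sieve $\Pn$ with controlled metric entropy and exponentially small prior mass on its complement, and (iii) exponential tests separating $\bOmegann$ from precision matrices far from it in spectral norm. Throughout, the standing assumptions \ref{ass1}--\ref{ass2} together with $\deltan^{2}\in[\delmin^{2},\delmax^{2}]$ guarantee that both $\bOmegan$ and $\bOmegann$ have eigenvalues in a compact subset of $(0,\infty)$, so Hellinger distance, Frobenius distance, and operator distance between the Gaussians $\mn_{\dn}(\bzero,\bOmegan^{-1})$ and $\mn_{\dn}(\bzero,\bOmegann^{-1})$ will be mutually equivalent up to constants.

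For the prior concentration step, since $\qnn \leq \qn$ I would embed $\bLambdann$ into a $\dn\times\qn$ matrix $\bLambdann^{*}$ by padding with zero columns, noting that the zero padding is compatible with the DL prior's anti-sparse support. A small-ball calculation for $\mathrm{DL}(\an,\bnn)$ with $\an=1/(\dn\qn)$ along the lines of Theorem~5.1 of \citet{pati2014} then gives $\pin\{\fnorm{\bLambdan-\bLambdann^{*}}\le \tilde\epsilon_{n}\}\ge\exp(-Cn\epsilon_{n}^{2})$ for an appropriately chosen $\tilde\epsilon_{n}$, combined with the $\Ga(a_{\delta},b_{\delta})$ prior on $\deltan^{2}$ contributing only a polynomial factor. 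The algebraic identity
\vskip-3ex
\bse
\bOmegan-\bOmegann = (\bLambdan-\bLambdann^{*})\bLambdan\trans + \bLambdann^{*}(\bLambdan-\bLambdann^{*})\trans + (\deltan^{2}-\deltann^{2})\bI_{\dn}
\ese
\vskip-1ex
\noindent
together with $\specnorm{\bLambdann}=O(1)$ from \ref{ass3} converts a Frobenius bound on the loadings into $\fnorm{\bOmegan-\bOmegann}\lesssim (\sn\qnn)^{2}\tilde\epsilon_{n}$, and the standard Gaussian bound $\kl{\p}{\ptheta}\vee\klv{\p}{\ptheta} \lesssim n\fnorm{\bOmegan-\bOmegann}^{2}$ delivers the required prior mass $\pin\{\kl{\p}{\ptheta}\le n\epsilon_{n}^{2},\,\klv{\p}{\ptheta}\le n\epsilon_{n}^{2}\}\gtrsim \exp(-Cn\epsilon_{n}^{2})$.

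For the sieve I would take $\Pn = \{(\bLambdan,\deltan^{2}): \text{each column of }\bLambdan\text{ is effectively }K_{n}\sn\text{-sparse},\ \fnorm{\bLambdan}\le B_{n},\ \deltan^{2}\in[\delmin^{2},\delmax^{2}]\}$, with $K_{n}$ a slowly growing factor and $B_{n}=O(\sqrt{\qnn\sn\log(\dn\qn)})$. The metric entropy $\log N(\epsilon_{n},\Pn,\fnorm{\cdot})$ is bounded by a standard sparse-matrix covering argument of order $\qn\sn\log(\dn\qn)$, matching $n\epsilon_{n}^{2}$ by the assumption $\sn\qnn\log(\dn\qn)=o(n)$ from \ref{ass1}; and the exponential tails of the DL prior, which are controlled by $\bnn=\log^{3/2}(\dn\qn)$, give $\pin(\Pn^{c})\le\exp(-(C+4)n\epsilon_{n}^{2})$. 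The requisite exponential tests separating $\bOmegann$ from $\{\bOmegan:\specnorm{\bOmegan-\bOmegann}>M_{n}\epsilon_{n}\}\cap\Pn$ are built from the usual likelihood-ratio test for Gaussians, whose error decays exponentially in $nh^{2}$ for Hellinger distance $h$, which is equivalent to $\specnorm{\bOmegan-\bOmegann}$ up to multiplicative constants under \ref{ass3}--\ref{ass2}.

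The principal obstacle is the prior concentration step: the factorization $\bOmegan=\bLambdan\bLambdan\trans+\deltan^{2}\bI_{\dn}$ is non-identifiable (invariant under right-multiplication of $\bLambdan$ by any $q\times q$ orthogonal matrix) and overparametrized ($\qn\geq\qnn$), so considerable care is needed to pick a representative $\bLambdann^{*}$ that is both compatible with the DL prior's coordinate-wise shrinkage and still approximable at the desired rate. This is also where the unusual fourth-order inflation $(\sn\qnn)^{4}$ in the statement of $\epsilon_{n}$ appears, arising from squaring the loss passed through the quadratic map $\bLambdan\mapsto\bLambdan\bLambdan\trans$ twice --- once to convert loading Frobenius error to precision Frobenius error, and once again in the KL-to-squared-Hellinger translation. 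Once prior concentration is established, the passage from a Hellinger contraction rate to the claimed spectral-norm contraction rate is immediate via $\specnorm{\bOmegan-\bOmegann}\le \fnorm{\bOmegan-\bOmegann}$ and the bounded eigenvalue regime.
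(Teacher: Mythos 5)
Your overall skeleton (Ghosal--van der Vaart: KL prior mass, a sieve with small prior mass on its complement, exponential tests) matches the paper's, and your prior-concentration step --- padding $\bLambdann$ with zero columns, a DL small-ball bound on the signal and shrinkage coordinates separately, and passing the Frobenius error through the quadratic map and the KL bound --- is essentially the paper's Lemma on KL support. The genuine gap is in your testing step. You propose to use the generic likelihood-ratio/Hellinger test and assert that the Hellinger distance between $\mn_{\dn}(\bzero,\bOmegan^{-1})$ and $\mn_{\dn}(\bzero,\bOmegann^{-1})$ is ``equivalent to $\specnorm{\bOmegan-\bOmegann}$ up to multiplicative constants'' under the eigenvalue bounds. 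That equivalence fails when the dimension grows: the Hellinger distance saturates at a universal constant, while the spectral (and a fortiori Frobenius) separation of alternatives in the $j$-th annulus of the peeling argument is $j\epsilon_{n}\zeta_{n}$ and must yield a type-II error bound of order $\exp(-Knj^{2}\epsilon_{n}^{2})$ \emph{uniformly in $j$}. A bound that is exponential in $n h^{2}$ caps out at $\exp(-cn)$ and cannot deliver the required $j$-dependence, and locally the correct comparison is $h\asymp\fnorm{\cdot}$, not $\specnorm{\cdot}$, so even in the local regime you would be proving contraction in Frobenius norm via an entropy count in Frobenius norm --- a different (and harder) bookkeeping than what the theorem asserts. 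The paper sidesteps all of this by constructing a bespoke test from the latent-factor representation: it simulates $\bu_{1:n}$, forms $\bv_{i}=\by_{i}+\bDeltann^{-1}\bLambdann\bP_{0n}^{-1}\bu_{i}$, and thresholds $\specnorm{\bLambdann\trans(\tfrac{1}{n}\bV\trans\bV-\bDeltann^{-1})\bLambdann}$, controlling both error probabilities by sub-Gaussian concentration of sample covariance matrices; no metric entropy calculation is needed, and the sieve is simply a spectral-norm ball $\specnorm{\bOmegan}<C''(\qnn\sn)^{4}\log(\dn\qn)$.

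A secondary inaccuracy: the factor $(\sn\qnn)^{4}\log(\dn\qn)$ in $\epsilon_{n}$ does not arise from ``squaring the loss twice'' in the prior-concentration step (the KL small-ball is run at the unscaled rate $n\tau_{n}^{2}=\sn\qnn\log(\dn\qn)$). It enters through the test's type-II error: on the sieve, the discriminating statistic sees the separation $\specnorm{\bOmegan-\bOmegann}$ only after division by $\specnorm{\bDeltan^{-1}+\bnablan}\specnorm{\bLambdann}^{2}/\smin{\bLambdann\trans\bLambdann}$, which is bounded by the sieve radius $\zeta_{n}=C''(\qnn\sn)^{4}\log(\dn\qn)$; the final rate is therefore $\epsilon_{n}=\zeta_{n}\tau_{n}$. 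Your sieve built from effective column sparsity and a Frobenius bound could in principle be made to work, but you would then owe the entropy computation and a careful definition of effective sparsity under the everywhere-positive DL prior, neither of which the paper needs.
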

		\vspace{-4ex}
	
	{In Section \ref{subsec:lrd_sparse} we discussed a constructive way to find a sparse LRD decomposition of arbitrary sparse $\bOmegann$ where the column dimension $\qnn$ of $\bLambdann$ need not exceed the number of non-zero off-diagonals of $\bOmegann$ or edges in the conditional dependence graph.
		Theorem \ref{thm: posterior concentration} implies that the precision matrix is recoverable if \ref{ass1} holds among others.
		Note that \ref{ass1} implies $\qnn=o(n)$.
		Hence, $\bOmegann$ can be learned from the data using the precision factor model as long as the number of edges is bounded by the sample size.
		Such assumptions are required to recover massive dimensional precision matrices from relatively smaller amount of data \citep{meinshausen2006high,KsheeraSagar2021precision}.
	} 
	
	\begin{remark}
		We derive a minimax lower bound of $\sqrt{\sn\log\dn/n}$ for the precision matrix estimation problem in Theorem \ref{thm: sm minimax} in the supplementary materials when $\qnn\leq\qn=O(1)$ {which is a special case of \ref{ass1}}.
		The rate in Theorem \ref{thm: posterior concentration} is ${\sn^{4}\log\dn}$ times the lower bound. 
		If we further let $\dn=O(n^{r})$ for some fixed positive integer $r$, 
		the rate attains the minimax lower bound up to a $(\log n)^{3}$ term. 
	\end{remark}

	\begin{remark}
		An interesting implication of our theoretical results is the robustness with respect to the choice of the latent dimension $q$ under very mild conditions.
		For the class of our postulated precision factor models, we err on the side of overestimating $q$ and let the DL prior shrink the extra factors to zero.
		The theoretical results imply that this strategy is sufficient to recover $\bOmegann$ efficiently.
		The precise recovery rate, however, naturally depends on the choice of $\qn$, 
		a key parameter that distinguishes the class of postulated models from the class of true models \citep{shalizi2009}. 
		The closer $\qn$ is to $\qnn$, the smaller the space to search for the truth, and the better the rate. 
	\end{remark}

\section{Simulation Studies} \label{sec: sim studies}
In this section, we discuss the results of some synthetic numerical experiments. 
We evaluate the performance of estimating the precision matrix $\bOmega$ itself as well as the underlying graph. 
We simulate data from three different cases - 
(i) a Gaussian autoregressive (AR) process of order 2, 
(ii) a multivariate Gaussian distribution with banded precision matrix, and 
(iii) a multivariate Gaussian distribution with randomly generated arbitrarily structured sparse (RSM) precision matrix. 
In all these cases, we take the mean vector to be zero. 
For plotting purposes in the RSM case, we assume two nodes to have an edge between them
if the absolute value of the associated partial correlation exceeds $0.1$ and refer to this as the `true' graph. 
We perform experiments for dimensions $d=$ 50, 100, 200 and 1,000. For $d\leq 200$ we take $n=100$ and for $d=1,000$ we take $n=1,000$. 
We plot the true precision matrices and the associated true graphs for $d=50$ in Figure \ref{fig:quantiles} in the supplementary materials 
and in Figure \ref{fig: truegraphs} here in the main paper.
{For $d \leq 200$ and $d=1,000$ we consider 50 and 20 independent replications for each scenario, respectively.}


We apply our precision factor (PF) model to recover the precision matrices, using the posterior mean as our Bayesian point estimate.
{We compare it with the methods Bayesian graphical model under shrinkage \citep[Bagus,][]{gan2019bayesian}, 
	the graphical Lasso \citep[Glasso,][]{friedman2008sparse} and the neighborhood selection  by \citet[M\&B,][]{meinshausen2006high}.} 
{We do not consider any previously existing Bayesian posterior sampling based methods here as they do not scale well beyond only a few tens of dimensions.} 
We assess the comparative efficacy of these methods using qualitative graphical summaries as well as quantitative performance measures. 
For the proposed PF method, the graphs are estimated following the FDR control procedure outlined in Section \ref{sec: graph}. 
For Bagus, we follow the prescription in Section 4.3 of \citet{gan2019bayesian}. 
{For $d=1,000$ in the banded case, the codes for Bagus did not work and we could not consider it as a competitor in that particular setup.}
For the other methods, the non-zero entries in the estimated precision matrix are considered as edges.

We compute the Frobenius norm between the true and estimated precision correlation matrices. 
To assess the accuracy of the derived graphs, we compute 
specificity $=\frac{TN}{TN+FP}$ and 
sensitivity $=\frac{TP}{TP+FN}$, 
where $TP$ (true positives), $FP$ (false positives), $TN$ (true negatives) and $FN$ (false negatives) 
are based on the detection of the edges in the estimated graphs and comparing them with the corresponding true graphs. 
In Figure \ref{fig:simstudy}, the average values of these criteria across the replications are plotted for the competing methods. 
We also compare the execution times of the different methods.

From Figure \ref{fig:simstudy}, we see that, 
for banded precision matrices, 
the Frobenius norms obtained by our proposed PF method are very similar to those produced by the competitors. 
In the case of the arbitrarily structured precision matrix, however, the PF method performs much better, especially in higher dimensions. 
Also, the PF method yields much higher sensitivity in almost all the scenarios {at the expense of} slightly smaller specificity in some cases. 
This is expected since we are allowing a small margin of error by controlling the FDR at $0.10$.
Notably, the PF method is much more powerful in detecting the true edges. 
The other methods seem to be quite conservative in that regard, as seen in Figure \ref{fig:roc}.
{This explains the high Frobenius norm produced by the PF method in the AR(2) case with $d=1,000$. The true precision matrix being extremely sparse with only two off-diagonals being non-zero, the conservative competitors are performing better in recovering the precision matrix in this particular setup.}
We see in Figure \ref{fig:time} that the PF method is slower than M\&B and Glasso but is much faster than Bagus. 
{Note however that, unlike the competitors, we are exploring the full posterior, not just providing a point estimate.} 
On a related important note, since we are able to sample from the full posterior, unlike the other methods, 
we are able to provide a natural way of quantifying posterior uncertainty. 
Specifically, posterior credible intervals for each element of the precision matrix can be obtained from the MCMC samples. 
For $d=50$ and sample size $n=100$, we plot the lower $2.5\%$ and upper $97.5\%$ quantiles of the entry-wise partial correlations along 
with the simulation truths in Figure \ref{fig:quantiles} in the supplementary materials. 
The plots indicate that the simulation truths are well-within the confidence bounds.

Circos plots \citep{gu2014circlize} of the true and estimated graphs are shown in Figure \ref{fig: graphs}. 
For the AR(2) case, none of the methods are doing well in recovering the graph. 
For the other cases, the proposed PF method outperforms the competitors.
{Figures \ref{fig: sm graphs} and \ref{fig:prec_estimates} in the supplementary materials provide alternate graphical representations 
	useful in visually discerning the superior performance of the PF method.}

\begin{figure}[!ht]
	\centering
	\begin{subfigure}[b]{.99\linewidth}
		\includegraphics[width=\linewidth]{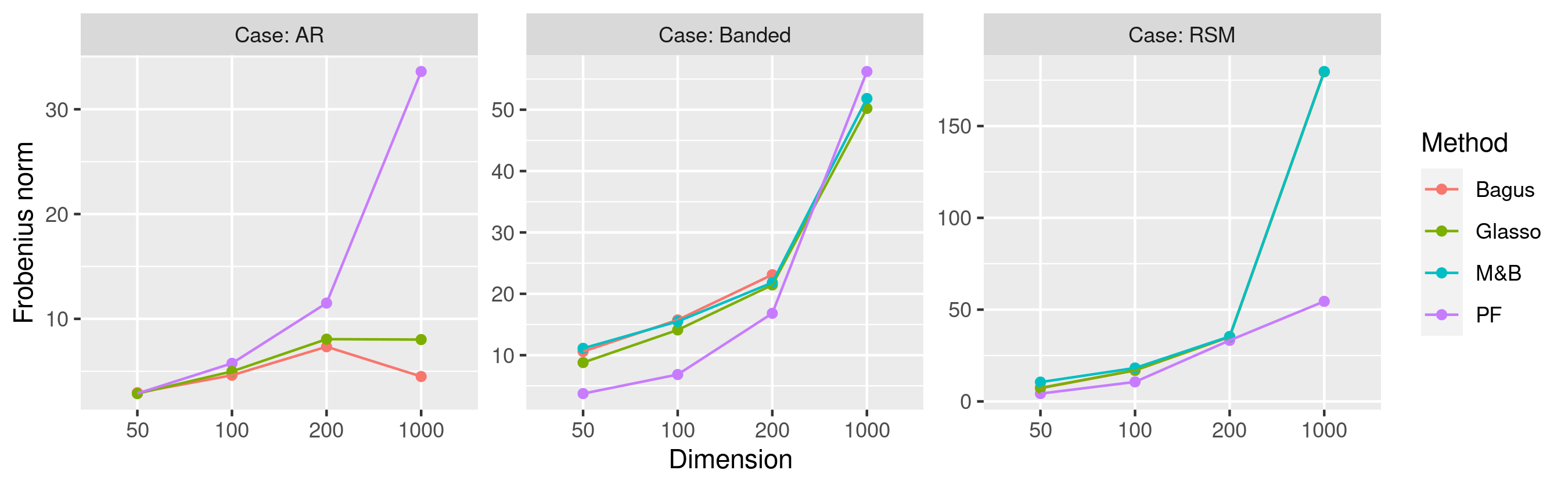}
		\caption{Frobenius norm}
		\label{fig:fnorm}
	\end{subfigure}
	\vskip 10pt
	\begin{subfigure}[b]{.99\linewidth}
		\includegraphics[width=\linewidth]{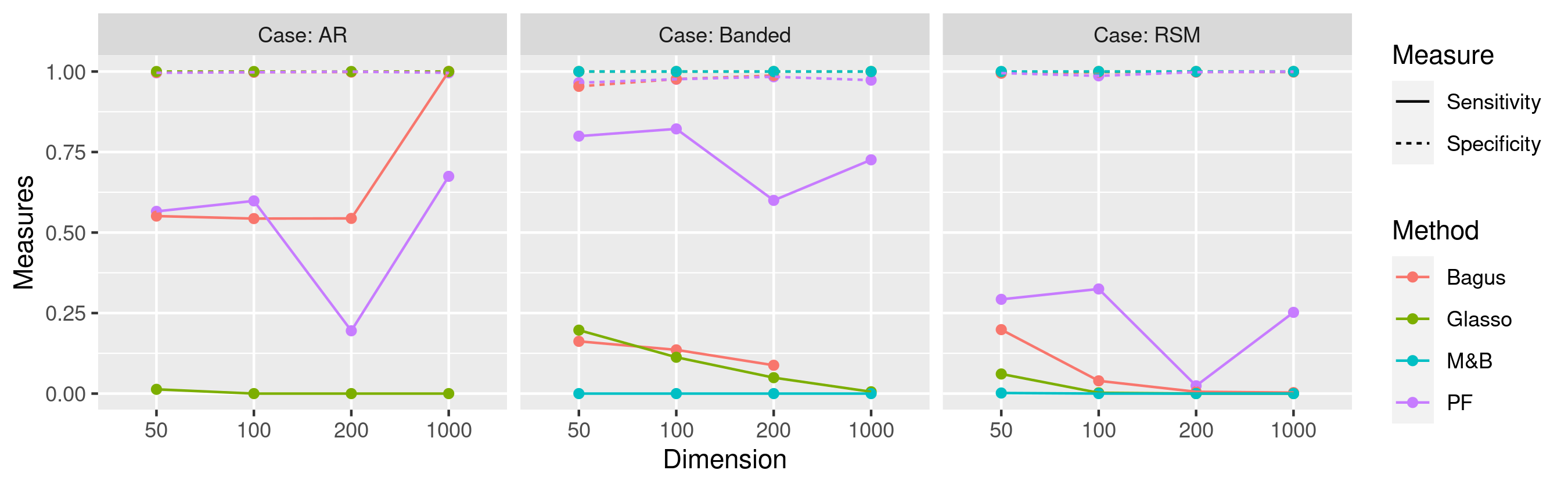}
		\caption{Sensitivity and specificity}
		\label{fig:roc}
	\end{subfigure}
	\vskip 10pt
	\begin{subfigure}[b]{.99\linewidth}
		\includegraphics[width=\linewidth]{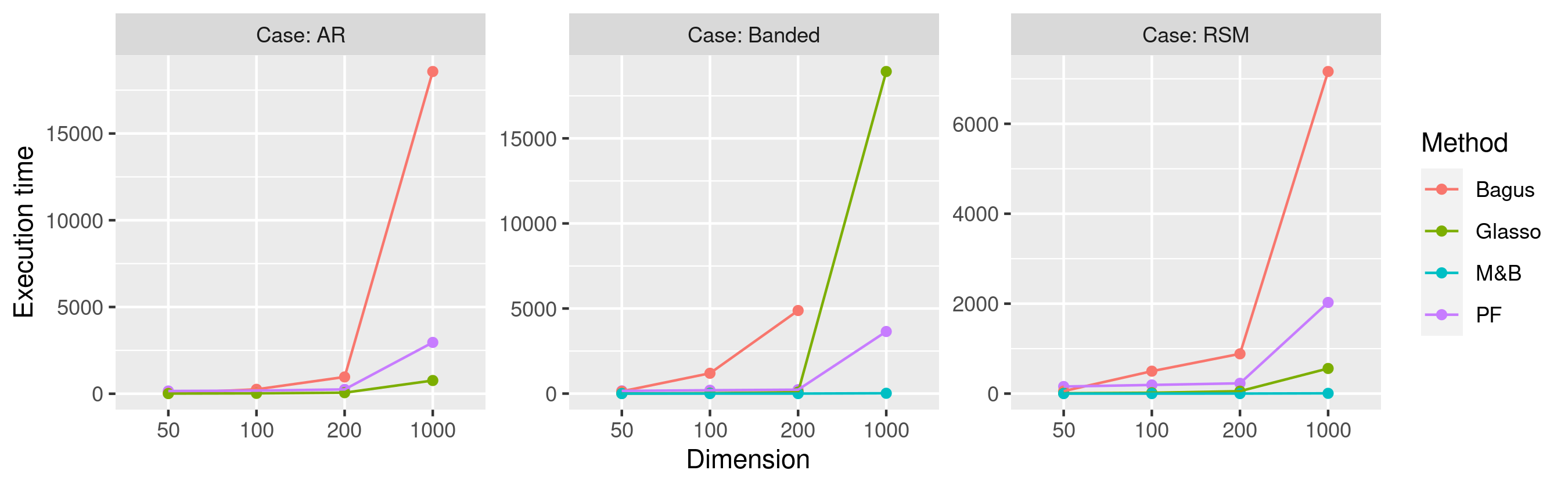}
		\caption{Execution time}
		\label{fig:time}
	\end{subfigure}
	\caption{Results of simulation experiments: Panel (a) shows the Frobenius norms between the true and estimated partial correlation matrices; 
		panel (b) shows the sensitivity and specificity; 
		and panel (c) shows the execution times in seconds.}
	\label{fig:simstudy}
\end{figure}

\begin{figure}[!ht]
	\begin{subfigure}[b]{.99\linewidth}
		\centering
		\includegraphics[trim={0.5cm 0.5cm 0.5cm 0.5cm}, clip, width=0.325\linewidth]{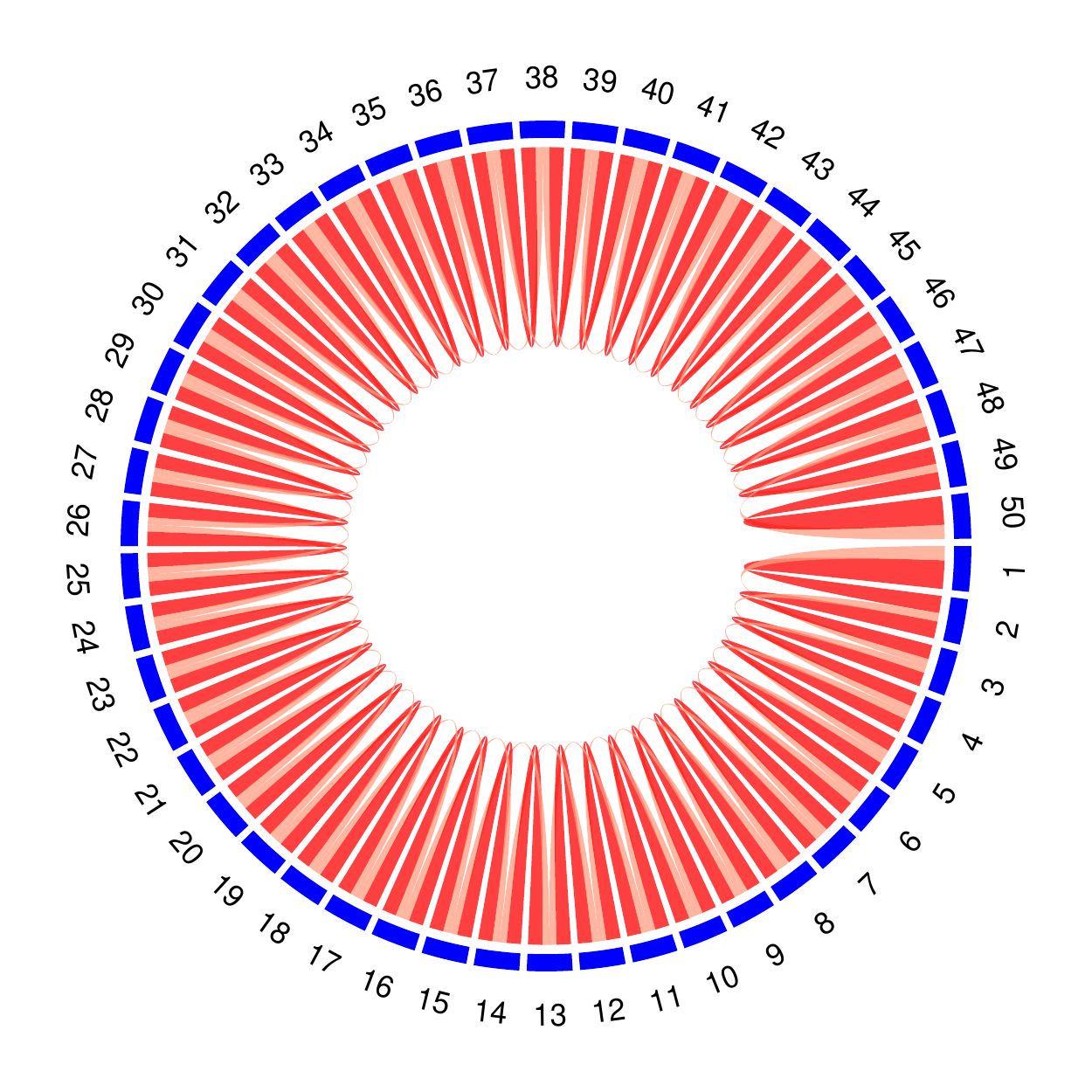}
		\includegraphics[trim={0.5cm 0.5cm 0.5cm 0.5cm}, clip, width=0.325\linewidth]{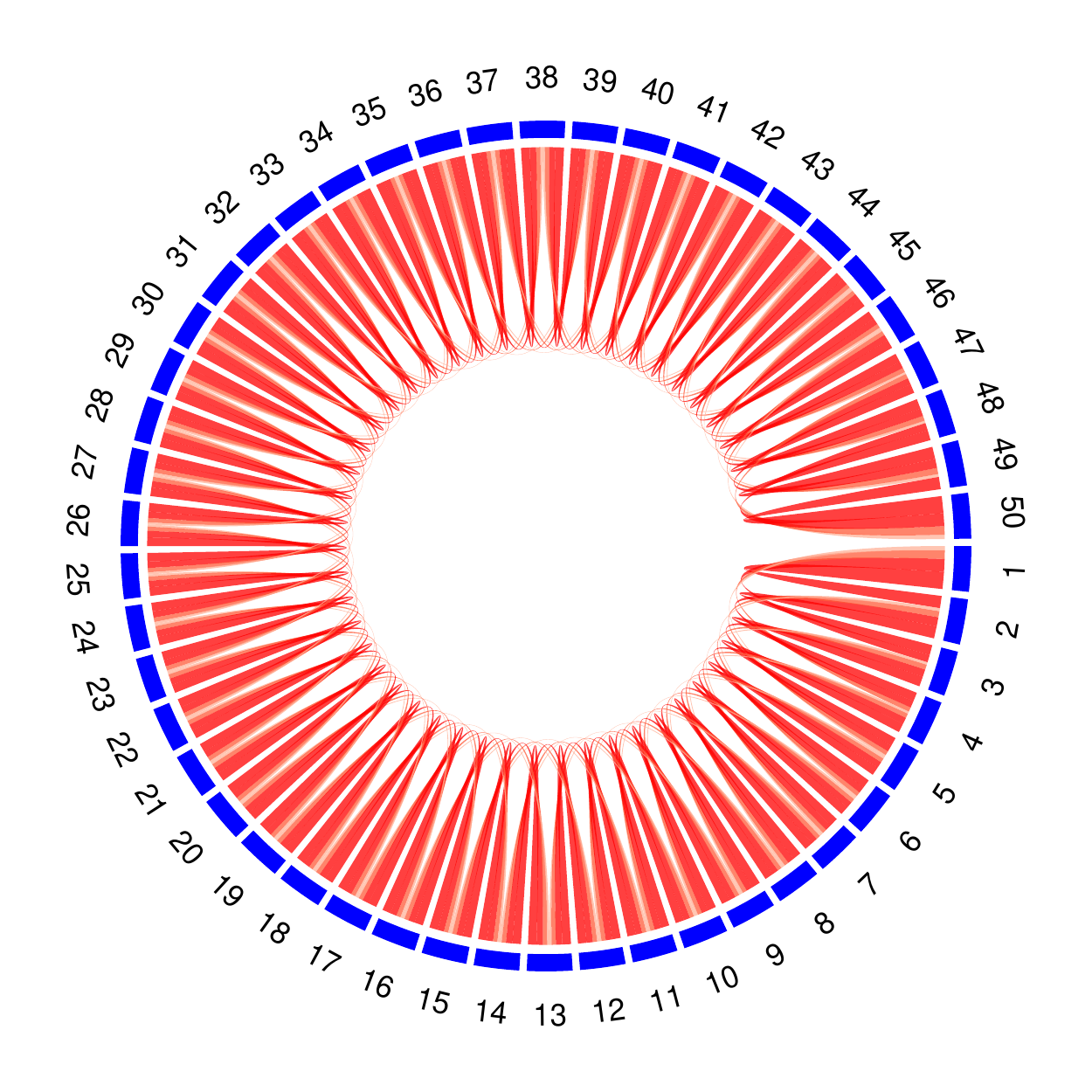} 
		\includegraphics[trim={0.5cm 0.5cm 0.5cm 0.5cm}, clip, width=0.325\linewidth]{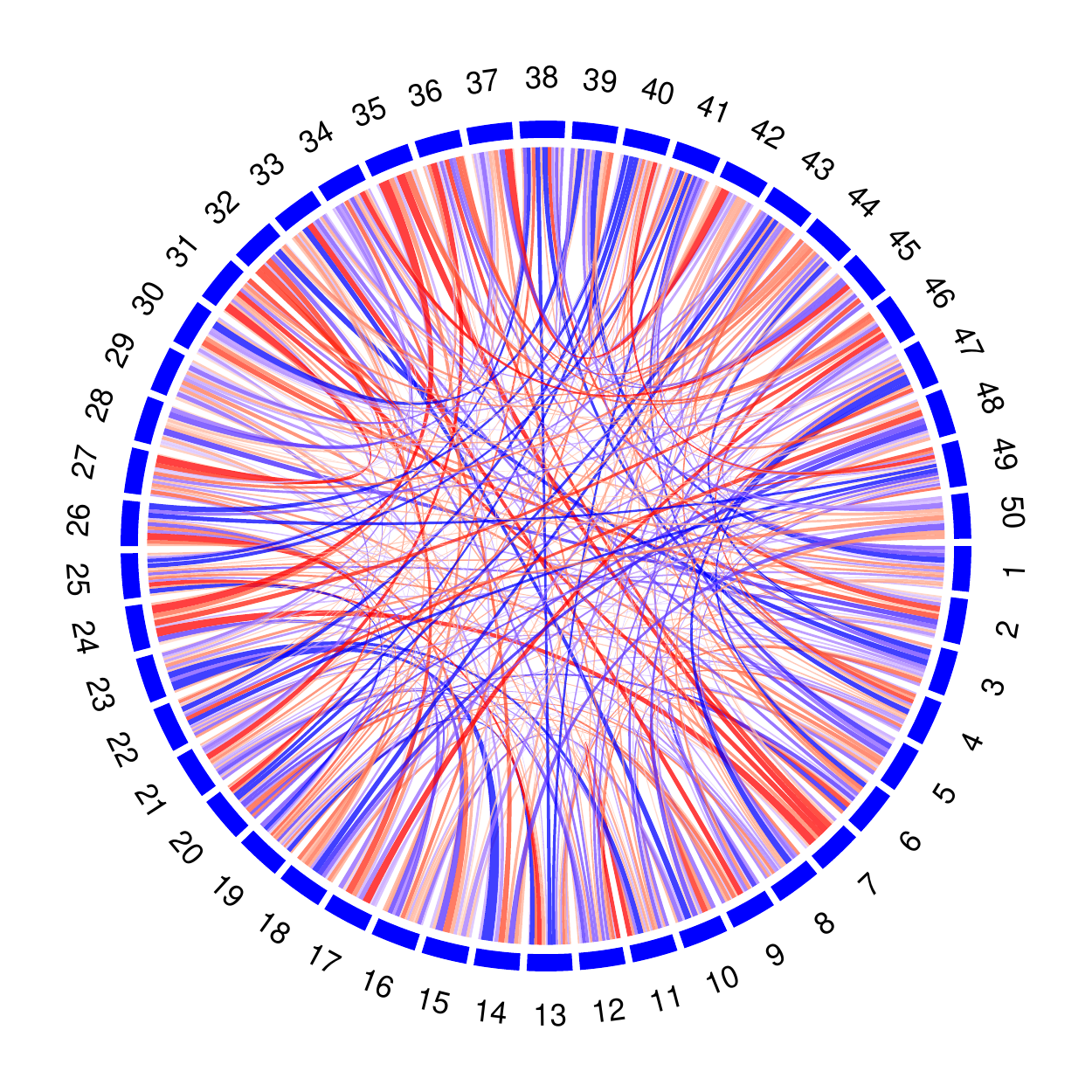} 
		\caption{True graphs.}
		\label{fig: truegraphs}
	\end{subfigure}
	\vskip 5pt
	\begin{subfigure}[b]{.99\linewidth}
		\centering
		\includegraphics[trim={0.5cm 0.5cm 0.5cm 0.5cm}, clip, width=0.325\linewidth]{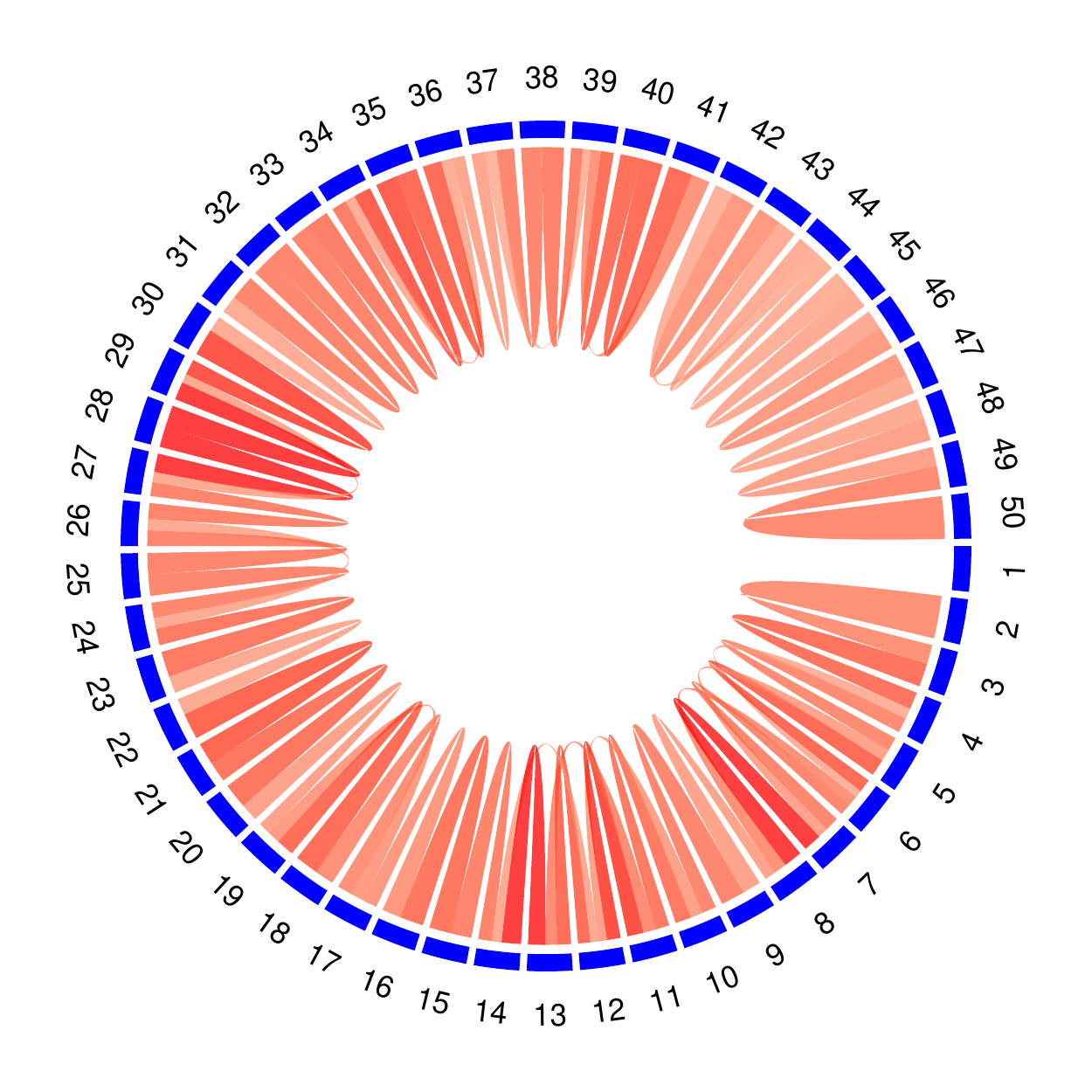} 
		\includegraphics[trim={0.5cm 0.5cm 0.5cm 0.5cm}, clip, width=0.325\linewidth]{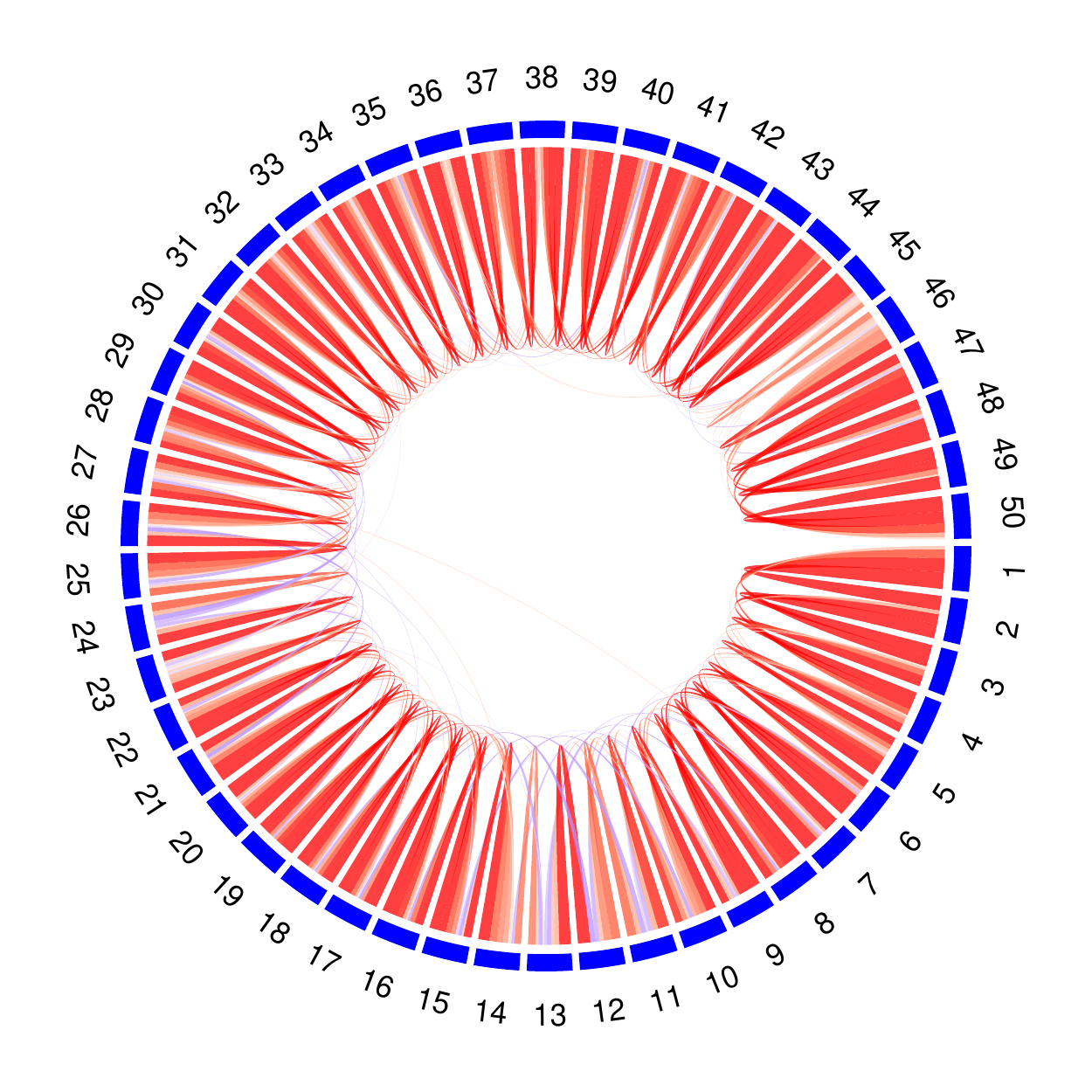} 
		\includegraphics[trim={0.5cm 0.5cm 0.5cm 0.5cm}, clip, width=0.325\linewidth]{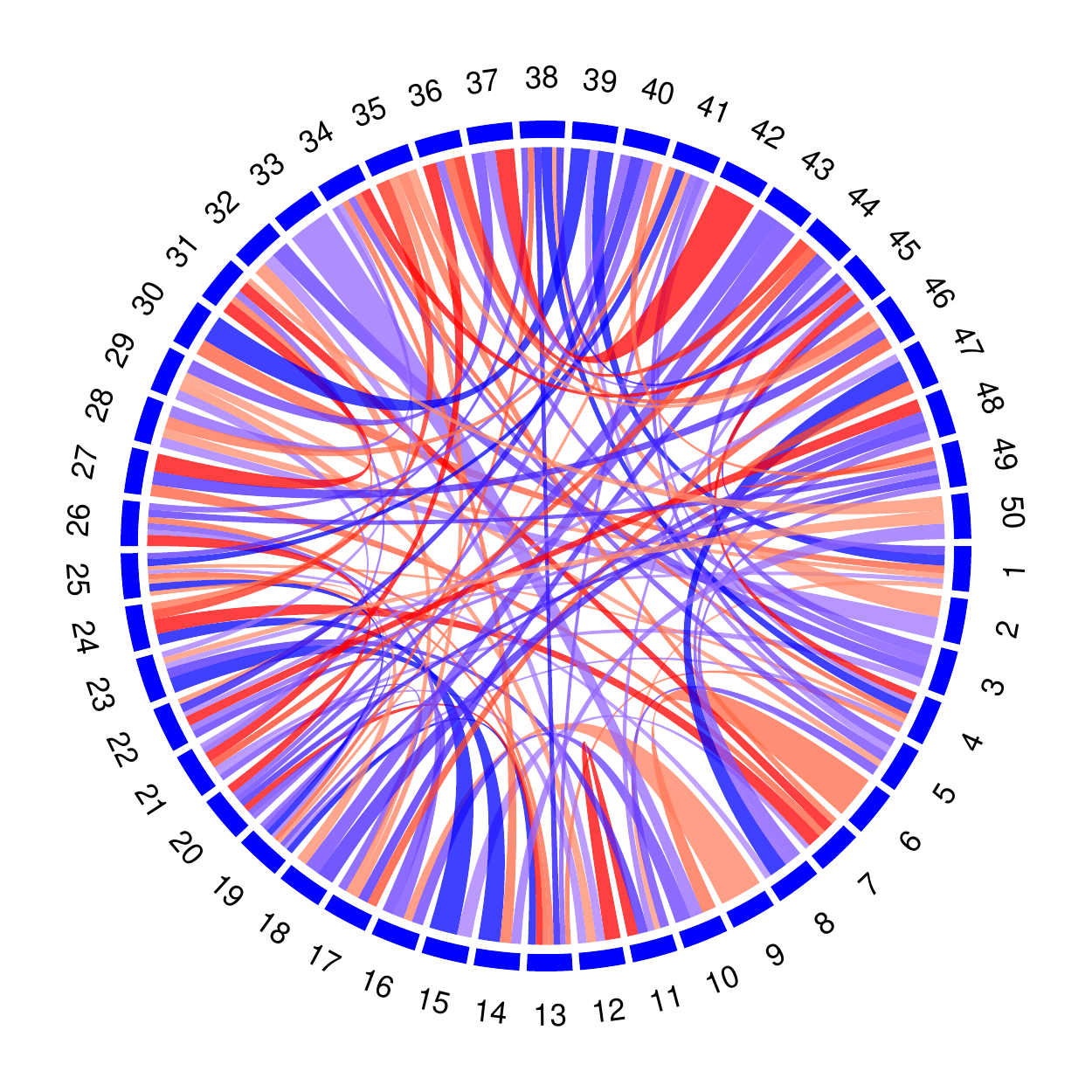} 
		\caption{Estimated graphs by the PF method.}
		\label{fig: graphs_PF}
	\end{subfigure}
	\begin{subfigure}[b]{.99\linewidth}
		\centering
		\includegraphics[trim={0.5cm 0.5cm 0.5cm 0.5cm}, clip, width=0.325\linewidth]{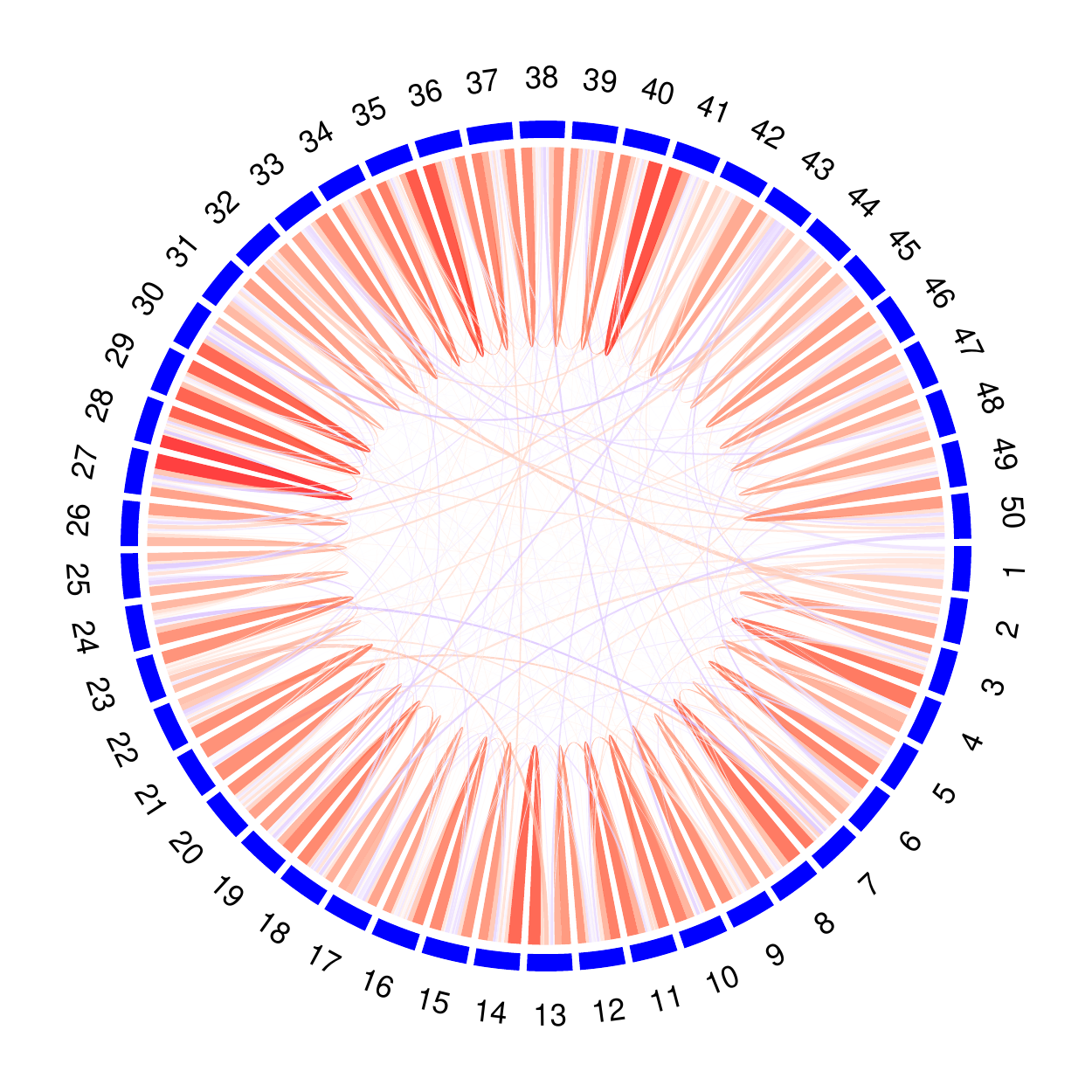} 
		\includegraphics[trim={0.5cm 0.5cm 0.5cm 0.5cm}, clip, width=0.325\linewidth]{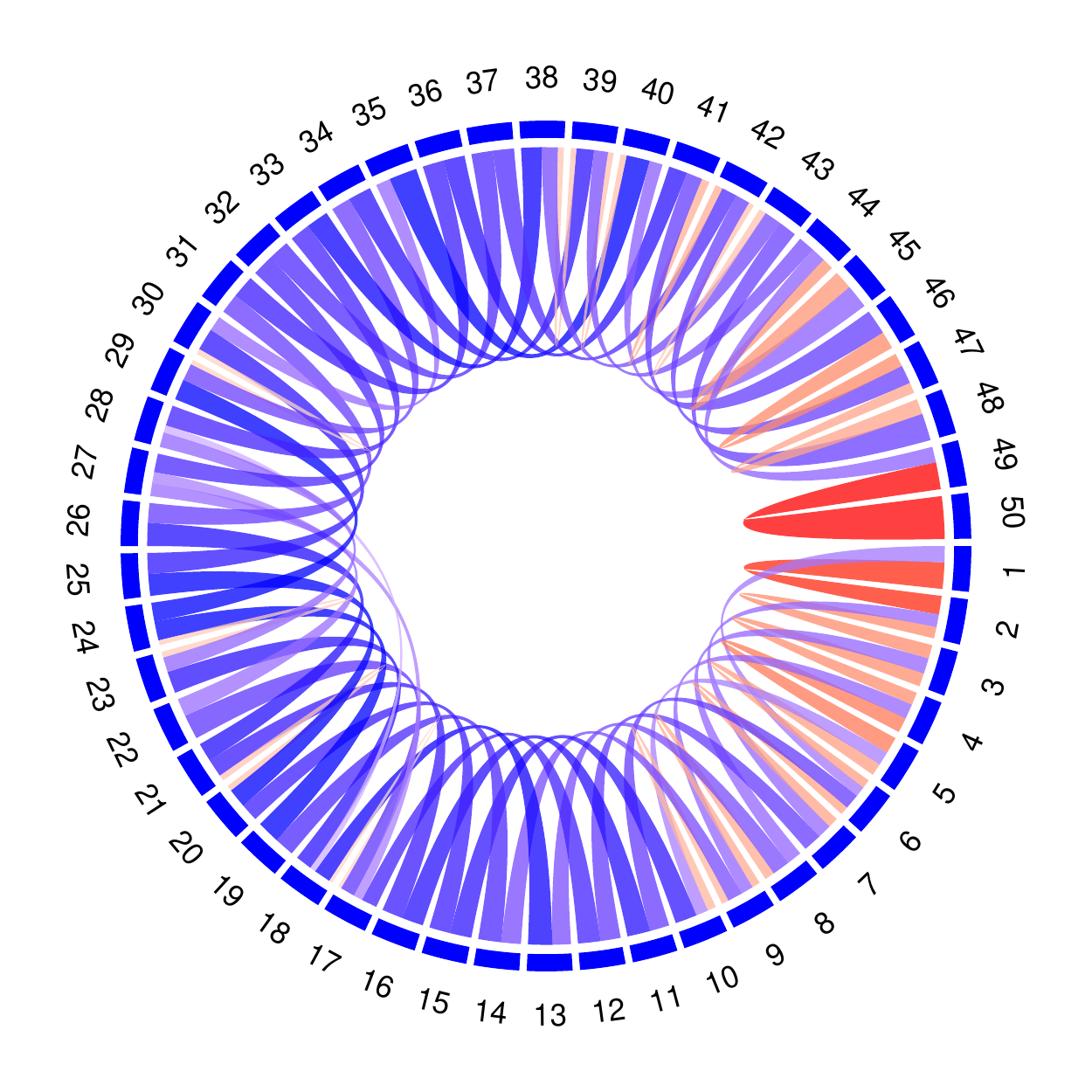} 
		\includegraphics[trim={0.5cm 0.5cm 0.5cm 0.5cm}, clip, width=0.325\linewidth]{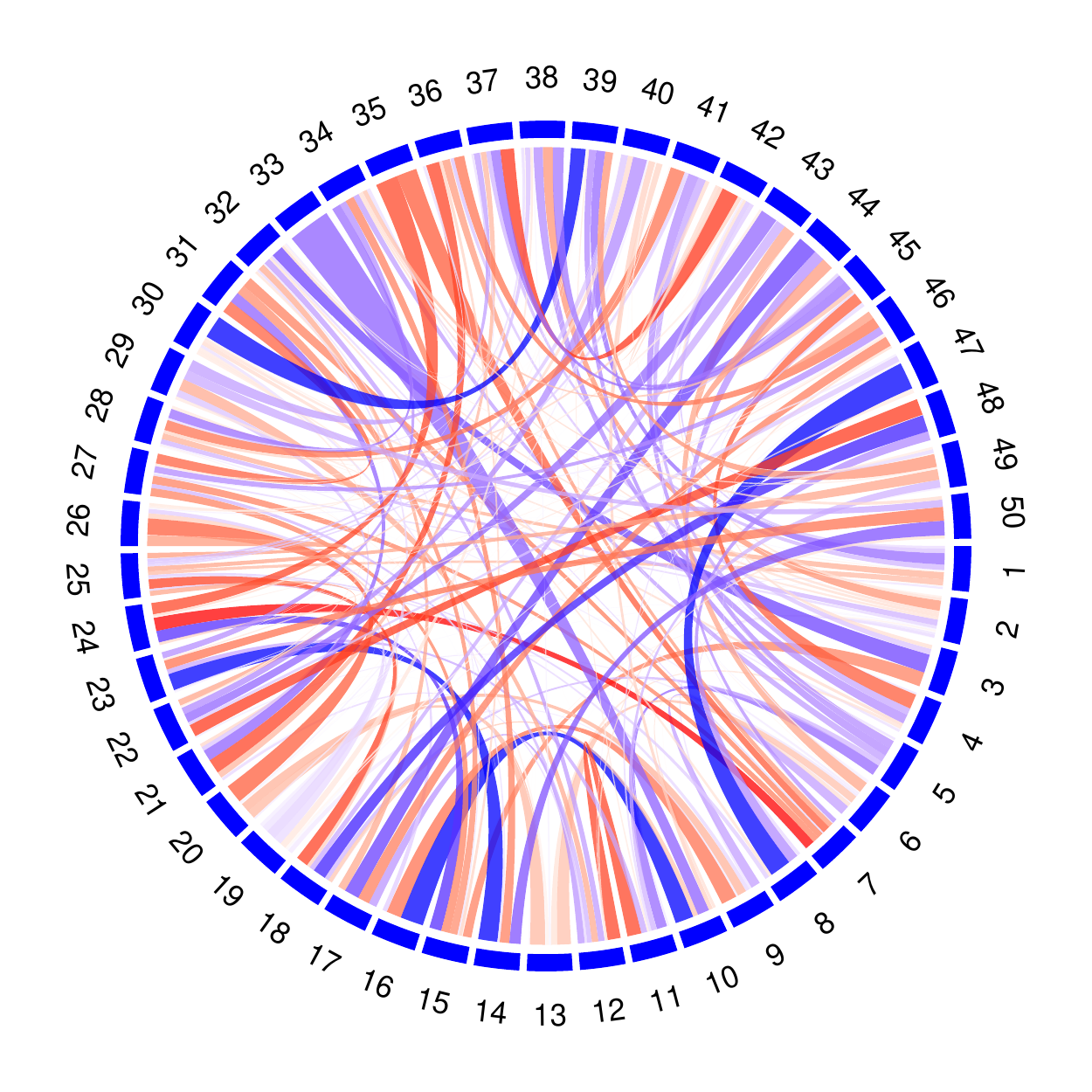} 
		\caption{Estimated graphs by Bagus.}
		\label{fig: graphs_bagus}
	\end{subfigure}	
	\begin{subfigure}[b]{.99\linewidth}
		\centering
		\includegraphics[trim={0.5cm 0.5cm 0.5cm 0.5cm}, clip, width=0.325\linewidth]{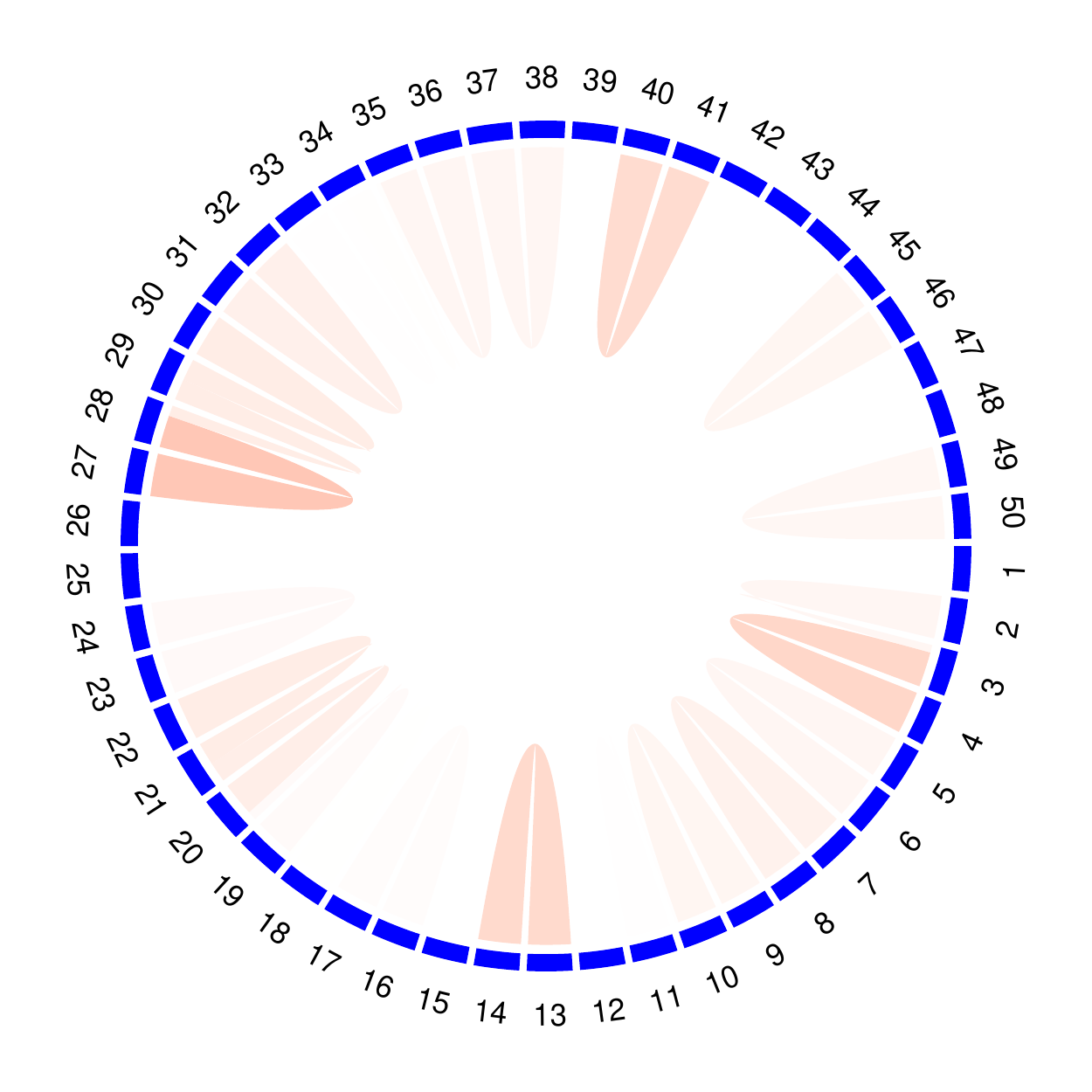} 
		\includegraphics[trim={0.5cm 0.5cm 0.5cm 0.5cm}, clip, width=0.325\linewidth]{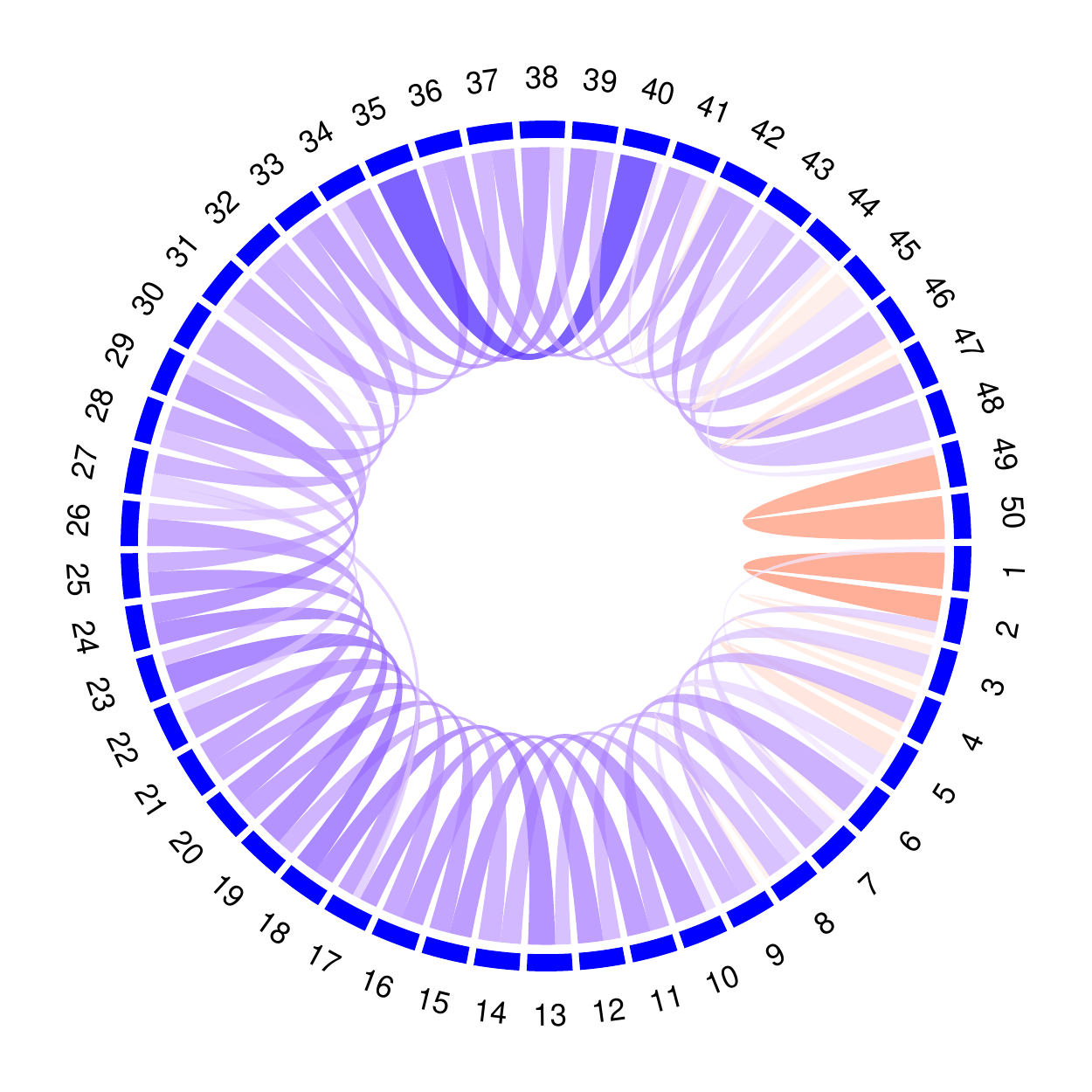} 
		\includegraphics[trim={0.5cm 0.5cm 0.5cm 0.5cm}, clip, width=0.325\linewidth]{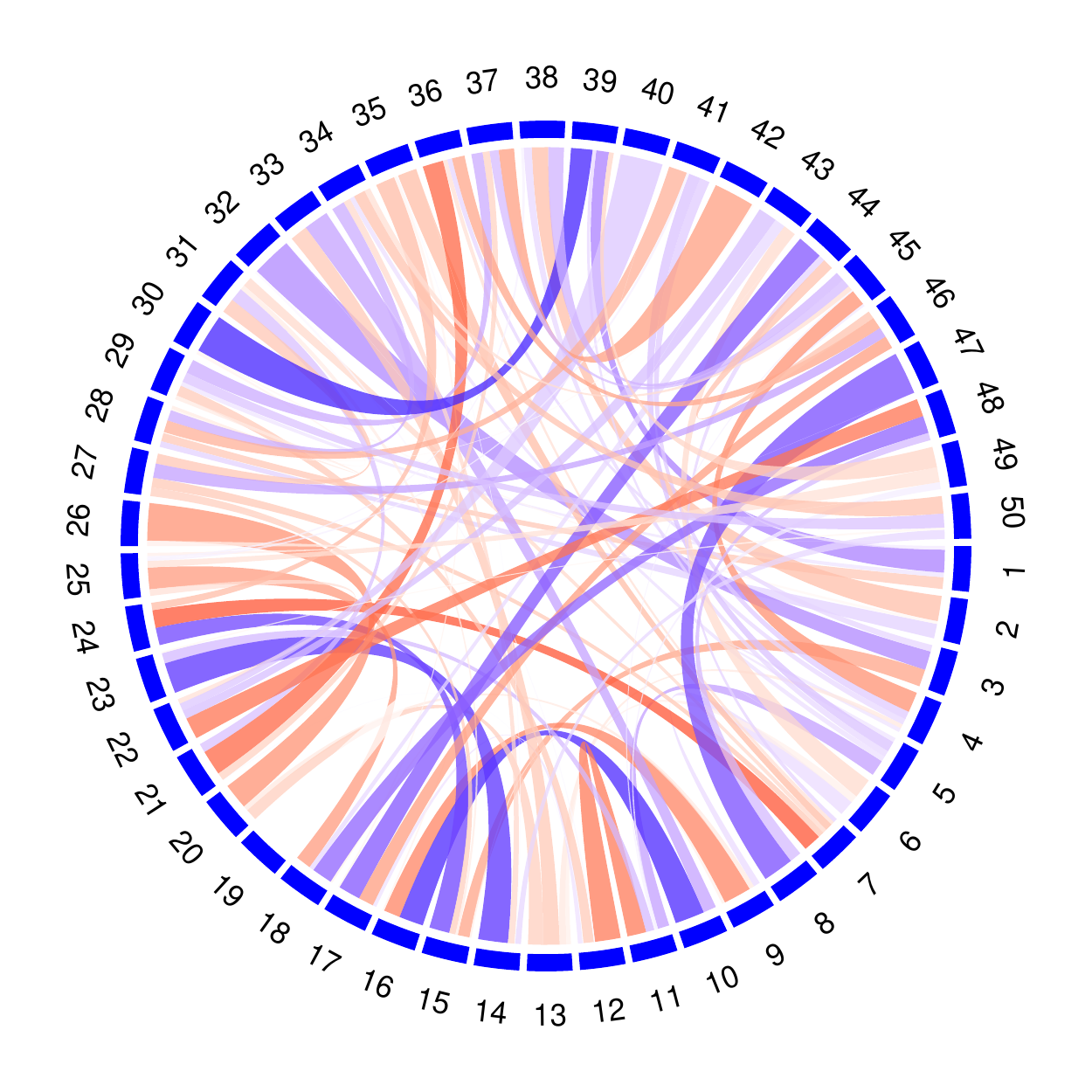} 
		\caption{Estimated graphs by M\&B.}
		\label{fig: graphs_mnb}
	\end{subfigure}	
\end{figure}
\clearpage\newpage
\begin{figure}[!ht]\ContinuedFloat
	\begin{subfigure}[b]{.99\linewidth}
		\centering
		\includegraphics[trim={0.5cm 0.5cm 0.5cm 0.5cm}, clip, width=0.325\linewidth]{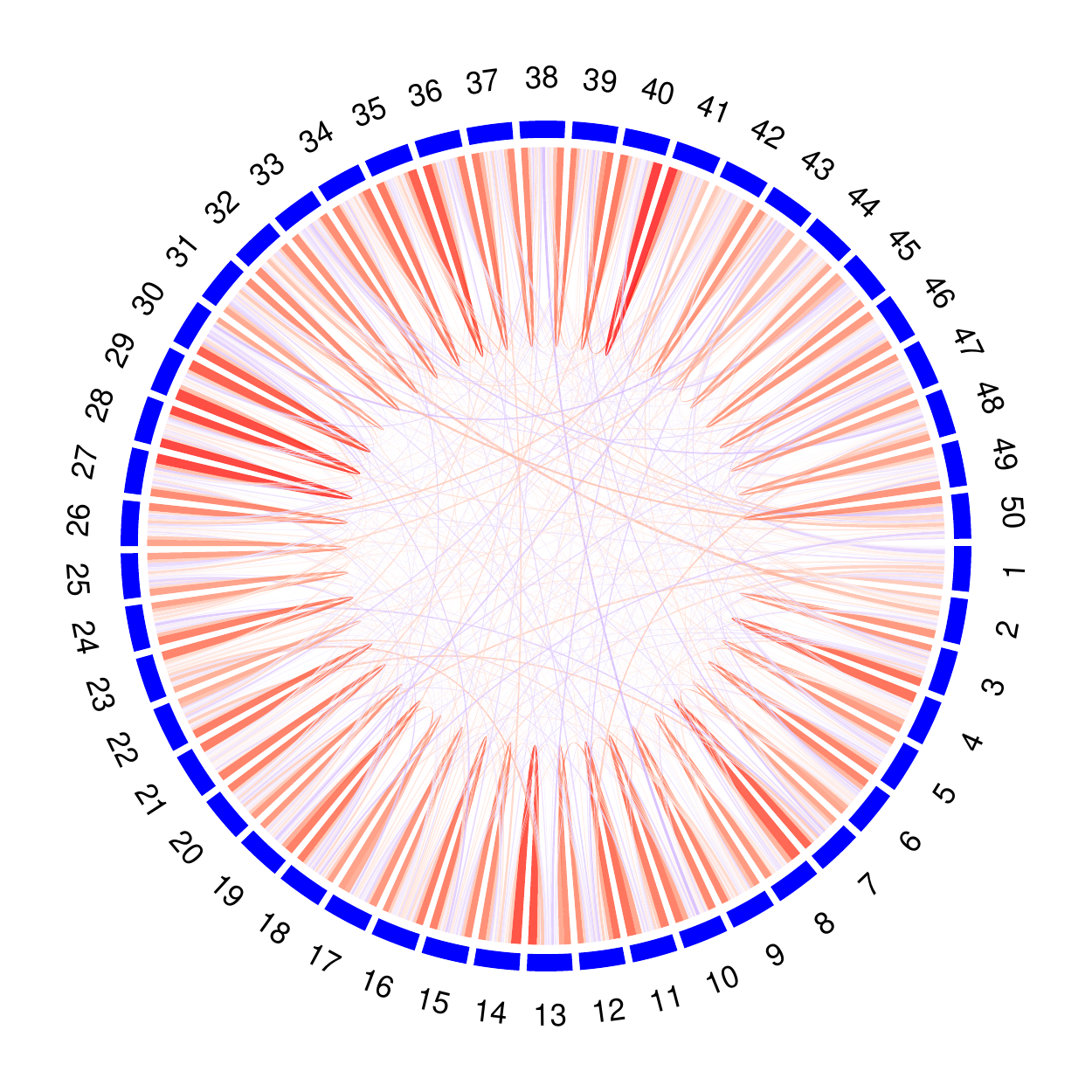} 
		\includegraphics[trim={0.5cm 0.5cm 0.5cm 0.5cm}, clip, width=0.325\linewidth]{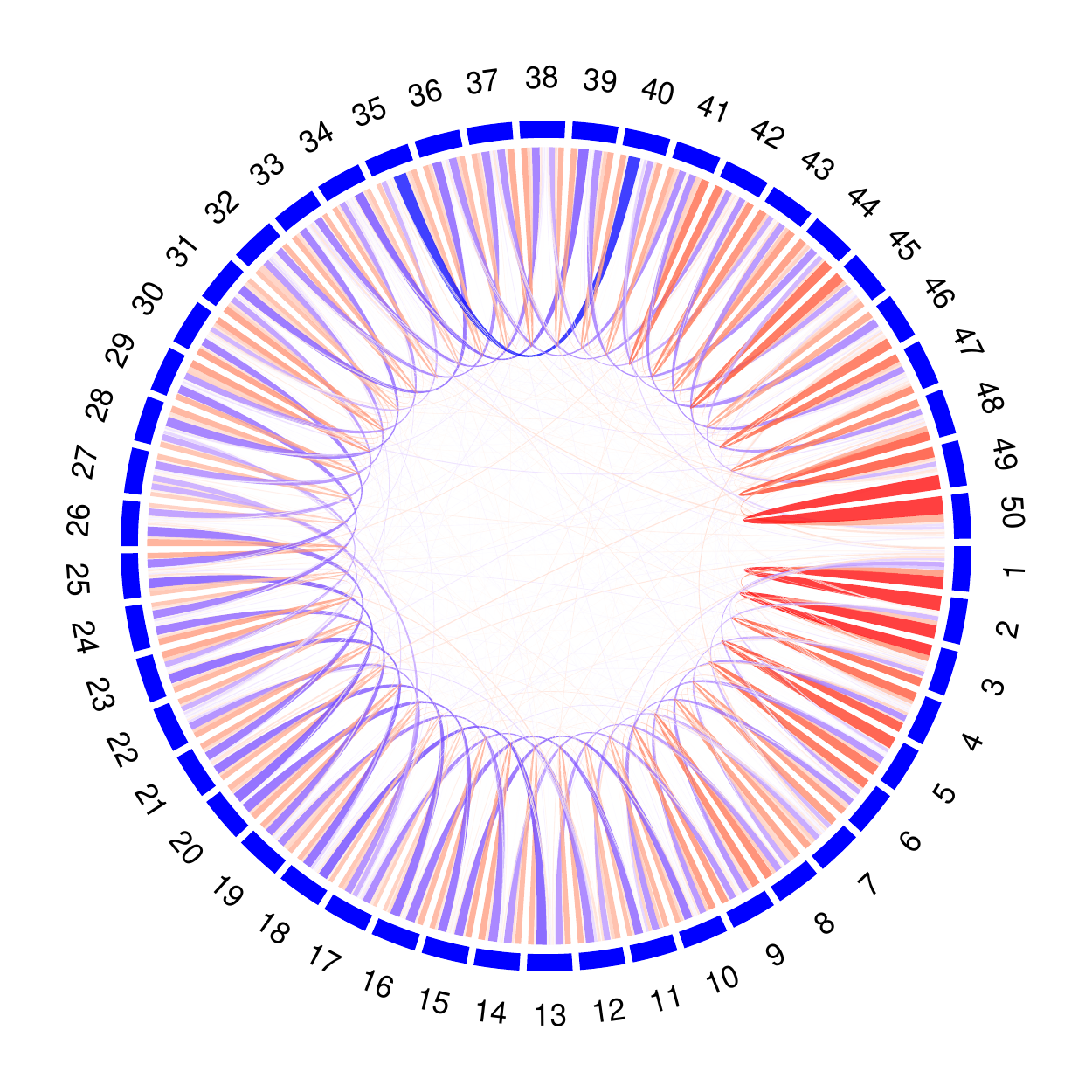} 
		\includegraphics[trim={0.5cm 0.5cm 0.5cm 0.5cm}, clip, width=0.325\linewidth]{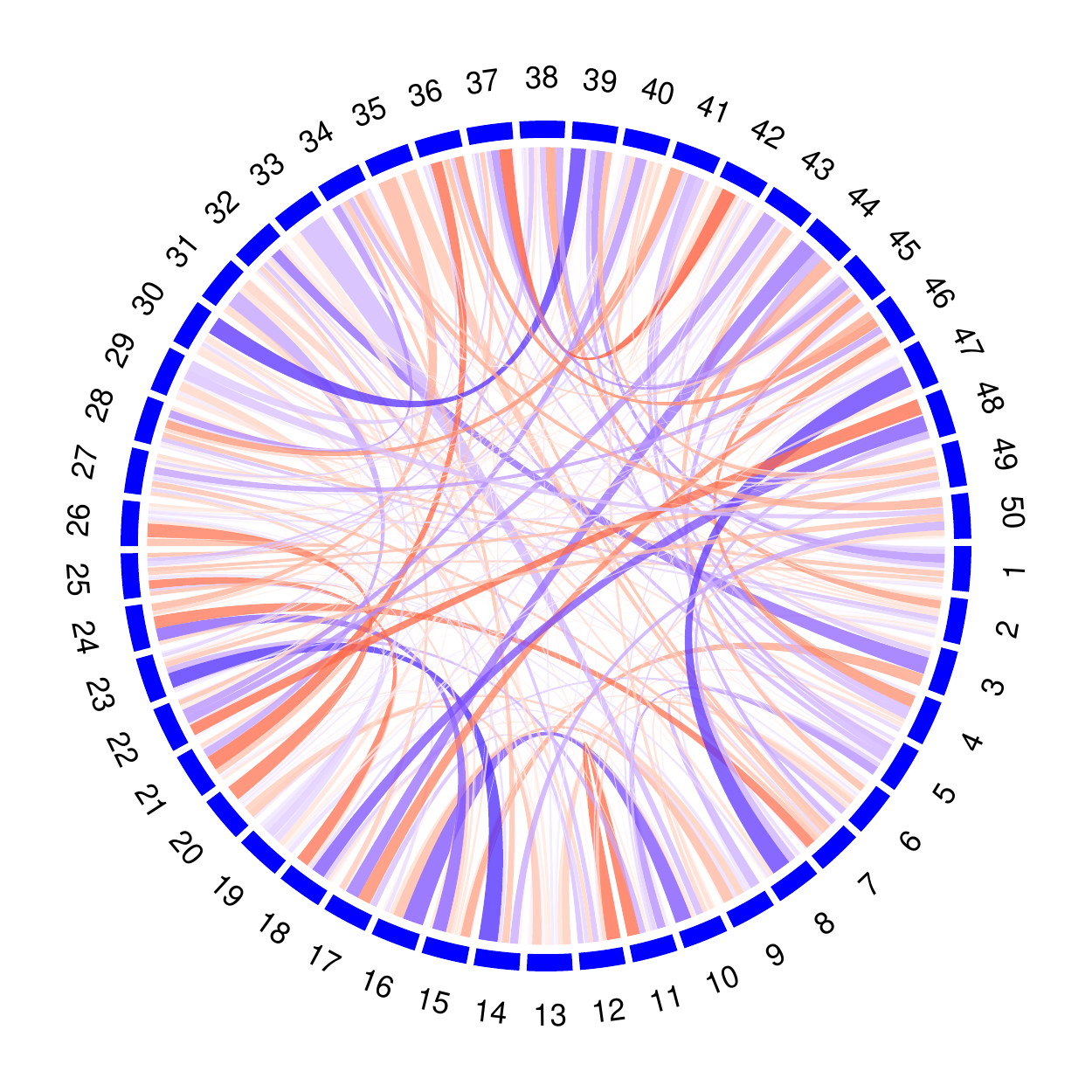} 
		\caption{Estimated graphs by Glasso.}
		\label{fig: graphs_glasso}
	\end{subfigure}
	\caption{Results of simulation experiments: Graph recovery: Panel (a) shows the true graphs for AR(2), banded and RSM structures from left to right; 
		panels (b), (c), (d) and (e) show the corresponding estimated graphs for our proposed PF and Bagus, M\&B and Glasso methods, respectively. 
		Positive (negative) associations are represented by blue (red) links, 
		their opacities being proportional to the corresponding association strengths. 
		The link widths are inversely proportional to the number of edges associated with the corresponding nodes. 
		Figure \ref{fig: sm graphs} in the supplementary materials provides a useful alternative graphical representation. 
	}
	\label{fig: graphs}
\end{figure}

\section{Application} \label{sec: applications}
We applied our proposed method to {two} real data sets from {two} different application domains, 
{namely genomics and finance}. 
To meet space constraints, the finance application is presented separately in Section \ref{subsec:finance} in the supplementary materials.

\label{subsec:gene}
Immune cells serve specialized roles in innate and adaptive bodily responses to eliminate antigens. 
To understand the cell biology of carcinogenic processes, study of immune cells and the genes therein are thus of immense importance. 

\begin{figure}[!ht]
	\centering
	\begin{subfigure}[b]{.44\linewidth}		
		\includegraphics[trim={1.1cm .8cm  1cm .9cm},clip, height=0.21\textheight]{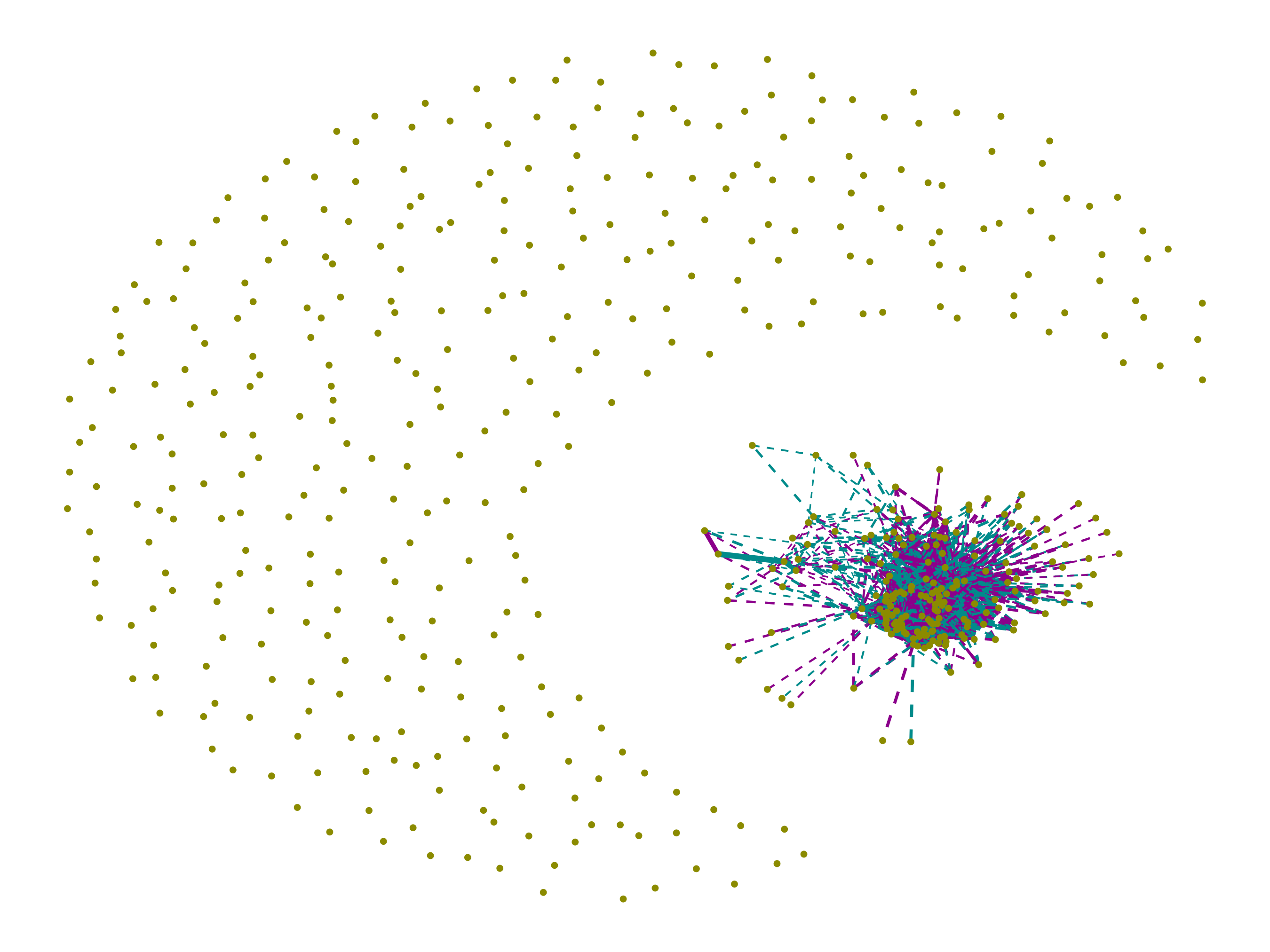}
		\caption{Graph of all genes.}
		\label{fig:marray_prec}
	\end{subfigure}
	\begin{subfigure}[b]{.44\linewidth}		
		\includegraphics[trim={.675cm .7cm  .4cm .35cm},clip, height=0.21\textheight]{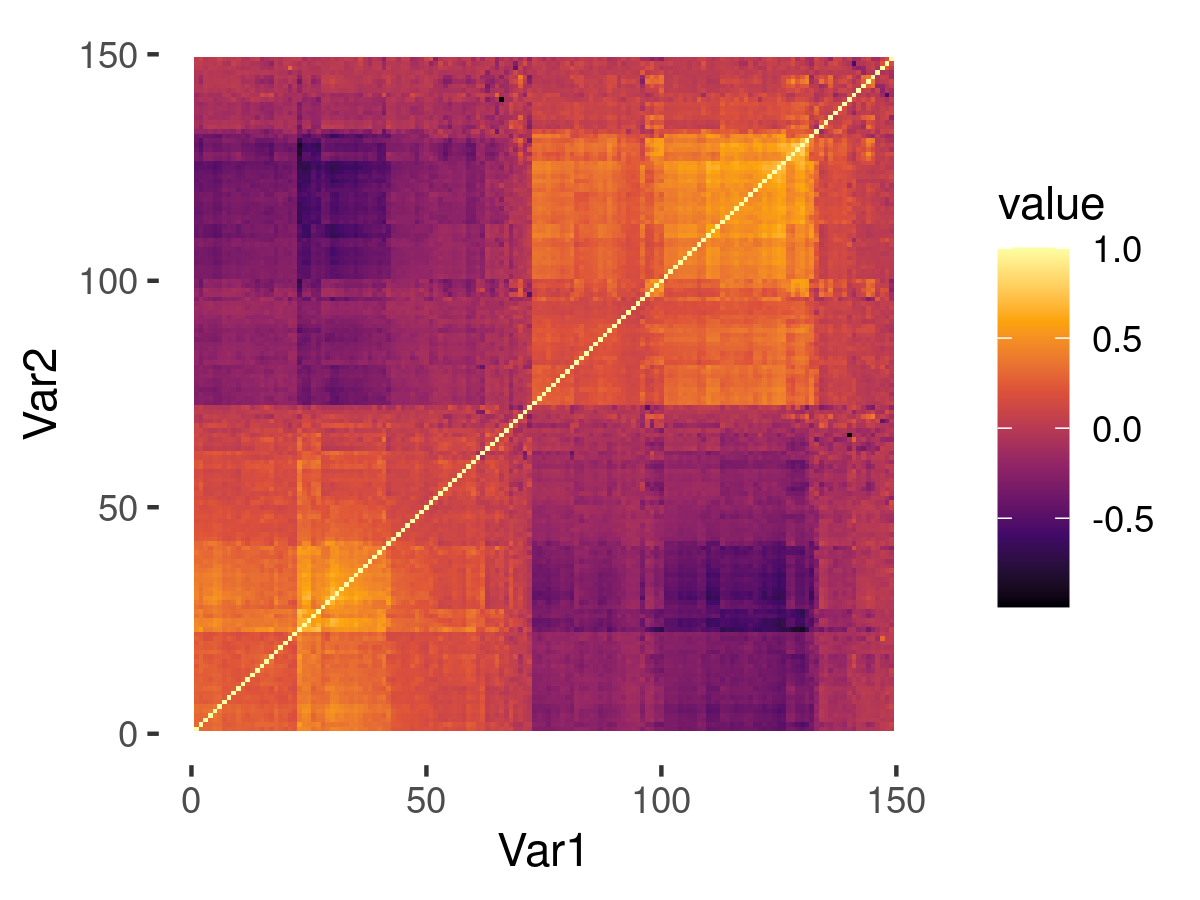}
		\caption{Heatmap of the connected genes.}
		\label{fig:marray_prec_z}
	\end{subfigure}
	\vskip 5pt
	\begin{subfigure}[b]{\linewidth}
		\centering
		\includegraphics[trim={2.4cm 1.85cm 2.4cm 2.5cm}, clip, width=0.65\linewidth]{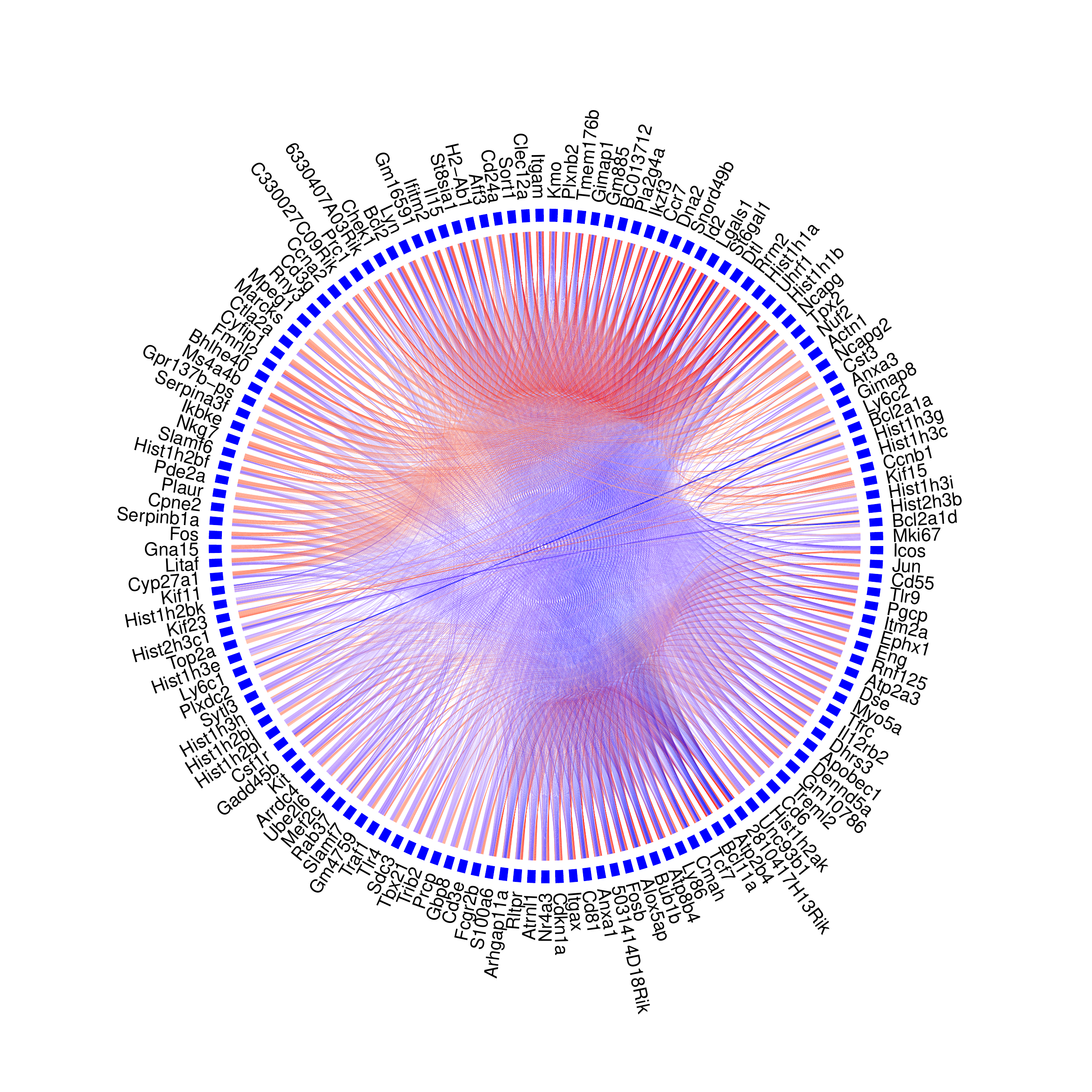}
		\caption{Graph of the connected genes.}
		\label{fig:marray_prec_zoom}
	\end{subfigure}
	\caption{Results for ImmGen microarray data: 
		Panel (a) shows the graph for all genes; 
		panel (b) shows the heatmap of the partial correlation matrix for the genes having at least 5 edges; 
		panel (c) zooms into the graph of these genes. 
		Positive (negative) associations are represented by blue (red) links, 
		their opacities being proportional to the corresponding association strengths. 
		The link widths are inversely proportional to the number of edges associated with the corresponding nodes. 
	}		
	\vskip -10pt
\end{figure}

We obtain the data from the Immunological Genome Project (ImmGen) \href{http://rstats.immgen.org/DataPage/}{data browser} \citep{immgen_paper}. 
ImmGen is a collaborative scientific research project that is currently building a gene-expression database for all characterized immune cells in mice. 
In particular, we use the GSE15907 microarray dataset \citep{micro1_train1,micro1_train2} comprising of multiple immune cell lineages 
which were isolated ex-vivo, primarily from young adult B6 male mice and double-sorted to $>99\%$ purity. 
The cell population includes all adaptive and innate lymphocytes (B, abT, gdT, Innate-Like Lymphocytes), 
myeloid cells (dendritic cells, macrophages, monocytes), mast cells and neutrophils. 
The already normalized dataset has more than $21,000$ gene expressions from $n=653$ immune cells. 
We made a $\log_2$ transformation of the data and filtered the top $2.5\%$ genes with highest variances using the \texttt{genefilter R} package \citep{genefilter}, 
resulting in a $d=544$ dimensional problem.  
Since different cell-types exhibit very different gene expression profiles, we centered the gene-expressions separately within each cell type.
The estimated graph corresponding to all genes is shown in Figure \ref{fig:marray_prec}. We can see clear evidence towards conditional independence between most of the genes, 
indicating that only a small subset of genes are functionally responsible for the variability.

For more insights, we zoom into the connected genes and plot the corresponding partial correlation matrix and graph, 
limited to the genes possessing at least 5 edges, in Figures \ref{fig:marray_prec_z} and \ref{fig:marray_prec_zoom} respectively.
{We notice overall positive partial correlations between the histone class of genes such as Hist2h3b, Hist1h3h, Hist2h3c1, Hist1h3c, etc \citep{histone}.
The positive correlations between these protein coding genes indicate that these genes are generally expressed together in the mechanism.}
%
{Bcl2a1a and Bcl2a1d, two functional isoforms of the B cell leukemia 2 family member A1 exerting important pro-survival functions, show strong positive association.}
We also observe strong positive association between Ly6c1 and Ly6c2 genes. 
\citet{ly6c} noted these genes to be located adjacent to each other in the mid‐section of the Ly6 complex. 
They share $>95\%$ similarity in their genomic and protein sequences and these two genes has been considered synonymous to each other and hence exhibit strong positive association. 
{Although positive correlations are often observed between the membrane-spanning 4A class of genes such as Ms4a4c, Ms4a4b, Ms4a6b, etc. \citep{ms4a}, we find the genes to be conditionally almost independent in our analysis.
It indicates that their expressions might be regulated by other genes.}

From Figure \ref{fig:marray_prec_z}, it thus seems that 
there exist two blocks where within each block the genes are positively associated whereas between the blocks the genes are negatively associated. 
This might again be an indication that either of these blocks of genes are generally expressed together in immune cells.

\section{Discussion} \label{sec: discussion}

In this article, we proposed a novel flexible statistical model for Gaussian precision matrices  
that relies on decomposing them 
into a low-rank and a diagonal component. 
The decomposition is theoretically and practically highly flexible and is thus often used in covariance matrix models, 
arising naturally in their computationally tractable latent factor based representations. 
The approach has, however, not been popular in precision matrix and related graph estimation problems as it poses daunting computational challenges when applied to such settings. 
We addressed this issue in this article by exploring a previously under-utilized latent variable construction, leading to a highly scalable Gibbs sampler that allows efficient posterior inference. 
The decomposition based strategy also allowed us to use sparsity inducing priors to shrink insignificant off-diagonal entries toward zero 
while also making the approach adaptable to high-dimensional sparse settings. 
We specifically adapted the Dirichlet-Laplace prior for sparse precision matrix estimation. 
We developed a novel posterior FDR control based method to perform graph selection that properly accommodates posterior uncertainty. 
We also established theoretical convergence guarantees for the proposed model in high-dimensional sparse settings. 
In synthetic experiments, the proposed method vastly outperformed its competitors in receiving the true underlying graphs. 
We illustrated the method's practical utility through real data examples from genomics and finance.

Aside from providing fundamentally new probabilistic perspectives on
Gaussian precision matrix and related graph estimation problems, 
our work also bridges the gap between frequentist penalized likelihood based strategies 
and Bayesian shrinkage prior based ideas for such models. 
The simple and highly scalable computational algorithms resulting from our latent factor representation 
should also free up Bayesians from having to put restrictive assumptions on the graph structures to achieve computational tractability, 
opening up new opportunities to adapt the basic models to more complex and more realistic data structures and study designs. 
A few such methodological extensions we are pursuing as topics of separate ongoing research include 
dynamic Gaussian graphical models \citep{huang2017learning}, 
covariate dependent Gaussian graphical models, 
nonparanormal \citep{liu2009nonparanormal} 
and Gaussian copula graphical models \citep{pitt2006efficient}, 
etc.

\section*{Supplementary Materials}
Supplementary materials discuss 
a general strategy for posterior computation under broad classes of generic priors, 
proofs of the theoretical results, 
some additional figures,
and an application to a NASDAQ-100 stock price dataset. 


\vspace*{-15pt}

\baselineskip=14pt
\bibliographystyle{natbib}
\bibliography{Graphical_Models}

\begin{thebibliography}{}

\bibitem[Armstrong {\em et~al.}(2009)Armstrong, Carter, Wong, and
  Kohn]{armstrong2009bayesian}
Armstrong, H., Carter, C.~K., Wong, K. F.~K., and Kohn, R. (2009).
\newblock Bayesian covariance matrix estimation using a mixture of decomposable
  graphical models.
\newblock {\em Statistics and Computing\/}, {\bf 19}, 303--316.

\bibitem[Atay-Kayis and Massam(2005)Atay-Kayis and Massam]{atay2005monte}
Atay-Kayis, A. and Massam, H. (2005).
\newblock A {M}onte {C}arlo method for computing the marginal likelihood in
  nondecomposable {G}aussian graphical models.
\newblock {\em Biometrika\/}, {\bf 92}, 317--335.

\bibitem[Baglama and Reichel(2005)Baglama and Reichel]{irlba}
Baglama, J. and Reichel, L. (2005).
\newblock Augmented implicitly restarted {L}anczos bidiagonalization methods.
\newblock {\em SIAM Journal on Scientific Computing\/}, {\bf 27}, 19--42.

\bibitem[Bai and Ng(2008)Bai and Ng]{bai2008factorest}
Bai, J. and Ng, S. (2008).
\newblock Large dimensional factor analysis.
\newblock {\em Foundations and Trends in Econometrics\/}, {\bf 3}, 89--163.

\bibitem[Banerjee {\em et~al.}(2008)Banerjee, {El Ghaoui}, and
  {d'Aspremont}]{banerjee2008}
Banerjee, O., {El Ghaoui}, L., and {d'Aspremont}, A. (2008).
\newblock Model selection through sparse maximum likelihood estimation for
  multivariate {G}aussian or binary data.
\newblock {\em JMLR\/}, {\bf 9}, 485--516.

\bibitem[Banerjee and Ghosal(2015)Banerjee and Ghosal]{banerjee2015bayesian}
Banerjee, S. and Ghosal, S. (2015).
\newblock Bayesian structure learning in graphical models.
\newblock {\em Journal of Multivariate Analysis\/}, {\bf 136}, 147--162.

\bibitem[Barvinok and Rudelson(2022)Barvinok and
  Rudelson]{BARVINOK2022quad_system}
Barvinok, A. and Rudelson, M. (2022).
\newblock When a system of real quadratic equations has a solution.
\newblock {\em Advances in Mathematics\/}, {\bf 403}, 108391.

\bibitem[Berger(1985)Berger]{berger_book}
Berger, J.~O. (1985).
\newblock {\em {Statistical decision theory and Bayesian analysis}\/}.
\newblock Springer series in statistics. Springer-Verlag, New York, 2nd
  edition.

\bibitem[Bhattacharya and Dunson(2011)Bhattacharya and
  Dunson]{bhattacharya2011sparse}
Bhattacharya, A. and Dunson, D.~B. (2011).
\newblock Sparse {B}ayesian infinite factor models.
\newblock {\em Biometrika\/}, {\bf 98}, 291--306.

\bibitem[Bhattacharya {\em et~al.}(2015)Bhattacharya, Pati, Pillai, and
  Dunson]{bhattacharya2015dirichlet}
Bhattacharya, A., Pati, D., Pillai, N.~S., and Dunson, D.~B. (2015).
\newblock Dirichlet-{L}aplace priors for optimal shrinkage.
\newblock {\em Journal of the American Statistical Association\/}, {\bf 110},
  1479--1490.

\bibitem[Bhattacharya {\em et~al.}(2016)Bhattacharya, Chakraborty, and
  Mallick]{bhattacharya2016fast}
Bhattacharya, A., Chakraborty, A., and Mallick, B.~K. (2016).
\newblock Fast sampling with {G}aussian scale mixture priors in
  high-dimensional regression.
\newblock {\em Biometrika\/}, {\bf 103}, 985--991.

\bibitem[Bien and Tibshirani(2011)Bien and Tibshirani]{bien2011covest}
Bien, J. and Tibshirani, R.~J. (2011).
\newblock Sparse estimation of a covariance matrix.
\newblock {\em Biometrika\/}, {\bf 98}, 807--820.

\bibitem[Carvalho and Scott(2009)Carvalho and Scott]{carvalho2009objective}
Carvalho, C.~M. and Scott, J.~G. (2009).
\newblock Objective {B}ayesian model selection in {G}aussian graphical models.
\newblock {\em Biometrika\/}, {\bf 96}, 497--512.

\bibitem[Carvalho {\em et~al.}(2007)Carvalho, Massam, and
  West]{carvalho2007simulation}
Carvalho, C.~M., Massam, H., and West, M. (2007).
\newblock Simulation of hyper-inverse {W}ishart distributions in graphical
  models.
\newblock {\em Biometrika\/}, {\bf 94}, 647--659.

\bibitem[Carvalho {\em et~al.}(2009)Carvalho, Polson, and
  Scott]{carvalho2009handling}
Carvalho, C.~M., Polson, N.~G., and Scott, J.~G. (2009).
\newblock Handling sparsity via the horseshoe.
\newblock In {\em Artificial Intelligence and Statistics\/}, pages 73--80.
  PMLR.

\bibitem[Chandra and Bhattacharya(2019)Chandra and
  Bhattacharya]{chandra2019non}
Chandra, N.~K. and Bhattacharya, S. (2019).
\newblock Non-marginal decisions: A novel {B}ayesian multiple testing
  procedure.
\newblock {\em Electronic Journal of Statistics\/}, {\bf 13}, 489--535.

\bibitem[Dallakyan and Pourahmadi(2020)Dallakyan and
  Pourahmadi]{dallakyan2020fused}
Dallakyan, A. and Pourahmadi, M. (2020).
\newblock Fused-lasso regularized {C}holesky factors of large nonstationary
  covariance matrices of longitudinal data.
\newblock {\em arXiv:2007.11168\/}.

\bibitem[Daniele {\em et~al.}(2019)Daniele, Pohlmeier, and
  Zagidullina]{daniele2019penalizingfactor}
Daniele, M., Pohlmeier, W., and Zagidullina, A. (2019).
\newblock Sparse approximate factor estimation for high-dimensional covariance
  matrices.
\newblock {\em arXiv:1906.05545\/}.

\bibitem[d'Aspremont {\em et~al.}(2008)d'Aspremont, Banerjee, and {El
  Ghaoui}]{d2008first}
d'Aspremont, A., Banerjee, O., and {El Ghaoui}, L. (2008).
\newblock First-order methods for sparse covariance selection.
\newblock {\em SIAM Journal on Matrix Analysis and Applications\/}, {\bf 30},
  56--66.

\bibitem[Dawid and Lauritzen(1993)Dawid and Lauritzen]{dawid1993hyper}
Dawid, A.~P. and Lauritzen, S.~L. (1993).
\newblock Hyper {M}arkov laws in the statistical analysis of decomposable
  graphical models.
\newblock {\em The Annals of Statistics\/}, {\bf 21}, 1272--1317.

\bibitem[Dellaportas {\em et~al.}(2003)Dellaportas, Giudici, and
  Roberts]{dellaportas2003bayesian}
Dellaportas, P., Giudici, P., and Roberts, G. (2003).
\newblock Bayesian inference for nondecomposable graphical {G}aussian models.
\newblock {\em Sankhy{\=a}: The Indian Journal of Statistics\/}, pages 43--55.

\bibitem[Desch {\em et~al.}(2011)Desch, Randolph, {\em et~al.}]{micro1_train2}
Desch, A.~N., Randolph, G.~J., {\em et~al.} (2011).
\newblock {CD103+ pulmonary dendritic cells preferentially acquire and present
  apoptotic cell--associated antigen}.
\newblock {\em Journal of Experimental Medicine\/}, {\bf 208}, 1789--1797.

\bibitem[Deshpande {\em et~al.}(2019)Deshpande, Ro{\v{c}}kov{\'a}, and
  George]{deshpande2019simultaneous}
Deshpande, S.~K., Ro{\v{c}}kov{\'a}, V., and George, E.~I. (2019).
\newblock Simultaneous variable and covariance selection with the multivariate
  spike-and-slab lasso.
\newblock {\em Journal of Computational and Graphical Statistics\/}, {\bf 28},
  921--931.

\bibitem[Dobra {\em et~al.}(2004)Dobra, Hans, Jones, Nevins, Yao, and
  West]{dobra2004sparse}
Dobra, A., Hans, C., Jones, B., Nevins, J.~R., Yao, G., and West, M. (2004).
\newblock Sparse graphical models for exploring gene expression data.
\newblock {\em Journal of Multivariate Analysis\/}, {\bf 90}, 196--212.

\bibitem[Dobra {\em et~al.}(2011)Dobra, Lenkoski, and
  Rodriguez]{dobra2011bayesian}
Dobra, A., Lenkoski, A., and Rodriguez, A. (2011).
\newblock Bayesian inference for general {G}aussian graphical models with
  application to multivariate lattice data.
\newblock {\em Journal of the American Statistical Association\/}, {\bf 106},
  1418--1433.

\bibitem[Eddelbuettel and Francois(2011)Eddelbuettel and Francois]{rcpp}
Eddelbuettel, D. and Francois, R. (2011).
\newblock {Rcpp: Seamless R and C++ integration}.
\newblock {\em Journal of Statistical Software\/}, {\bf 40}, 1--18.

\bibitem[Escobar and West(1995)Escobar and West]{escobar1995bayesian}
Escobar, M.~D. and West, M. (1995).
\newblock Bayesian density estimation and inference using mixtures.
\newblock {\em Journal of the American Statistical Association\/}, {\bf 90},
  577--588.

\bibitem[Fan {\em et~al.}(2011)Fan, Liao, and
  Mincheva]{fan2011penalizingfactor}
Fan, J., Liao, Y., and Mincheva, M. (2011).
\newblock {High-dimensional covariance matrix estimation in approximate factor
  models}.
\newblock {\em The Annals of Statistics\/}, {\bf 39}, 3320--3356.

\bibitem[Fan {\em et~al.}(2018)Fan, Liu, and Wang]{fan2018penalizingfactor}
Fan, J., Liu, H., and Wang, W. (2018).
\newblock {Large covariance estimation through elliptical factor models}.
\newblock {\em The Annals of Statistics\/}, {\bf 46}, 1383--1414.

\bibitem[Ferguson(1973)Ferguson]{ferguson73}
Ferguson, T.~S. (1973).
\newblock A {B}ayesian analysis of some nonparametric problems.
\newblock {\em The Annals of Statistics\/}, {\bf 1}, 209--230.

\bibitem[Friedman {\em et~al.}(2008)Friedman, Hastie, and
  Tibshirani]{friedman2008sparse}
Friedman, J., Hastie, T., and Tibshirani, R. (2008).
\newblock Sparse inverse covariance estimation with the graphical lasso.
\newblock {\em Biostatistics\/}, {\bf 9}, 432--441.

\bibitem[Gan {\em et~al.}(2019)Gan, Narisetty, and Liang]{gan2019bayesian}
Gan, L., Narisetty, N.~N., and Liang, F. (2019).
\newblock Bayesian regularization for graphical models with unequal shrinkage.
\newblock {\em Journal of the American Statistical Association\/}, {\bf 114},
  1218--1231.

\bibitem[Gentleman {\em et~al.}(2020)Gentleman, Carey, Huber, and
  Hahne]{genefilter}
Gentleman, R., Carey, V., Huber, W., and Hahne, F. (2020).
\newblock {\em genefilter: methods for filtering genes from high-throughput
  experiments\/}.
\newblock R package version 1.70.0.

\bibitem[Ghosal and van~der Vaart(2017)Ghosal and van~der Vaart]{ghosal_book}
Ghosal, S. and van~der Vaart, A. (2017).
\newblock {\em Fundamentals of nonparametric {B}ayesian inference\/}.
\newblock Cambridge Series in Statistical and Probabilistic Mathematics.
  Cambridge University Press.

\bibitem[Green(1995)Green]{rjmcmc}
Green, P.~J. (1995).
\newblock Reversible jump {M}arkov chain {M}onte {C}arlo computation and
  {B}ayesian model determination.
\newblock {\em Biometrika\/}, {\bf 82}, 711--732.

\bibitem[Green and Thomas(2013)Green and Thomas]{green2013sampling}
Green, P.~J. and Thomas, A. (2013).
\newblock Sampling decomposable graphs using a {M}arkov chain on junction
  trees.
\newblock {\em Biometrika\/}, {\bf 100}, 91--110.

\bibitem[Gu {\em et~al.}(2014)Gu, Gu, Eils, Schlesner, and
  Brors]{gu2014circlize}
Gu, Z., Gu, L., Eils, R., Schlesner, M., and Brors, B. (2014).
\newblock {\it circlize} implements and enhances circular visualization in {R}.
\newblock {\em Bioinformatics\/}, {\bf 30}, 2811--2812.

\bibitem[Heng {\em et~al.}(2008)Heng, Painter, Elpek, Lukacs-Kornek, Mauermann,
  Turley, Koller, Kim, Wagers, Asinovski, {\em et~al.}]{immgen_paper}
Heng, T.~S., Painter, M.~W., Elpek, K., Lukacs-Kornek, V., Mauermann, N.,
  Turley, S.~J., Koller, D., Kim, F.~S., Wagers, A.~J., Asinovski, N., {\em
  et~al.} (2008).
\newblock The immunological genome project: networks of gene expression in
  immune cells.
\newblock {\em Nature Immunology\/}, {\bf 9}, 1091--1094.

\bibitem[Huang and Chen(2017)Huang and Chen]{huang2017learning}
Huang, F. and Chen, S. (2017).
\newblock Learning dynamic conditional {G}aussian graphical models.
\newblock {\em IEEE Transactions on Knowledge and Data Engineering\/}, {\bf
  30}, 703--716.

\bibitem[Ishwaran and Rao(2005)Ishwaran and Rao]{Ishwaran2005spikeslab}
Ishwaran, H. and Rao, J.~S. (2005).
\newblock Spike and slab variable selection: {F}requentist and {B}ayesian
  strategies.
\newblock {\em The Annals of Statistics\/}, {\bf 33}, 730--773.

\bibitem[Jones {\em et~al.}(2005)Jones, Carvalho, Dobra, Hans, Carter, and
  West]{jones2005experiments}
Jones, B., Carvalho, C., Dobra, A., Hans, C., Carter, C., and West, M. (2005).
\newblock Experiments in stochastic computation for high-dimensional graphical
  models.
\newblock {\em Statistical Science\/}, {\bf 20}, 388--400.

\bibitem[Kang and Deng(2020)Kang and Deng]{Xiaoning2020}
Kang, X. and Deng, X. (2020).
\newblock An improved modified {C}holesky decomposition approach for precision
  matrix estimation.
\newblock {\em Journal of Statistical Computation and Simulation\/}, {\bf 90},
  443--464.

\bibitem[Kastner(2019)Kastner]{KASTNER2019}
Kastner, G. (2019).
\newblock Sparse {B}ayesian time-varying covariance estimation in many
  dimensions.
\newblock {\em Journal of Econometrics\/}, {\bf 210}, 98--115.

\bibitem[Khare {\em et~al.}(2018)Khare, Rajaratnam, and
  Saha]{khare2018bayesian}
Khare, K., Rajaratnam, B., and Saha, A. (2018).
\newblock Bayesian inference for {G}aussian graphical models beyond
  decomposable graphs.
\newblock {\em Journal of the Royal Statistical Society: Series B: Statistical
  Methodology\/}, {\bf 80}, 727--747.

\bibitem[Khondker {\em et~al.}(2013)Khondker, Zhu, Chu, Lin, and
  Ibrahim]{khondker2013bayesian}
Khondker, Z.~S., Zhu, H., Chu, H., Lin, W., and Ibrahim, J.~G. (2013).
\newblock The {B}ayesian covariance lasso.
\newblock {\em Statistics and its Interface\/}, {\bf 6}, 243.

\bibitem[Koller and Friedman(2009)Koller and Friedman]{koller2009probabilistic}
Koller, D. and Friedman, N. (2009).
\newblock {\em Probabilistic graphical models: {P}rinciples and techniques\/}.
\newblock MIT press.

\bibitem[Ksheera~Sagar {\em et~al.}(2021)Ksheera~Sagar, Banerjee, Datta, and
  Bhadra]{KsheeraSagar2021precision}
Ksheera~Sagar, K.~N., Banerjee, S., Datta, J., and Bhadra, A. (2021).
\newblock Precision matrix estimation under the horseshoe-like prior-penalty
  dual.
\newblock arXiv:2104.10750.

\bibitem[Lauritzen(1996)Lauritzen]{lauritzen1996graphical}
Lauritzen, S.~L. (1996).
\newblock {\em Graphical models\/}.
\newblock Clarendon Press.

\bibitem[Lee {\em et~al.}(2013)Lee, Wang, {\em et~al.}]{ly6c}
Lee, P.~Y., Wang, J.-X., {\em et~al.} (2013).
\newblock Ly6 family proteins in neutrophil biology.
\newblock {\em Journal of Leukocyte Biology\/}, {\bf 94}, 585--594.

\bibitem[Lenkoski(2013)Lenkoski]{lenkoski2013direct}
Lenkoski, A. (2013).
\newblock A direct sampler for {G-W}ishart variates.
\newblock {\em Stat\/}, {\bf 2}, 119--128.

\bibitem[Levina {\em et~al.}(2008)Levina, Rothman, and
  Zhu]{levina2008sparsecov}
Levina, E., Rothman, A., and Zhu, J. (2008).
\newblock {Sparse estimation of large covariance matrices via a nested Lasso
  penalty}.
\newblock {\em The Annals of Applied Statistics\/}, {\bf 2}, 245--263.

\bibitem[Li {\em et~al.}(2019a)Li, Craig, and Bhadra]{li2019graphical}
Li, Y., Craig, B.~A., and Bhadra, A. (2019a).
\newblock The graphical horseshoe estimator for inverse covariance matrices.
\newblock {\em Journal of Computational and Graphical Statistics\/}, {\bf 28},
  747--757.

\bibitem[Li {\em et~al.}(2019b)Li, Mccormick, and Clark]{li2019bayesianB}
Li, Z., Mccormick, T., and Clark, S. (2019b).
\newblock Bayesian joint spike-and-slab graphical lasso.
\newblock In {\em International Conference on Machine Learning\/}, pages
  3877--3885. PMLR.

\bibitem[Liang {\em et~al.}(2001)Liang, Buckley, {\em et~al.}]{ms4a}
Liang, Y., Buckley, T.~R., {\em et~al.} (2001).
\newblock Structural organization of the human {MS4A} gene cluster on
  chromosome 11q12.
\newblock {\em Immunogenetics\/}, {\bf 53}, 357--368.

\bibitem[Liu {\em et~al.}(2009)Liu, Lafferty, and
  Wasserman]{liu2009nonparanormal}
Liu, H., Lafferty, J., and Wasserman, L. (2009).
\newblock The nonparanormal: Semiparametric estimation of high dimensional
  undirected graphs.
\newblock {\em Journal of Machine Learning Research\/}, {\bf 10}, 2295--2328.

\bibitem[Mazumder and Hastie(2012)Mazumder and Hastie]{mazumder2012graphical}
Mazumder, R. and Hastie, T. (2012).
\newblock The graphical lasso: New insights and alternatives.
\newblock {\em Electronic Journal of Statistics\/}, {\bf 6}, 2125--2149.

\bibitem[Meinshausen and B{\"u}hlmann(2006)Meinshausen and
  B{\"u}hlmann]{meinshausen2006high}
Meinshausen, N. and B{\"u}hlmann, P. (2006).
\newblock {High-dimensional graphs and variable selection with the Lasso}.
\newblock {\em The Annals of Statistics\/}, {\bf 34}, 1436--1462.

\bibitem[Mitra {\em et~al.}(2013)Mitra, M{\"u}ller, Liang, Yue, and
  Ji]{mitra2013bayesian}
Mitra, R., M{\"u}ller, P., Liang, S., Yue, L., and Ji, Y. (2013).
\newblock A {B}ayesian graphical model for chip-seq data on histone
  modifications.
\newblock {\em Journal of the American Statistical Association\/}, {\bf 108},
  69--80.

\bibitem[Mohammadi and Wit(2015)Mohammadi and Wit]{mohammadi2015bayesian}
Mohammadi, A. and Wit, E.~C. (2015).
\newblock Bayesian structure learning in sparse {G}aussian graphical models.
\newblock {\em Bayesian Analysis\/}, {\bf 10}, 109--138.

\bibitem[Mohammadi {\em et~al.}(2021)Mohammadi, Massam, and
  Letac]{mohammadi2021jasa}
Mohammadi, R., Massam, H., and Letac, G. (2021).
\newblock Accelerating {B}ayesian structure learning in sparse {G}aussian
  graphical models.
\newblock {\em Journal of the American Statistical Association\/}.
\newblock To appear.

\bibitem[M{\"u}ller {\em et~al.}(2004)M{\"u}ller, Parmigiani, Robert, and
  Rousseau]{muller04}
M{\"u}ller, P., Parmigiani, G., Robert, C., and Rousseau, J. (2004).
\newblock Optimal sample size for multiple testing: {T}he case of gene
  expression microarrays.
\newblock {\em Journal of the American Statistical Association\/}, {\bf 99},
  990--1001.

\bibitem[Neal(2000)Neal]{neal2000markov}
Neal, R.~M. (2000).
\newblock Markov chain sampling methods for {D}irichlet process mixture models.
\newblock {\em Journal of Computational and Graphical Statistics\/}, {\bf 9},
  249--265.

\bibitem[Painter {\em et~al.}(2011)Painter, Davis, Hardy, Mathis, Benoist,
  Consortium, {\em et~al.}]{micro1_train1}
Painter, M.~W., Davis, S., Hardy, R.~R., Mathis, D., Benoist, C., Consortium,
  I. G.~P., {\em et~al.} (2011).
\newblock {Transcriptomes of the B and T lineages compared by multiplatform
  microarray profiling}.
\newblock {\em The Journal of Immunology\/}, {\bf 186}, 3047--3057.

\bibitem[Pati {\em et~al.}(2014)Pati, Bhattacharya, Pillai, and
  Dunson]{pati2014}
Pati, D., Bhattacharya, A., Pillai, N.~S., and Dunson, D. (2014).
\newblock Posterior contraction in sparse {B}ayesian factor models for massive
  covariance matrices.
\newblock {\em The Annals of Statistics\/}, {\bf 42}, 1102--1130.

\bibitem[Peng {\em et~al.}(2009)Peng, Wang, Zhou, and Zhu]{peng2009partial}
Peng, J., Wang, P., Zhou, N., and Zhu, J. (2009).
\newblock Partial correlation estimation by joint sparse regression models.
\newblock {\em Journal of the American Statistical Association\/}, {\bf 104},
  735--746.

\bibitem[Pitt {\em et~al.}(2006)Pitt, Chan, and Kohn]{pitt2006efficient}
Pitt, M., Chan, D., and Kohn, R. (2006).
\newblock Efficient {B}ayesian inference for {G}aussian copula regression
  models.
\newblock {\em Biometrika\/}, {\bf 93}, 537--554.

\bibitem[Polson and Scott(2010)Polson and Scott]{polson2010shrink}
Polson, N.~G. and Scott, J.~G. (2010).
\newblock Shrink globally, act locally: Sparse {B}ayesian regularization and
  prediction.
\newblock {\em Bayesian Statistics\/}, {\bf 9}, 1--24.

\bibitem[Pourahmadi(2013)Pourahmadi]{pourahmadi2013high}
Pourahmadi, M. (2013).
\newblock {\em High-dimensional covariance estimation\/}.
\newblock John Wiley \& Sons.

\bibitem[Rothman {\em et~al.}(2008)Rothman, Bickel, Levina, Zhu, {\em
  et~al.}]{rothman2008sparse}
Rothman, A.~J., Bickel, P.~J., Levina, E., Zhu, J., {\em et~al.} (2008).
\newblock Sparse permutation invariant covariance estimation.
\newblock {\em Electronic Journal of Statistics\/}, {\bf 2}, 494--515.

\bibitem[Roverato(2002)Roverato]{roverato2002hyper}
Roverato, A. (2002).
\newblock Hyper inverse {W}ishart distribution for non-decomposable graphs and
  its application to {B}ayesian inference for {G}aussian graphical models.
\newblock {\em Scandinavian Journal of Statistics\/}, {\bf 29}, 391--411.

\bibitem[Sabnis {\em et~al.}(2016)Sabnis, Pati, Engelhardt, and
  Pillai]{sabnis2016divide}
Sabnis, G., Pati, D., Engelhardt, B., and Pillai, N. (2016).
\newblock A divide and conquer strategy for high dimensional {B}ayesian factor
  models.
\newblock {\em arXiv preprint arXiv:1612.02875\/}.

\bibitem[Scott and Berger(2010)Scott and Berger]{scott10}
Scott, J.~G. and Berger, J.~O. (2010).
\newblock {Bayes and empirical-Bayes multiplicity adjustment in the
  variable-selection problem}.
\newblock {\em The Annals of Statistics\/}, {\bf 38}, 2587--2619.

\bibitem[Shalizi(2009)Shalizi]{shalizi2009}
Shalizi, C.~R. (2009).
\newblock {Dynamics of Bayesian updating with dependent data and misspecified
  models}.
\newblock {\em Electronic Journal of Statistics\/}, {\bf 3}, 1039--1074.

\bibitem[Wang(2012)Wang]{wang2012bayesian}
Wang, H. (2012).
\newblock Bayesian graphical lasso models and efficient posterior computation.
\newblock {\em Bayesian Analysis\/}, {\bf 7}, 867--886.

\bibitem[West(1992)West]{west1992hyperparameter}
West, M. (1992).
\newblock {\em Hyperparameter estimation in {D}irichlet process mixture
  models\/}.
\newblock Duke University ISDS Discussion Paper\# 92-A03.

\bibitem[Witten {\em et~al.}(2011)Witten, Friedman, and Simon]{witten2011new}
Witten, D.~M., Friedman, J.~H., and Simon, N. (2011).
\newblock New insights and faster computations for the graphical lasso.
\newblock {\em Journal of Computational and Graphical Statistics\/}, {\bf 20},
  892--900.

\bibitem[Wolffe(2001)Wolffe]{histone}
Wolffe, A. (2001).
\newblock Histone genes.
\newblock In {\em Encyclopedia of Genetics\/}, pages 948--952. Academic Press,
  New York.

\bibitem[Yoshida and West(2010)Yoshida and West]{yoshida_west}
Yoshida, R. and West, M. (2010).
\newblock Bayesian learning in sparse graphical factor models via variational
  mean-field annealing.
\newblock {\em Journal of Machine Learning Research\/}, {\bf 11}, 1771--1798.

\bibitem[Yuan and Lin(2007)Yuan and Lin]{yuan2007model}
Yuan, M. and Lin, Y. (2007).
\newblock Model selection and estimation in the {G}aussian graphical model.
\newblock {\em Biometrika\/}, {\bf 94}, 19--35.

\bibitem[Zhang and Zou(2014)Zhang and Zou]{zhang2014}
Zhang, T. and Zou, H. (2014).
\newblock {Sparse precision matrix estimation via lasso penalized D-trace
  loss}.
\newblock {\em Biometrika\/}, {\bf 101}, 103--120.

\bibitem[Zhu {\em et~al.}(2014)Zhu, Khondker, Lu, and Ibrahim]{zhu2014}
Zhu, H., Khondker, Z., Lu, Z., and Ibrahim, J.~G. (2014).
\newblock Bayesian generalized low rank regression models for neuroimaging
  phenotypes and genetic markers.
\newblock {\em Journal of the American Statistical Association\/}, {\bf 109},
  977--990.

\end{thebibliography}


\begin{thebibliography}{}

\bibitem[Banerjee and Ghosal(2015)Banerjee and Ghosal]{banerjee2015bayesian}
Banerjee, S. and Ghosal, S. (2015).
\newblock Bayesian structure learning in graphical models.
\newblock {\em Journal of Multivariate Analysis\/}, {\bf 136}, 147--162.

\bibitem[Bhattacharya {\em et~al.}(2015)Bhattacharya, Pati, Pillai, and
  Dunson]{bhattacharya2015dirichlet}
Bhattacharya, A., Pati, D., Pillai, N.~S., and Dunson, D.~B. (2015).
\newblock Dirichlet-{L}aplace priors for optimal shrinkage.
\newblock {\em Journal of the American Statistical Association\/}, {\bf 110},
  1479--1490.

\bibitem[Castillo and van~der Vaart(2012)Castillo and van~der
  Vaart]{castillo2012needles}
Castillo, I. and van~der Vaart, A. (2012).
\newblock Needles and straw in a haystack: Posterior concentration for possibly
  sparse sequences.
\newblock {\em The Annals of Statistics\/}, {\bf 40}, 2069--2101.

\bibitem[Ghosal and van~der Vaart(2017)Ghosal and van~der Vaart]{ghosal_book}
Ghosal, S. and van~der Vaart, A. (2017).
\newblock {\em Fundamentals of nonparametric {B}ayesian inference\/}.
\newblock Cambridge Series in Statistical and Probabilistic Mathematics.
  Cambridge University Press.

\bibitem[Pati {\em et~al.}(2014)Pati, Bhattacharya, Pillai, and
  Dunson]{pati2014}
Pati, D., Bhattacharya, A., Pillai, N.~S., and Dunson, D. (2014).
\newblock Posterior contraction in sparse {B}ayesian factor models for massive
  covariance matrices.
\newblock {\em The Annals of Statistics\/}, {\bf 42}, 1102--1130.

\bibitem[Schloerke {\em et~al.}(2020)Schloerke, Cook, Larmarange, Briatte,
  Marbach, Thoen, Elberg, and Crowley]{ggally}
Schloerke, B., Cook, D., Larmarange, J., Briatte, F., Marbach, M., Thoen, E.,
  Elberg, A., and Crowley, J. (2020).
\newblock {\em {GGally: Extension to `ggplot2'}\/}.
\newblock R package version 2.0.0.

\bibitem[Vershynin(2012)Vershynin]{vershynin_book}
Vershynin, R. (2012).
\newblock {\em Introduction to the non-asymptotic analysis of random
  matrices\/}, page 210–268.
\newblock Cambridge University Press.

\bibitem[Yu(1997)Yu]{yu1997}
Yu, B. (1997).
\newblock Assouad, fano, and le cam.
\newblock In {\em Festschrift for Lucien Le Cam\/}, pages 423--435. Springer.

\end{thebibliography}


\clearpage\pagebreak\newpage
\newgeometry{textheight=9in, textwidth=6.5in,}
\pagestyle{fancy}
\fancyhf{}
\rhead{\bfseries\thepage}
\lhead{\bfseries SUPPLEMENTARY MATERIALS}

\baselineskip 20pt
\begin{center}
{\LARGE{Supplementary Materials for\\}} 
{\LARGE{\bf Bayesian Scalable Precision Factor\\
\vskip -9pt
Analysis 
for Massive Sparse \\ 
\vskip 9pt
Gaussian Graphical Models
}}
\end{center}

\setcounter{equation}{0}
\setcounter{page}{1}
\setcounter{table}{1}
\setcounter{figure}{0}
\setcounter{section}{0}
\numberwithin{table}{section}
\renewcommand{\theequation}{S.\arabic{equation}}
\renewcommand{\thesubsection}{S.\arabic{section}.\arabic{subsection}}
\renewcommand{\thesection}{S.\arabic{section}}
\renewcommand{\thepage}{S.\arabic{page}}
\renewcommand{\thetable}{S.\arabic{table}}
\renewcommand{\thefigure}{S.\arabic{figure}}
\baselineskip=15pt

\vspace{0cm}

\authors

\vskip 10mm
Supplementary materials discuss 
a general strategy for posterior computation under a broad class of generic priors, 
proofs of the theoretical results, 
some additional figures,
and an application to a NASDAQ-100 stock price dataset. 

\baselineskip=16pt
\newpage

\section{General Strategy for Posterior Computation} \label{sec: sm post computation}
In this section, we sketch out a general strategy for posterior inference for the precision factor model not limited to the prior specifications in Section \ref{sec: priors}.

\begin{description}
	\item[Sampling model:]  
	The data generating mechanism, we recall, is
	\vskip-5ex
	\bse 
		\by_{1:n} \simiid \mn_{d}(\bzero,\bOmega^{-1})~~~\text{ with }~~~ \bOmega=\bLambda\bLambda\trans+\bDelta, 
	\ese
	\vskip-2ex
	\noindent where $\bLambda$ is a $d\times q$ dimensional matrix with $q \leq d$ and $\bDelta = \diag(\delta_{1}^{2},\dots,\delta_{d}^{2})$. 
	\item[Generic priors:] We consider the following broad classes of hierarchical priors on $\bLambda$ and $\bDelta$ 
	with $\bxi$ and $\bzeta$ 
	being the respective associated hyperparameters
	\vskip-5ex
	\bse
		& \vect(\bLambda)\lvert \bxi \sim \mn_{dq}\{\bzero, g(\bxi)\}, 
		\quad \bxi\sim \Pi_{\bxi},
		\quad\quad \bDelta \lvert \bzeta \sim \Pi_{\bDelta \lvert \bzeta}(\bDelta \lvert \bzeta),\quad
		\bzeta \sim \Pi_{\bzeta}.
	\ese
	Varying choices of $g(\bxi), \Pi_{\bxi}, \Pi_{\bDelta \lvert \bzeta}, \Pi_{\bzeta}$ then produce a broad class of priors for the model parameters.
\end{description}

The Gibbs sampler then iterates through the following steps.
\begin{description}
	\item[Step 1. Sample the latent variables:]  Generate $\bu_{1:n} \simiid \mn_{q}(\bzero,\bP)$ with $\bP = (\bI_{q}+\bLambda\trans\bDelta^{-1}\bLambda)$ independently from $\by_{1:n}$ and let $\bv_{1:n} = \by_{1:n}+\bDelta^{-1}\bLambda \bP^{-1}\bu_{1:n}$. 
	\item[Step 2. Sample the rows of $\bLambda$:] We have $\bu_{i}=\sum_{r=1}^{d} \blambda_{r} v_{r,i}+\bvarepsilon_{i}$, 
	where $\blambda_{r} = (\lambda_{r,1},\dots,\lambda_{r,q})$ is the $r\th$ row of $\bLambda$ and $\bv_{i}=(v_{1,i},\dots,v_{d,i})\trans$. 
	Define $\bu_{i}^{(j)}=\bu_{i}-\sum_{r\neq j}\blambda_{r} v_{r,i}$. 
	Then $\bu_{i}^{(j)}=\blambda_{j} v_{j,i}+\bvarepsilon_{i}$. 
	Conditioned on $\bu_{i}^{(j)}$, $\bv_{i}$ {and the associated hyper-parameters}, $\blambda_{j}$'s can be updated sequentially $j=1,\dots,d$ from the following posterior distribution
	\[
	\blambda_{j} \sim \mn_{q}\{(\bD_{j}^{-1} + \norm{\bv^{(j)}}^{2} \bI_q )^{-1} \bw_{j},  (\bD_{j}^{-1} + \norm{\bv^{(j)}}^{2} \bI_q )^{-1}\},
	\]
	where $\bD_{j}$ is the prior covariance matrix of $\blambda_{j}$, 
	$\bv^{(j)}=(v_{j,1},\dots,v_{j,n})\trans$ 
	and $\bw_{j}=\sum_{i=1}^n v_{j,i} \bu_{i}^{(j)}$.
	\item[Step 3. Sample $\bDelta$:] Sample $\bDelta$ from $\Pi_{\bDelta\lvert \bv^{(1:d)}, \bzeta}$.
	\item[Step 4. Sample the hyperparameters of $\bLambda$:] Sample $\bxi$ from  $\bxi\sim \Pi_{\bxi\lvert \bLambda}$.
	\item[Step 5. Sample the hyperparameters of $\bDelta$:] Sample $\bzeta$ from  $\bzeta\sim \Pi_{\bzeta\lvert \bDelta}$.
\end{description}

For non-conjugate priors, appropriate Metropolis-within-Gibbs type MCMC algorithms can be employed in Steps 2, 3, 4 or 5.

\section{Proofs of Theoretical Results} \label{sec: sm proofs}

\textbf{Notations:} 
In what follows, for an operation `$*$', $\overline{a*b}$ is sometimes used for $(a*b)$.
Also, for two sequences $a_{n},b_{n} \geq 0$, $a_{n}\precsim b_{n}$ implies that $a_{n}\leq C b_{n}$ for some constant $C>0$;
$\an\asymp b_{n}$ implies that $0< \liminf \abs{{\an}/{b_{n}}} \leq \limsup \abs{{\an}/{b_{n}}}<\infty$.
$\abs{\bA}$ denotes the determinant of the square matrix $\bA$.
For a set $S$, $\abs{S}$ denotes its cardinality.
Let $\norm{\bx}$ is the Euclidean norm of a vector $\bx$.
We denote the Kullback-Leibler (KL) divergence between two mean zero Gaussian distributions with precision matrices $\bOmega$ and $\bOmega'$ by $\kl{\bOmega}{\bOmega'}$.
We borrow the following result from \citelatex{pati2014} (Lemma 1.1. from the supplement).
{For brevity of notations, we reuse $C$, $\wt{C}$, $C''$, etc. in the proofs to denote constants and their values may not be the same throughout the same proof.
	Nevertheless, we were careful to make sure that these quantities are indeed constants.}
\begin{lemma}
	\label{lemma:pati}
	For any two matrices $\bA$ and $\bB$,
	\begin{enumerate}[label=({{\roman*}})]
		\item $\smin{\bA} \fnorm{\bB}\leq \fnorm{\bA \bB} \leq \specnorm{\bA} \fnorm{\bB}$.
		\item $\smin{\bA} \specnorm{\bB}\leq \specnorm{\bA \bB} \leq \specnorm{\bA} \specnorm{\bB}$.
		\item \label{lemma:pati3} $\smin{\bA} \smin{\bB}\leq \smin{\bA \bB} \leq \specnorm{\bA} \smin{\bB}$.
	\end{enumerate}
\end{lemma}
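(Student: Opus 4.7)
The plan is to treat each of the three parts in turn, using only the variational characterization of the spectral norm, $\specnorm{\bA} = \sup_{\|\bx\|=1}\|\bA\bx\|$, together with the two-sided bound $\smin{\bA}\|\bx\| \le \|\bA\bx\| \le \specnorm{\bA}\|\bx\|$ valid for every $\bx$ in the domain of $\bA$. With these two ingredients each inequality reduces to a short computation, so the proof is essentially bookkeeping; the only thing to be careful about is dimensional compatibility of $\bA$ and $\bB$, because $\smin{\cdot}$ must be interpreted consistently with the paper's definition of $s_{\min}^{2}(\cdot)$ as the smallest eigenvalue of $\bA\trans\bA$.

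For part (i), I would write $\bA\bB$ column by column: $\fnorm{\bA\bB}^{2} = \sum_{j} \|\bA\bb_{j}\|^{2}$, where $\bb_{j}$ is the $j$th column of $\bB$. Applying the pointwise bound $\smin{\bA}^{2}\|\bb_{j}\|^{2} \le \|\bA\bb_{j}\|^{2} \le \specnorm{\bA}^{2}\|\bb_{j}\|^{2}$ and summing over $j$, then using $\fnorm{\bB}^{2} = \sum_{j}\|\bb_{j}\|^{2}$, yields both inequalities simultaneously.

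For part (ii), the upper bound is just submultiplicativity: $\|\bA\bB\bx\| \le \specnorm{\bA}\|\bB\bx\| \le \specnorm{\bA}\specnorm{\bB}\|\bx\|$, so taking the supremum over unit $\bx$ gives $\specnorm{\bA\bB} \le \specnorm{\bA}\specnorm{\bB}$. For the lower bound, pick a unit vector $\bx^{\star}$ at which $\|\bB\bx^{\star}\| = \specnorm{\bB}$; then
\[
\specnorm{\bA\bB} \;\ge\; \|\bA\bB\bx^{\star}\| \;\ge\; \smin{\bA}\,\|\bB\bx^{\star}\| \;=\; \smin{\bA}\,\specnorm{\bB}.
\]

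For part (iii), the lower bound follows by chaining the pointwise bounds: for every unit $\bx$, $\|\bA\bB\bx\| \ge \smin{\bA}\|\bB\bx\| \ge \smin{\bA}\smin{\bB}$, and infimizing over $\bx$ gives $\smin{\bA\bB} \ge \smin{\bA}\smin{\bB}$. For the upper bound, let $\bx^{\star\star}$ be a unit vector with $\|\bB\bx^{\star\star}\| = \smin{\bB}$; then $\smin{\bA\bB} \le \|\bA\bB\bx^{\star\star}\| \le \specnorm{\bA}\|\bB\bx^{\star\star}\| = \specnorm{\bA}\smin{\bB}$. The main (and indeed only) obstacle is to ensure, in the lower-bound arguments of (ii) and (iii), that $\smin{\bA}$ coincides with $\inf_{\|\bx\|=1}\|\bA\bx\|$ on the relevant subspace — which holds under the convention of the paper since $\smin{\bA}^{2}$ is the smallest eigenvalue of $\bA\trans\bA$ and the infimum is attained at the corresponding eigenvector of $\bA\trans\bA$.
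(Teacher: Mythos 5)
Your proof is correct and complete. Note that the paper itself does not prove this lemma at all: it simply imports it as Lemma~1.1 from the supplement of \citetlatex{pati2014}, so there is no in-paper argument to compare against. Your variational argument --- column-wise summation for the Frobenius bounds, and the two-sided pointwise estimate $\smin{\bA}\norm{\bx}\leq\norm{\bA\bx}\leq\specnorm{\bA}\norm{\bx}$ combined with attainment of the relevant suprema/infima at eigenvectors of $\bB\trans\bB$ for the spectral and $s_{\min}$ bounds --- is the standard one, and it handles the rectangular case correctly under the paper's convention $s_{\min}^{2}(\bA)=\lambda_{\min}(\bA\trans\bA)$ (in particular the lower bounds degenerate gracefully to $0$ when $\bA$ has a nontrivial kernel).
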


\vskip 5pt
\begin{proof}[{\bf Theorem \ref{thm: posterior concentration}}]
	Let $\Pn=\Pnon \cup \Pntw$ be an arbitrary partition.
	Note that,
	\bse
	\pin \left(\specnorm{\bOmegan-\bOmegann}>M_{n}\epsilon_{n}\lvert \by_{1:n} \right) \leq \pin \left(\bOmegan\in \Pnon:\specnorm{\bOmegan-\bOmegann}>M_{n}\epsilon_{n}\lvert \by_{1:n} \right)+ \pin\left( \Pntw \lvert \by_{1:n} \right).
	\ese
	To prove the theorem, we show that, as $n\to\infty$, the posterior distribution on $\Pnon$ concentrates around $\bOmegann$ 
	while the remaining mass assigned to $\Pntw$ diminishes to zero.
	We adopt a variation of Theorem 8.22 by \citetlatex{ghosal_book} on posterior contraction rates.
	Define
	$B_{n,0}(\bOmegann,\epsilon)=\left\{\bOmegan \in \Pn: \kl{\bOmegann}{\bOmegan}\leq n\epsilon^{2} \right\} $.
	We now formally state the result tailored towards our application.	
	
	\begin{theorem}
		\label{th:ghosal}
		Let $\Pn$ be a class of distributions prametrized by $\bOmegan$, $\pP_{\bOmegan}$ the corresponding distribution and $\bOmegann$ be the true data-generating value of the parameter.
		Let $\en$ be a metric on $\Pn$ and $\Pnon\subset\Pn$. 
		For constants $\tau_{n}$ with $n\tau_{n}^{2}\geq1$ and every sufficiently large $j\in \nN$, assume the following conditions hold. 
		\begin{enumerate}[label=({{\roman*}})]
			\item\label{cond1} For some $C>0$, $\pin\left\{B_{n,0}(\bOmegann,\tau_{n})\right\}\geq e^{-Cn\tau_{n}^{2}}$;
			\item\label{cond2} Define the set $G_{j,n}=\left\{\bOmegan\in \Pnon: j\tau_{n}< \en \left( \bOmegan,\bOmegann\right)\leq 2j\tau_{n} \right\}$. 
			There exists tests $\phi_{n}$ such that, for some $K>0$,
			\bse
			\lim_{n\to \infty} \eE_{\bOmegann} \phi_{n}= 0;\qquad \sup_{\bOmegan\in G_{j,n}} \eE_{\bOmegan} (1-\phi_{n})\leq \exp(- Knj^{2}\tau_{n}^{2} ).
			\ese
		\end{enumerate}			
		Then, $\pin \left\{\bOmegan\in\Pnon:  \en\left(\bOmegan, \bOmegann\right) >M_{n}\tau_{n}\lvert \by_{1:n} \right\} \to 0$ in $\pP_{\bOmegann}$-probability for any $M_{n}\to\infty$. 	
	\end{theorem}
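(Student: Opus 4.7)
The plan is to follow the classical Schwartz--Ghosal--van der Vaart evidence-ratio template. Writing
\bse
\pin(A_n \mid \by_{1:n}) = N_n / D_n
\ese
with $A_n = \{\bOmegan \in \Pnon : \en(\bOmegan,\bOmegann) > M_n\tau_n\}$, $N_n = \int_{A_n}\prod_{i=1}^n (p_{\bOmegan}/p_{\bOmegann})(\by_i)\, d\pin(\bOmegan)$, and $D_n$ the analogous integral over all of $\Pn$, I will lower-bound $D_n$ on a $\pP_{\bOmegann}$-high-probability event $E_n$ and upper-bound $\eE_{\bOmegann} N_n$ using the supplied tests on the shells $G_{j,n}$; combining the two yields the claim, since $\pin(A_n \mid \by_{1:n}) \in [0,1]$ so convergence in expectation implies convergence in $\pP_{\bOmegann}$-probability.

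For the denominator, apply Jensen's inequality to $\prod_i (p_{\bOmegan}/p_{\bOmegann})(\by_i)$ restricted to $B_{n,0}(\bOmegann,\tau_n)$, combined with a Chebyshev-type bound on the centred log-likelihood ratios. This is a standard lemma which, under (i), yields on an event $E_n$ with $\pP_{\bOmegann}(E_n) \to 1$ the bound $D_n \geq \pin\{B_{n,0}(\bOmegann,\tau_n)\}\, e^{-2n\tau_n^2} \geq e^{-(C+2)n\tau_n^2}$.

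For the numerator, write $A_n = \bigcup_{j\geq \lceil M_n \rceil} G_{j,n}$ and decompose
\bse
\eE_{\bOmegann}\pin(A_n\mid \by_{1:n}) \leq \eE_{\bOmegann}\phi_n + \pP_{\bOmegann}(E_n^c) + \eE_{\bOmegann}\left[(1-\phi_n)\,\pin(A_n\mid \by_{1:n})\,\mathbf{1}_{E_n}\right].
\ese
The first two terms vanish by (ii) and the denominator step. On $E_n$, the third term is bounded, using the lower bound on $D_n$, by $e^{(C+2)n\tau_n^2}\sum_{j\geq \lceil M_n \rceil}\eE_{\bOmegann}\!\int_{G_{j,n}}(1-\phi_n)\prod_i (p_{\bOmegan}/p_{\bOmegann})(\by_i)\,d\pin(\bOmegan)$. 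By Fubini together with the change of measure $\eE_{\bOmegann}[(1-\phi_n)\prod_i p_{\bOmegan}/p_{\bOmegann}] = \eE_{\bOmegan}(1-\phi_n)$, each summand is at most $\pin(G_{j,n})\,e^{-Knj^2\tau_n^2} \leq e^{-Knj^2\tau_n^2}$. Summing the geometric tail over $j\geq M_n$ gives a bound of order $e^{-(KM_n^2 - C - 2)\,n\tau_n^2}$, which vanishes because $M_n\to\infty$ and $n\tau_n^2 \geq 1$.

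The main technical subtlety is the denominator step: the stated condition (i) controls only the KL divergence on $B_{n,0}$, whereas the standard evidence lower bound usually asks additionally for a bound on the variance of the log-likelihood ratio. In the iid Gaussian setting considered here this gap is bridged at no extra cost, since for two centred Gaussians the variance of a single-observation log-density ratio is a smooth function of $\bOmegann^{-1/2}\bOmegan\bOmegann^{-1/2} - \bI$ whose size is dominated by the KL divergence under the compact-eigenvalue hypotheses \ref{ass3}--\ref{ass2}. Once this auxiliary variance bound is in hand, every remaining step is routine bookkeeping.
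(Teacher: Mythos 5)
Your argument is correct and is, in substance, the same proof: the paper's own proof of this theorem is a two-line pointer to Theorems 8.19--8.22 of Ghosal and van der Vaart, and what you have written out is exactly the evidence-ratio argument by which those theorems are established --- lower-bound the denominator via the prior mass on the Kullback--Leibler neighbourhood, upper-bound the numerator by Fubini and the change of measure $\eE_{\bOmegann}\big[(1-\phi_n)\prod_i p_{\bOmegan}/p_{\bOmegann}\big]=\eE_{\bOmegan}(1-\phi_n)$ on the shells $G_{j,n}$, and sum the geometric tail over $j\geq M_n$. Where you add genuine content is the denominator step: you correctly note that condition (i) controls only $\kl{\bOmegann}{\bOmegan}$ on $B_{n,0}$, whereas the cited Theorem 8.19(i) requires prior mass on the smaller neighbourhood that also bounds the variance of the log-likelihood ratio; the paper's citation elides this and only supplies the missing ingredient later, in the proof of Theorem \ref{th:remainingmass}, where it computes $\klv{\bOmegann}{\bOmegan}=\fnorm{\bOmegann-\bOmegan}^{2}/(2\delmin^{4})$ and passes to $B_{n,2}$. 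Your observation that, for centred Gaussians with eigenvalues confined by \ref{ass3}--\ref{ass2}, the variance term is dominated by the same Frobenius-norm quantity that controls the KL divergence is precisely that fix, so the two routes coincide. One caveat: your bound $D_n\geq \pin\{B_{n,0}(\bOmegann,\tau_n)\}e^{-2n\tau_n^2}$ presumes the standard per-observation normalisation (total KL over $n$ observations of order $n\tau_n^2$); taken literally, the paper defines $B_{n,0}(\bOmegann,\tau_n)$ by the single-observation bound $\kl{\bOmegann}{\bOmegan}\leq n\tau_n^2$, under which Jensen plus Chebyshev would only yield $D_n\gtrsim e^{-Cn^2\tau_n^2}$, so you should state explicitly that you are reading the neighbourhood in the Ghosal--van der Vaart normalisation.
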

	We define 
	$\Pnon=\{\bOmegan: \specnorm{\bOmegan}  <  C''(\qnn\sn)^{4}\log(\dn\qn)\}$, $\Pntw=\Pn \backslash \Pnon $, $\en\left(\bOmegan, \bOmegann\right)= \frac{\specnorm{\bOmegan - \bOmegann}}{C''(\qnn\sn)^{4}\log(\dn\qn)}$  and $n \tau_{n}^{2}=\qnn\sn\log(\dn\qn)$ for some large enough constant $C''>0$.	
	We verify \ref{cond1} in Lemma \ref{lemma:kl_support} 
	and show the existence of a sequence of test functions satisfying \ref{cond2} in Lemma \ref{lemma:testfn}.
	Hence, 
	\bse
	\pin \left\{\bOmegan\in\Pnon:  \en\left(\bOmegan, \bOmegann\right) > M_{n}\tau_{n}\lvert \by_{1:n} \right\} \to 0 ~ \text{in}~\pP_{\bOmegann}\text{-probability for every}~M_{n}\to\infty.
	\ese 
	Subsequently applying the dominated convergence theorem (DCT), we get $\lim_{n\to \infty} \eE_{\bOmegann} \pin (\bOmegan\in\Pnon:  \specnorm{\bOmegan-\bOmegann}>M_{n}\tau_{n}\lvert \by_{1:n} ) = 0$. 
	To conclude the proof, we show that the remaining mass assigned to $\Pntw$ goes to 0 in the following theorem.
	
	\begin{theorem}
		\label{th:remainingmass}
		$\lim_{n\to \infty} \eE_{\bOmegann} \pin \left(\Pntw \lvert \by_{1:n} \right) = 0$.
	\end{theorem}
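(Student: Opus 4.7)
The plan is to use the standard evidence-based decomposition of the posterior mass. By Bayes' rule,
\[
\pin(\Pntw \mid \by_{1:n}) = \frac{N_n}{D_n}, \quad N_n = \int_{\Pntw} \prod_{i=1}^n \frac{p_{\bOmegan}(\by_i)}{p_{\bOmegann}(\by_i)}\, d\pin(\bOmegan), \quad D_n = \int_{\Pn} \prod_{i=1}^n \frac{p_{\bOmegan}(\by_i)}{p_{\bOmegann}(\by_i)}\, d\pin(\bOmegan).
\]
I will (i) lower-bound $D_n$ using the KL-support condition already established in Lemma \ref{lemma:kl_support}, (ii) upper-bound $\eE_{\bOmegann} N_n$ by the prior mass $\pin(\Pntw)$ via Fubini, and (iii) show that this prior mass decays exponentially faster than the denominator's lower bound.

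For step (i), recall that Lemma \ref{lemma:kl_support} gives $\pin\{B_{n,0}(\bOmegann,\tau_n)\} \geq e^{-C n \tau_n^2}$ with $n\tau_n^2 = \qnn\sn\log(\dn\qn)$. A standard evidence lemma (e.g.\ Lemma 8.10 of \citetlatex{ghosal_book}) then implies that, on an event with $\pP_{\bOmegann}$-probability tending to one,
\[
D_n \geq e^{-(C+1)n\tau_n^2}\,\pin\{B_{n,0}(\bOmegann,\tau_n)\} \geq e^{-(2C+1)n\tau_n^2}.
\]
For step (ii), since $\eE_{\bOmegann}\prod_i p_{\bOmegan}(\by_i)/p_{\bOmegann}(\by_i) = 1$ for every $\bOmegan$, Fubini yields $\eE_{\bOmegann} N_n = \pin(\Pntw)$, and hence by Markov's inequality it suffices to show $\pin(\Pntw) = o\bigl(e^{-(2C+2)n\tau_n^2}\bigr)$.

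Step (iii) is the substantive part. Since $\bDelta_n = \delta_n^2 \bI$ with $\delta_n^2 \in [\delta_{\min}^2,\delta_{\max}^2]$, we have $\specnorm{\bOmegan} \leq \specnorm{\bLambdan}^2 + \delta_{\max}^2 \leq \fnorm{\bLambdan}^2 + \delta_{\max}^2$. Consequently, for $n$ large,
\[
\Pntw \subseteq \bigl\{ \fnorm{\bLambdan}^2 \geq \tfrac{1}{2} C''(\qnn\sn)^4 \log(\dn\qn)\bigr\},
\]
so it suffices to control the Frobenius norm under the $\mathrm{DL}(\an,\bnn)$ prior with $\an = 1/(\dn\qn)$ and $\bnn = \log^{3/2}(\dn\qn)$. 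I would combine the hierarchical representation $\lambda_{j,h}\mid \psi_{j,h},\phi_{j,h},\tau \sim \mn(0,\psi_{j,h}\phi_{j,h}^2\tau^2)$ with the tail bounds for the DL prior derived in \citet{bhattacharya2015dirichlet} (and used similarly in \citetlatex{pati2014}). Conditional on the scales, $\fnorm{\bLambdan}^2$ is a weighted chi-squared sum, so a subgaussian bound conditional on $(\bpsi,\bphi,\tau)$ combined with the moment bound $\eE(\tau^2)\lesssim b_n^2$ from the $\Ga(\dn\qn \an, \bnn)$ prior on $\tau$ gives, via a Chernoff argument over a grid in $\tau$, an exponential tail $\pin\{\fnorm{\bLambdan}^2 > t\} \leq \exp(-c\sqrt{t}/b_n)$ for $t$ in the relevant regime. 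Taking $t \asymp (\qnn\sn)^4 \log(\dn\qn)$ and using $b_n = \log^{3/2}(\dn\qn)$ yields $\pin(\Pntw) \leq \exp\{-c'(\qnn\sn)^2\sqrt{\log(\dn\qn)}\}$, which by assumption \ref{ass1} dominates $e^{-(2C+2)n\tau_n^2}$ for $C''$ chosen sufficiently large.

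The main obstacle will be carrying out step (iii) cleanly: the DL prior does not have readily available closed-form tail bounds on the Frobenius norm of a matrix of loadings, so care is needed in integrating out the local scales $\psi_{j,h}$ and $\phi_{j,h}$ and the global scale $\tau$, and in choosing $C''$ large enough that the resulting exponent beats the $(2C+1)n\tau_n^2 = (2C+1)\qnn\sn\log(\dn\qn)$ lost in the denominator bound. Once this tail bound is in place, combining (i)--(iii) with the DCT delivers $\lim_{n\to\infty} \eE_{\bOmegann}\pin(\Pntw\mid\by_{1:n}) = 0$.
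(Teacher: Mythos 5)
Your overall strategy is the same as the paper's: both arguments reduce to showing that the prior mass $\pin(\Pntw)$ is exponentially negligible compared with the prior mass of a Kullback--Leibler neighborhood of $\bOmegann$, and both bound $\specnorm{\bOmegan}$ on $\Pntw$ through $\fnorm{\bLambdan}$ and a tail bound for the Dirichlet--Laplace prior. You unpack the evidence-lower-bound/Fubini/Markov argument by hand where the paper simply invokes Theorem 8.20 of \citetlatex{ghosal_book}, which packages exactly this. Two points, however, need repair.

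First, a minor one: the evidence lower bound you cite (Lemma 8.10 of \citetlatex{ghosal_book}) requires control of the \emph{second} moment of the log-likelihood ratio, i.e.\ the neighborhood $B_{n,2}(\bOmegann,\epsilon)=\{\bOmegan: \kl{\bOmegann}{\bOmegan}\leq n\epsilon^{2},\ \klv{\bOmegann}{\bOmegan}\leq n\epsilon^{2}\}$, not just the KL ball $B_{n,0}$. This is why the paper introduces $B_{n,2}$ and observes that $\klv{\bOmegann}{\bOmegan}$ is also bounded by a multiple of $\fnorm{\bOmegann-\bOmegan}^{2}$, so the prior-mass lower bound of Lemma \ref{lemma:kl_support} (and Corollary \ref{cor:arbit_tau}) carries over. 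Your step (i) should be amended accordingly; the fix is routine.

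Second, and more seriously, the tail bound you posit in step (iii), $\pin\{\fnorm{\bLambdan}^{2}>t\}\leq\exp(-c\sqrt{t}/\bnn)$, has the scale parameter in the wrong place, and with it the comparison fails. The correct route (the paper's) is $\fnorm{\bLambdan}\leq\norm{\vect(\bLambdan)}_{\ell_{1}}=(2/\bnn)\norm{\vect(\bLambdan^{*})}_{\ell_{1}}$ together with the standardized-DL tail bound of \citetlatex[Lemma 7.4]{pati2014}, which gives $\pin(\fnorm{\bLambdan}\geq \tn)\leq 2e^{-C\sqrt{\bnn \tn}}$ --- the $\bnn$ \emph{multiplies} inside the square root. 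With $\tn\asymp\sqrt{C''}(\qnn\sn)^{2}\log^{1/2}(\dn\qn)$ and $\bnn=\log^{3/2}(\dn\qn)$ this yields an exponent of order $(C'')^{1/4}\qnn\sn\log(\dn\qn)$, i.e.\ a constant multiple of $n\tau_{n}^{2}$ that can be made to dominate by taking $C''$ large. Your version gives an exponent of order $\sqrt{t}/\bnn\asymp(\qnn\sn)^{2}/\log(\dn\qn)$ (note also that this does not match the $(\qnn\sn)^{2}\sqrt{\log(\dn\qn)}$ you then write down), and since \ref{ass1} only bounds $\qnn$ and $\sn$ from \emph{above}, the ratio $(\qnn\sn)^{2}/\log(\dn\qn)$ divided by $\qnn\sn\log(\dn\qn)$ can tend to zero (e.g.\ $\qnn,\sn=O(1)$ with $\dn$ growing exponentially), so no choice of the constant $C''$ rescues the comparison. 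The conditional-subgaussian/Chernoff construction you sketch is also considerably more delicate than needed; the $\ell_{1}$ bound on the standardized coordinates is the clean way through.
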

	\vskip-15pt
	
	\vskip -10pt
\end{proof}

\begin{proof}[{\bf Theorem \ref{th:ghosal}}]
	The theorem closely resembles Theorem 8.22 from \citetlatex{ghosal_book}.
	In the discussion following equation (8.22) in the book, 
	the authors noted that for the iid case, simpler theorems are obtained by using an absolute lower bound on the prior mass 
	and by replacing the local entropy by the global entropy. 
	In particular, \ref{cond1} implies Theorem 8.19(i) in the book.
	Similarly, \ref{cond2} is the same condition in Theorem 8.20 and hence the proof.
\end{proof}

\begin{lemma} (Kullback-Leibler upper bound)
	\label{lemma:kl_bound}
	Let $\bOmegan$ and $\bOmegann$ be $\dn\times \dn$ order positive definite matrices. 
	Then,
	\bse
	2\kl{\bOmegann}{\bOmegan}&=-\log \abs{\bOmegann^{-1}\bOmegan} +\trace(\bOmegann^{-1}\bOmegan-\bI_{\dn}) 
	\leq \fnorm{\bOmegann-\bOmegan}^{2}\specnorm{\bOmegann} /2\delmin^{6}.
	\ese
\end{lemma}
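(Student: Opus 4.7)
The plan is to establish the equality by direct computation from the Gaussian density and then prove the inequality via a careful scalar bound on $f(x):= x-\log(1+x)$, with the constants tuned to reproduce the stated $\specnorm{\bOmegann}$ and $\delmin^{6}$ factors. For the equality, I would use $\log p_{\bOmegann}(\by)-\log p_{\bOmegan}(\by)=\tfrac{1}{2}\log|\bOmegann|-\tfrac{1}{2}\log|\bOmegan|-\tfrac{1}{2}\by\trans(\bOmegann-\bOmegan)\by$ and take expectation under $\by\sim\mn_{\dn}(\bzero,\bOmegann^{-1})$, invoking $\eE_{\bOmegann}\by\trans\bOmegan\by=\trace(\bOmegann^{-1}\bOmegan)$, which yields $2\kl{\bOmegann}{\bOmegan}=-\log\abs{\bOmegann^{-1}\bOmegan}+\trace(\bOmegann^{-1}\bOmegan-\bI_{\dn})$.

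For the inequality, let $\mu_{1},\dots,\mu_{\dn}$ denote the eigenvalues of the symmetric matrix $\bA:=\bOmegann^{-1/2}(\bOmegan-\bOmegann)\bOmegann^{-1/2}$. Since $\bI_{\dn}+\bA=\bOmegann^{-1/2}\bOmegan\bOmegann^{-1/2}$ is similar to $\bOmegann^{-1}\bOmegan$, the quantities $1+\mu_{i}$ are exactly the eigenvalues of $\bOmegann^{-1}\bOmegan$, and hence $-\log|\bOmegann^{-1}\bOmegan|+\trace(\bOmegann^{-1}\bOmegan-\bI_{\dn})=\sum_{i}f(\mu_{i})$. Under the model structure and condition \ref{ass2}, both $\bOmegan$ and $\bOmegann$ satisfy $\lambda_{\min}(\cdot)\geq\delmin^{2}$, and from the operator inequality $\bOmegann^{-1/2}\bOmegan\bOmegann^{-1/2}\succeq \lambda_{\min}(\bOmegan)\,\bOmegann^{-1}$ I would deduce $1+\mu_{i}\geq \lambda_{\min}(\bOmegan)\cdot\lambda_{\min}(\bOmegann^{-1})\geq\delmin^{2}/\specnorm{\bOmegann}=:c\leq 1$.

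The pivotal scalar step is the bound $f(x)\leq x^{2}/(2c)$ valid for every $x>-1$ with $1+x\geq c$ and $c\leq 1$. I would derive it from the integral representation $f(x)=\int_{0}^{x}t/(1+t)\,dt$: on $x\geq 0$ it yields $f(x)\leq x^{2}/2$, while on $x\in(-1,0)$, using $1/(1-s)\leq 1/(1+x)$ for $s\in(0,|x|)$, it yields $f(x)\leq x^{2}/\{2(1+x)\}$, both dominated by $x^{2}/(2c)$ when $1+x\geq c\leq 1$. Applying this coordinate-wise and then Lemma \ref{lemma:pati} gives $\sum_{i}f(\mu_{i})\leq (2c)^{-1}\fnorm{\bA}^{2}\leq (2c)^{-1}\specnorm{\bOmegann^{-1}}^{2}\fnorm{\bOmegan-\bOmegann}^{2}\leq \fnorm{\bOmegan-\bOmegann}^{2}/(2c\,\delmin^{4})$. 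Substituting $c=\delmin^{2}/\specnorm{\bOmegann}$ produces exactly $\specnorm{\bOmegann}\fnorm{\bOmegan-\bOmegann}^{2}/(2\delmin^{6})$.

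The main obstacle is the scalar inequality: the naive convexity bound $f(x)\leq x^{2}/2$ fails on $(-1,0)$, so one genuinely needs the sharper $f(x)\leq x^{2}/\{2(1+x)\}$ there. It is precisely the factor $(1+x)^{-1}$, combined with the structural lower bound $1+\mu_{i}\geq\delmin^{2}/\specnorm{\bOmegann}$, that supplies the single power of $\specnorm{\bOmegann}$ and the extra $\delmin^{-2}$ appearing in the final bound beyond the naive second-order estimate $\fnorm{\bA}^{2}/(2\delmin^{4})$; getting these exponents right is the whole game.
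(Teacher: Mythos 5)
Your proposal is correct and follows essentially the same route as the paper's proof: diagonalize via $\bOmegann^{-1/2}\bOmegan\bOmegann^{-1/2}$, lower-bound its eigenvalues by $\delmin^{2}/\specnorm{\bOmegann}$ using the model structure, apply the scalar bound $x-1-\log x\leq \beta(x-1)^{2}/2$ for $x\geq 1/\beta$ (your $f(x)\leq x^{2}/(2c)$ with $c=1/\beta$ is the same inequality, which the paper states as $h_{\beta}\geq 0$ without your integral-representation derivation), and finish with the Frobenius/spectral norm bound from Lemma \ref{lemma:pati}. The only difference is presentational: you supply an explicit proof of the scalar inequality that the paper merely asserts.
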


\begin{proof}
	Let $\bH=\bOmegann ^{-\half} \bOmegan \bOmegann^{-\half}$. 
	Letting $\psi_{1},\dots,\psi_{\dn}$ be the eigenvalues of $\bH$, we note that 
	\be
	2\kl{\bOmegann }{\bOmegan} =\sum_{j=1}^{\dn} \left\{(\psi_{j}-1)-\log\psi_{j} \right\}. \label{eq:klbound1}
	\ee
	\noindent Consider the function $h_{\beta}(x)=\log x -(x-1)+\beta(x-1)^{2}/2$ on $(0,\infty)$ for $\beta>1$. 
	Note that $h_{\beta}(x)\geq 0$ for all $x\geq 1/\beta$. 
	From \ref{ass2} and using Lemma \ref{lemma:pati},
	$\psi_{j}\geq\smin{\bH}\geq \smin{\bOmegan} \smin{\bOmegann^{-1}}= \delmin^{2} /\specnorm{\bOmegann}
	$. 
	Noting that $\delmin^{2} /\specnorm{\bOmegann}<1$, we set $\beta=\specnorm{\bOmegann}/\delmin^{2}$. 
	Therefore, 
	\be
	\sum_{j=1}^{\dn} \left\{\log\psi_{j} - (\psi_{j}-1) \right\} \geq -\frac{\specnorm{\bOmegann}}{2\delmin^{2}} \sum_{j=1}^{\dn}(\psi_{j}-1)^{2}=-\frac{\specnorm{\bOmegann}}{2\delmin^{2}} \fnorm{\bH-\bI_{\dn}}^{2}. \label{eq:klbound2}
	\ee
	\noindent Again, using Lemma \ref{lemma:pati} and \ref{ass2}, we have
	\be
	\fnorm{\bH-\bI_{\dn}} \leq \fnorm{\bOmegann-\bOmegan}\specnorm{\bOmegann^{-1}}\leq \fnorm{\bOmegann-\bOmegan}/\delmin^{2}. \label{eq:klbound3}
	\ee
	\noindent Combing \eqref{eq:klbound1}-\eqref{eq:klbound3}, we conclude the lemma.
\end{proof}

\vskip 5pt
\begin{lemma}
	Define the set $B_{n,0}(\bOmegann,\tau)=\{\bOmegan \in \Pn: \kl{\bOmegann }{\bOmegan}\leq n\tau^{2} \} $. 
	Then, $\pin\left\{B_{n,0}(\bOmegann,\tau_{n})\right\}\geq e^{-Cn\tau_{n}^{2}}$ for $n\tau_{n}^{2}=\sn\qnn\log(\dn\qn)$ and some absolute constant $C>0$.
	\label{lemma:kl_support}
\end{lemma}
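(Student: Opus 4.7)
\textbf{Proof plan for Lemma \ref{lemma:kl_support}.}
The strategy is to reduce the KL neighborhood to an explicit product neighborhood in the parameters $(\bLambdan, \deltan^2)$, and then lower bound the prior mass of this product set using standard small-ball estimates for the Dirichlet--Laplace (DL) and Gamma priors. First I would invoke Lemma \ref{lemma:kl_bound} together with the bounds $\specnorm{\bOmegann}\leq \specnorm{\bLambdann}^2+\deltann^2=O(1)$ from \ref{ass3}--\ref{ass2} and $\deltann^2\geq\delmin^2$, to obtain
\vskip-4ex
\bse
2\kl{\bOmegann}{\bOmegan} \;\leq\; \tfrac{C}{\delmin^{6}}\,\fnorm{\bOmegan-\bOmegann}^{2}.
\ese
\vskip-2ex
\noindent
Hence it suffices to control $\fnorm{\bOmegan-\bOmegann}^2\leq C' n\tau_n^2$.

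Let $\bLambdann^{*}$ be the $\dn\times\qn$ matrix obtained by augmenting $\bLambdann$ with $\qn-\qnn$ zero columns; then $\bLambdann^{*}\bLambdann^{*\trans}=\bLambdann\bLambdann\trans$. Using the identity $\bA\bA\trans-\bB\bB\trans=(\bA-\bB)\bA\trans+\bB(\bA-\bB)\trans$,
\vskip-4ex
\bse
\fnorm{\bOmegan-\bOmegann} \;\leq\; \fnorm{\bLambdan-\bLambdann^{*}}\bigl(\specnorm{\bLambdan}+\specnorm{\bLambdann^{*}}\bigr) + \sqrt{\dn}\,|\deltan^{2}-\deltann^{2}|.
\ese
\vskip-2ex
\noindent
So if $\fnorm{\bLambdan-\bLambdann^{*}}\leq \epsilon_{\Lambda}$ with $\epsilon_\Lambda$ small enough to keep $\specnorm{\bLambdan}=O(1)$, and $|\deltan^{2}-\deltann^{2}|\leq \epsilon_\delta$, then $\fnorm{\bOmegan-\bOmegann}^{2}\precsim \epsilon_{\Lambda}^{2}+\dn\epsilon_{\delta}^{2}$. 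Matching $n\tau_n^2=\sn\qnn\log(\dn\qn)$ forces the choices $\epsilon_\Lambda\asymp\sqrt{\sn\qnn\log(\dn\qn)}$ (equivalently a small constant, since this quantity is $o(\sqrt{n})$) and $\epsilon_\delta\asymp\sqrt{\sn\qnn\log(\dn\qn)/\dn}$. It remains to show that the prior assigns mass at least $e^{-Cn\tau_n^2}$ to the Cartesian product of these two balls, since the DL prior for $\bLambdan$ is independent of the prior for $\deltan^2$.

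For the diagonal component, the truncated Gamma prior has a positive, bounded density at $\deltann^2$ (which lies in the interior of $[\delmin^2,\delmax^2]$ by \ref{ass2}), so
$\Pi(|\deltan^{2}-\deltann^{2}|\leq \epsilon_\delta) \gtrsim \epsilon_\delta$, giving $\log\Pi \gtrsim -\tfrac{1}{2}\log(\dn/[\sn\qnn\log(\dn\qn)]) = O(\log \dn)$. This is absorbed into $-Cn\tau_n^2$ because $\log\dn\leq\sn\log\dn\leq \sn\qnn\log(\dn\qn)=n\tau_n^2$. For the loadings, I would invoke the DL prior mass lower bound in the spirit of Theorem 3.2 of \cite{bhattacharya2015dirichlet} (or Lemma 5.3 of \citealp{pati2014}): since $\bLambdann^{*}$ has at most $s:=\sn\qnn$ non-zero entries among $\dn\qn$ coordinates, with bounded magnitudes (by \ref{ass3}), the DL$(\an,\bnn)$ prior with $\an=1/(\dn\qn)$ and $\bnn=\log^{3/2}(\dn\qn)$ satisfies
\vskip-4ex
\bse
\log \Pi\bigl(\fnorm{\bLambdan-\bLambdann^{*}}\leq \epsilon_\Lambda\bigr) \;\gtrsim\; -C\,\sn\qnn\log(\dn\qn),
\ese
\vskip-2ex
\noindent
i.e., $\geq -C n\tau_n^{2}$.

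The main obstacle is this last DL small-ball estimate: it requires decomposing the event into (i) the global scale $\tau$ landing at the correct order $\asymp \epsilon_\Lambda/\sqrt{s}$, (ii) the $s$ active Dirichlet weights $\phi_{j,h}$ being of order $1/s$, and (iii) the $(\dn\qn-s)$ inactive coordinates of $\vect(\bLambdan)$ being driven to near zero by the remaining Dirichlet mass, with each of these three pieces contributing at most $e^{-Cs\log(\dn\qn)}$ in probability. The specific tuning $\an=1/(\dn\qn)$, $\bnn=\log^{3/2}(\dn\qn)$ is precisely what makes the tail of $\tau$ and the inactive coordinates cooperate at this rate; the calculations are technical but follow the template of \cite{bhattacharya2015dirichlet} and \cite{pati2014} essentially verbatim after identifying $\bLambdan$ with an $(\dn\qn)$-dimensional vector. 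Combining the DL and Gamma bounds and exponentiating then yields $\pin\{B_{n,0}(\bOmegann,\tau_n)\}\geq e^{-Cn\tau_n^2}$, completing the proof.
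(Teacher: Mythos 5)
Your plan follows essentially the same route as the paper's proof: reduce the KL ball to a Frobenius ball via Lemma \ref{lemma:kl_bound}, factor the prior mass into a product over the zero-padded $\bLambdan$ neighborhood and the $\deltan^{2}$ neighborhood, lower-bound the former by the standard Dirichlet--Laplace signal-plus-shrinkage small-ball estimates of \cite{bhattacharya2015dirichlet}, and absorb the $O(\log\dn)$ cost of the Gamma part into $-Cn\tau_n^2$. The only differences are cosmetic (you use the linear expansion of $\bLambdan\bLambdan\trans-\tLambdann\tLambdann\trans$ on a constant-radius ball where the paper uses a quadratic bound on a ball of radius $\wt{C}n^{1/4}\tau_n^{1/2}$, and you defer the DL computation to the cited lemmas, which the paper writes out), so the proposal is correct and matches the paper's argument.
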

\begin{proof}
	From Lemma \ref{lemma:kl_bound}, we note that
	\be
	&\pin\left\{B_{n,0}(\bOmegann,\tau_{n})\right\} \geq \pin\left(\fnorm{\bOmegann-\bOmegan}^{2}\specnorm{\bOmegann} /4 \delmin^{6} \leq n\tau_{n}^{2}\right)\notag \\
	\geq& \pin\left(\fnorm{\bLambdan\bLambdan\trans -\bLambdann\bLambdann\trans}\leq  \frac{\sqrt{n}\tau_{n} \delmin^{3}}{\specnorm{\bOmegann}^{\half}} \right) \pin\left(\fnorm{\bDeltan-\bDeltann}\leq  \frac{\sqrt{n}\tau_{n} \delmin^{3}}{\specnorm{\bOmegann}^{\half}} \right) \label{eq:productprior}.
	\ee
	\noindent We handle the $\bLambda$ and $\bDelta$ parts in \eqref{eq:productprior} separately and conclude the proof by showing that each of the terms individually exceeds $e^{-Cn\tau_{n}^{2}}$ for some constant $C>0$.
	
	\noindent\textbf{The $\bLambda$ part in \eqref{eq:productprior}: }	
	We define $\tLambdann=[\bLambdann\quad \bzero^{\dn\times\overline{ \qn-\qnn} }]$, after augmenting $\qn-\qnn$ null columns to $\bLambdann$. 	
	Also $\tLambdann\tLambdann\trans=\bLambdann\bLambdann\trans$. 
	Invoking \ref{ass3} and Lemma \ref{lemma:pati}, we have that $\fnorm{\bLambdan - \tLambdann}<\epsilon$ implies that  $ \fnorm{\bLambdan\bLambdan\trans -\tLambdann\tLambdann\trans}\leq C\epsilon^{2}$ for some $C>0$. 
	Note that, from \ref{ass3} and \ref{ass2}, $\specnorm{\bOmegann}=O(1)$. 
	Let $\bA^{S}$ denote the vector of elements of $\bA$ corresponding to an index set $S$, 
	and $S_{0}$ denote the set of nonzero elements in $\tLambdann$ .
	Then, for some absolute constant $\wt{C}$, we have 	
	\be
	&\pin\left(\fnorm{\bLambdan\bLambdan\trans -\bLambdann\bLambdann\trans}\leq  \frac{\sqrt{n}\tau_{n} \delmin^{3}}{\specnorm{\bOmegann}^{\half}} \right)\geq \pin\left(\fnorm{\bLambdan - \tLambdann}\leq  \wt{C} n^{1/4} \tau_{n}^{\half} \right)\notag\\
	&\geq \pin\left(\norm{\bLambdan^{\Snt}- \bLambdann^{\Snt}}\leq \wt{C} n^{1/4} \tau_{n}^{\half}/2 \right) \pin\left(\norm{\bLambdan^{\Snt^{C}}}\leq \wt{C} n^{1/4} \tau_{n}^{\half}/2 \right) \label{eq:factorpart}
	\ee
	\noindent The first term in \eqref{eq:factorpart} ensures that the Bayesian model assigns requisite prior mass around the \textit{signals} whereas the second term ensures that appropriate \textit{shrinkage} is envisaged through the prior.
	We show that each of these products exceeds $e^{-Cn\tau_{n}^{2}}$.	
	\begin{description}
		\item[The shrinkage part in \eqref{eq:factorpart}:] From \citetlatex[equation (10)]{bhattacharya2015dirichlet}, we get that 
		$\bLambdan  =\frac{2}{\bnn}((\lambda_{n,j,h}^{*})) $ where $\lambda_{n,j,h}^{*}$'s are marginally iid random variables with a priori with pdf $\Pi(\lambda_{n,j,h})=\abs{\lambda_{n,j,h}}^{(\an-1)/2} K_{1-\an}(\sqrt{2\abs{\lambda_{n,j,h}}})/\{2^{(1+\an)/2}\Gamma(\an)\}$ where $K_{\nu}(x)=\frac{\Gamma(\nu+1/2)(2x)^{\nu}}{\sqrt{\pi}}\int_{0}^{\infty} \frac{\cos t}{(t^{2}+x^{2})^{\nu+1/2}} \de t$ is the modified Bessel function of the second kind.
		Using Lemma 3.2 of the same paper and letting $\Hn=\bnn n^{1/4} \tau_{n}^{\half}$, we have
		\bse
		\pin\left(\norm{\bLambdan^{\Snt^{C}}}\leq \wt{C} n^{1/4} \tau_{n}^{\half}/2 \right)\geq \left\{\Pr(\abs{\lambda_{n,j,h}^{*}}\leq \frac{C\Hn}{\abs{\Snt^{C}}})\right\}^{\abs{\Snt^{C}}}\geq \left\{1-\frac{C}{\Gamma(\an)}\log\frac{\abs{\Snt^{C}}}{\Hn}\right\}^{\abs{\Snt^{C}}}.
		\ese	
		\noindent Now, $\abs{\Snt^{C}}=\dn\qn-\sn\qnn$ and $\Gamma(\an)=\Gamma(1+\an)/\an$.
		Since $\an=1/(\dn\qn)$ and $\Hn$ is an increasing sequence, $\frac{1}{\Gamma(\an)}\log\frac{\abs{\Snt^{C}}}{\Hn}=\frac{1}{\dn\qn\Gamma(1+\an)}\log\frac{\dn\qn-\sn\qnn}{\Hn}\leq 1$ and hence
		\vskip-5ex
		\bse
		\begin{split}
			\pin\left(\norm{\bLambdan^{\Snt^{C}}}\leq \wt{C} n^{1/4} \tau_{n}^{\half}/2 \right)
			& \geq \left\{1-\frac{C}{\Gamma(\an)}\log\frac{\abs{\Snt^{C}}}{\Hn}\right\}^{\abs{\Snt^{C}}}\geq \left\{1-\frac{C\log (\dn\qn)}{\dn\qn\Gamma(1+\an)} \right\}^{\dn\qn}\\ 
			& \geq e^{-C\log(\dn\qn)}\left\{1-\frac{C^{2} \log^{2}(\dn\qn)}{\dn\qn}\right\}\geq e^{-C'\qnn\sn\log(\dn\qn)}.
		\end{split}
		\ese	
		\vskip-2ex
		\noindent The second last inequality in the previous equation follows since, for any $n\geq 1$ and $\abs{x}\leq n$, 
		we have $\left(1+\frac{x}{n}\right)^{n}\geq e^{x}\left(1-\frac{x^{2}}{n}\right)$.
		\item[The signal part in \eqref{eq:factorpart}:] Let us define $v_{q}(r)$ to be the $q$-dimensional Euclidean ball of radius $r$ centered at zero and $\abs{v_{q}(r)}$ denotes its volume.
		For the sake of brevity, denote $v_{q}=\abs{v_{q}(1)}$, so that $\abs{v_{q}(r)}=r^{q}v_{q}$. 
		Letting $\bLambdan^{*}=((\lambda^{*}_{n,j,h}))^{\dn\times\qn}$ ($\lambda^{*}_{n,j,h}$'s defined earlier)
		and $\tn=\bnn\left(\wt{C} n^{1/4} \tau_{n}^{\half}/2+\norm{\bLambdann^{\Snt}}\right)$, we have
		\be
		\begin{split}
			\pin\left(\norm{\bLambdan^{\Snt}- \bLambdann^{\Snt}}\leq \wt{C} n^{1/4} \tau_{n}^{\half}/2 \right) 
			&= \pin\left(\norm{\bLambdan^{*\Snt}- \bnn\bLambdann^{\Snt}}\leq \wt{C}\bnn n^{1/4} \tau_{n}^{\half}/2 \right)\\
			& \geq \abs{v_{\abs{\Snt}}(\tn)}   \inf_{v_{\abs{\Snt}}(\tn)} \pin(\bLambdan^{*\Snt}). \label{eq:signalpart}
		\end{split}
		\ee
		\noindent Note that, $\abs{\Snt}=\sn\qnn$ and $\norm{\bLambdann^{\Snt}}=\fnorm{\bLambdann}$
		and using \citetlatex[Lemma 5.3]{castillo2012needles}, $v_{q}\asymp (2\pi e)^{q/2}q^{-\overline{q+1}/2}$.
		Note also that for $x> 0$, $\frac{\log(1+x)}{x}\leq \frac{1}{\sqrt{1+x}}$ which implies that $x^{x}\leq e^{x^{3/2}}$. 
		Hence,
		\bse
		\abs{v_{\abs{\Snt}}(\tn)}=\tn^{\sn\qnn}v_{\sn\qnn}\geq\exp\{\sn\qnn\log\tn-C'(\qnn\sn)^{3/2}/2\}\geq \exp\{-C\sn\qnn\log (\dn\qn)\}.
		\ese
		\noindent The last inequality follows from \ref{ass1} and the prior specifications.	
		From \citetlatex[Lemma 3.1]{bhattacharya2015dirichlet} we have
		\bse
		&& \inf_{v_{\abs{\Snt}}(\tn)} \pin(\bLambdan^{*\Snt}) =\exp[-C\left\{\sn\qnn\log(1/\an)+ (\sn\qnn)^{3/4}\sqrt{\tn} \right\}]\\
		&& =\exp\left[-C\left\{\sn\qnn\log(\dn\qn) 
		+ (\sn\qnn)^{3/4}\bnn^{\half}\left(\wt{C} n^{1/4} \tau_{n}^{\half}/2+\fnorm{\bLambdann}\right)^{\half} \right\}\right]\\
		&& \geq \exp(-C \sn\qnn\log\overline{\dn\qn}),			
		\ese
		\noindent since, $\bnn =\log^{3/2}(\dn\qn)$, $n\tau_{n}^{2}=\sn\qnn\log\overline{\dn\qn}$ and $\fnorm{\bLambdann}^{2}=O(\qnn)$ from \ref{ass3}.
		Hence from \eqref{eq:signalpart}, 
		$\pin\left(\norm{\bLambdan^{\Snt}- \bLambdann^{\Snt}}\leq \wt{C} n^{1/4} \tau_{n}^{\half}/2 \right) \geq \exp(-C\sn\qnn\log\overline{\dn\qn})$.
	\end{description}	
	
	\noindent\textbf{The $\bDelta$ part in \eqref{eq:productprior}:} Note that, $\pin\left(\fnorm{\bDeltan-\bDeltann}\leq  \frac{\sqrt{n}\tau_{n} \delmin^{3}}{\specnorm{\bOmegann}^{\half}} \right) \geq \pin\left( \abs{\deltann^{2}-\deltan^{2}} \leq  \frac{\sqrt{n}\tau_{n} \delmin^{3}}{\sqrt{\dn \specnorm{\bOmegann}}} \right)$. 
	Now, $\pin ( \abs{\deltann^{2}-\deltan^{2}} \leq  x )\geq e^{-\deltann^{2}}(1- e^{-2x}) $. 	
	Using $1-e^{-2x}\geq x$ for $x\in(0,\half)$ and since $\frac{\sqrt{n}\tau_{n} \delmin^{3}}{\sqrt{\dn \specnorm{\bOmegann}}}\to 0$ as $n\to\infty$, 
	we have for some $C'>0$
	\be
	\hspace*{-10pt} \pin\left( \abs{\deltann^{2}-\deltan^{2}} \leq  \frac{\sqrt{n}\tau_{n} \delmin^{3}}{\sqrt{\dn \specnorm{\bOmegann}}} \right)
	\geq \exp(-\delta_{0}^{2}+\log \frac{\sqrt{n}\tau_{n} \delmin^{3}}{\sqrt{\dn \specnorm{\bOmegann}}}) 
	\geq \exp(-C'\sn\qnn\log\overline{\dn\qn}). \notag 
	\ee
\end{proof}

\begin{corollary}
	\label{cor:arbit_tau}
	For arbitrary $\wt{\tau}_{n}> n^{-\half}$, we have
	$\pin\left\{B_{n,0}(\bOmegann,\wt{\tau}_{n})\right\}\geq e^{-C\qnn\sn\log(\dn\qn) }$ for some constant $C>0$.	
\end{corollary}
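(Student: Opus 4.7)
The plan is to split on whether $\wt\tau_n$ sits above or below the calibrated rate $\tau_n$ from Lemma \ref{lemma:kl_support} (i.e., the one with $n\tau_n^2 = \sn\qnn\log(\dn\qn)$). The key structural fact is that $B_{n,0}(\bOmegann,\tau)$ is monotone in $\tau$: since $\kl{\bOmegann}{\bOmegan}\le n\tau^2$ is a looser constraint for larger $\tau$, we have $B_{n,0}(\bOmegann,\tau)\subseteq B_{n,0}(\bOmegann,\tau')$ whenever $\tau\le\tau'$. So for the easy regime $\wt\tau_n\ge\tau_n$, I would simply write
\[
\pin\{B_{n,0}(\bOmegann,\wt\tau_n)\}\;\ge\;\pin\{B_{n,0}(\bOmegann,\tau_n)\}\;\ge\;e^{-Cn\tau_n^2}\;=\;e^{-C\qnn\sn\log(\dn\qn)},
\]
which is exactly the desired bound.

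The harder regime is $n^{-\half}<\wt\tau_n<\tau_n$, where the ball has shrunk and monotonicity hurts us. My plan is to re-run the proof of Lemma \ref{lemma:kl_support} verbatim with $\wt\tau_n$ in place of $\tau_n$ and show that the lower bounds on each factor are still of the advertised exponential order, independent of the precise value of $\wt\tau_n$. Specifically, the KL bound from Lemma \ref{lemma:kl_bound} still reduces $B_{n,0}(\bOmegann,\wt\tau_n)$ to a Frobenius neighborhood of radius $\asymp\sqrt n\,\wt\tau_n$, which factors through the prior decomposition \eqref{eq:productprior} into a $\bLambda$-part requiring $\fnorm{\bLambdan-\tLambdann}\le \wt C\,n^{1/4}\wt\tau_n^{\half}$ and a $\bDelta$-part requiring $|\deltann^2-\deltan^2|\le \wt C\,\sqrt n\,\wt\tau_n/\sqrt{\dn}$.

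The point that I would want to emphasize is that the hypothesis $\wt\tau_n>n^{-\half}$ is exactly what is needed to keep these two radii usefully bounded below: it gives $n^{1/4}\wt\tau_n^{\half}>1$ and $\sqrt n\,\wt\tau_n/\sqrt{\dn}>1/\sqrt{\dn}$. For the shrinkage factor, setting $\Hn=\bnn n^{1/4}\wt\tau_n^{\half}>\bnn$, the Dirichlet--Laplace computation from \citetlatex{bhattacharya2015dirichlet} goes through with $\log(|\Snt^C|/\Hn)\le \log(\dn\qn)$, yielding $\exp\{-C\qnn\sn\log(\dn\qn)\}$ as before. For the signal factor, $t_n=\bnn(\wt C n^{1/4}\wt\tau_n^{\half}/2+\|\bLambdann^{\Snt}\|)$ now satisfies $t_n\ge \bnn\,\wt C/2$ and $t_n=O(\bnn\sqrt{\qnn})$ (using $\|\bLambdann^{\Snt}\|\asymp\sqrt{\qnn}$ from \ref{ass3}), so $|\log t_n|=O(\log(\dn\qn))$ and both the volume lower bound $\exp\{\sn\qnn\log t_n - C(\qnn\sn)^{3/2}/2\}$ and the density infimum absorb into $\exp\{-C\sn\qnn\log(\dn\qn)\}$. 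For the $\bDelta$-part, $\pin(|\deltann^2-\deltan^2|\le x)\ge e^{-\deltann^2}(1-e^{-2x})\ge e^{-\deltann^2}\min(x,\tfrac12)$, and plugging $x\ge c/\sqrt{\dn}$ yields at worst $\exp(-C\log \dn)$, still dominated by the target bound.

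The main obstacle, which I would flag, is bookkeeping rather than novelty: one must check that no logarithmic term hidden in the Lemma's estimates secretly depends polynomially on $\wt\tau_n^{-1}$ and blows up as $\wt\tau_n\downarrow n^{-\half}$. The constraint $\wt\tau_n>n^{-\half}$ prevents exactly this, capping every $n^{1/4}\wt\tau_n^{\half}$ from below by $1$ and every $\log(1/\wt\tau_n)$ by $\tfrac12\log n\le C\log(\dn\qn)$ under \ref{ass1}. Once that uniformity is verified, the two cases combine to give the stated $e^{-C\qnn\sn\log(\dn\qn)}$ bound for all $\wt\tau_n>n^{-\half}$.
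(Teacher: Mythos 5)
Your proposal is correct and follows essentially the same route as the paper, whose entire justification is the one-line remark that the corollary ``can be seen'' by following the proof of Lemma \ref{lemma:kl_support}; your two-case split (monotonicity of $B_{n,0}(\bOmegann,\tau)$ in $\tau$ for $\wt\tau_n\ge\tau_n$, and a verbatim re-run of the lemma's argument for $n^{-\half}<\wt\tau_n<\tau_n$) simply makes explicit the uniformity check the paper leaves implicit. Your observation that the hypothesis $\wt\tau_n>n^{-\half}$ is precisely what caps $n^{1/4}\wt\tau_n^{\half}$ and $\sqrt{n}\,\wt\tau_n$ from below by $1$, so that every factor in \eqref{eq:productprior} retains the lower bound $e^{-C\qnn\sn\log(\dn\qn)}$ independently of $\wt\tau_n$, is exactly the point on which the corollary rests.
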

\noindent Following the proof of Lemma \ref{lemma:kl_support}, it can be seen that the above corollary holds.

\begin{lemma}
	\label{lemma:testfn}
	Let $\bOmegann\in\Pnn $ and the set $G_{j,n}=\left\{\bOmegan\in \Pnon: j\tau_{n}<\frac{1}{\zeta_{n} }\specnorm{\bOmegan-\bOmegann}\leq 2j\tau_{n} \right\}$ denote an annulus of inner radius $j\tau_{n} \zeta_{n} $ and outer radius $(j+1)\tau_{n} \zeta_{n} $, where $\zeta_{n}=C''(\qnn\sn)^{4}\log(\dn\qn)$, in operator norm around $\bOmegann$ for some integer $j>1$. 
	Based on iid samples $\by_{1:n}$ from $\mn_{\dn}(\bzero, \bOmegann^{-1})$, consider the following hypothesis testing problem
	\bse
	H_{0}:\bOmegan=\bOmegann\text{ versus }H_{1}:\bOmegan\in G_{j,n}.
	\ese
	\noindent We simulate $\bu_{1:n}\simiid \mn_{\qnn}(\bzero, \bP_{0n})$, with $\bP_{0n} = (\bI_{\qnn}+\bLambdann\trans\bDeltann^{-1}\bLambdann)$ independently from $\by_{1:n}$ and define $\bv_{i} = \by_{i}+\bDeltann^{-1}\bLambdann \bP_{0n}^{-1}\bu_{i}$. 
	Letting $\bV\trans=(\bv_{1},\dots,\bv_{n})$, 
	we define the following test function $\phi_{n}=\mathbbm{1}\left\{\specnorm{ \bLambdann \trans \left(\frac{1}{n} \bV\trans \bV -\bDeltann^{-1}\right) \bLambdann}> \tau_{n}  \right\}$.
	Then 
	\bse
	\lim_{n\to \infty} \eE_{\bOmegann} \phi_{n}= 0;\qquad \sup_{\bOmegan\in G_{j,n}} \eE_{\bOmegan} (1-\phi_{n})\leq \exp(- Knj^{2}\tau_{n}^{2} ),
	\ese
	\noindent	for some absolute constant $K>0$.
\end{lemma}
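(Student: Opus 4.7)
The plan is to analyze the Type I and Type II errors separately by exploiting the precision factor representation from Proposition \ref{prop: main construction} to pin down the distribution of $\bv_i$ under each hypothesis. Under $H_0$, the construction of $(\bu_i,\bv_i)$ uses the true parameters $(\bLambdann,\bDeltann)$, so Proposition \ref{prop: main construction} directly gives $\bv_i \simiid \mn_{\dn}(\bzero,\bDeltann^{-1})$. Under $H_1$ the data satisfy $\by_i \simiid \mn_{\dn}(\bzero,\bOmegan^{-1})$ while $\bu_i \simiid \mn_{\qnn}(\bzero,\bP_{0n})$ independently, so the marginal covariance of $\bv_i$ picks up an extra contribution that encodes the discrepancy between $\bOmegan$ and $\bOmegann$.

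For the Type I error, under $H_0$ the sandwiched matrix $\bLambdann\trans(\frac{1}{n}\bV\trans\bV)\bLambdann$ is a Wishart-type quadratic form in iid Gaussians of effective dimension $\qnn$ with population mean $\bLambdann\trans\bDeltann^{-1}\bLambdann$. A standard sample-covariance concentration bound (Hanson–Wright or Corollary~5.35 of Vershynin) yields
\begin{equation*}
\specnorm{\bLambdann\trans\!\left(\tfrac{1}{n}\bV\trans\bV - \bDeltann^{-1}\right)\!\bLambdann} \;\leq\; C\,\specnorm{\bLambdann}^{2}\specnorm{\bDeltann^{-1}}\sqrt{\qnn/n}
\end{equation*}
with probability exceeding $1-e^{-c\qnn}$. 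Since $\specnorm{\bLambdann}=O(1)$ and $\specnorm{\bDeltann^{-1}}\leq \delta_{\min}^{-2}$ by \ref{ass3}–\ref{ass2}, and since $\tau_{n}\asymp\sqrt{\qnn\sn\log(\dn\qn)/n}$, the right-hand side is $o(\tau_{n})$, so $\eE_{\bOmegann}\phi_{n}\to 0$.

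For the Type II error, the independence $\by_i \perp \bu_i$ together with Sherman–Woodbury gives
\begin{equation*}
\eE_{\bOmegan}\!\left[\tfrac{1}{n}\bV\trans\bV\right] \;=\; \bOmegan^{-1} + \bDeltann^{-1}\bLambdann\bP_{0n}^{-1}\bLambdann\trans\bDeltann^{-1} \;=\; \bOmegan^{-1}-\bOmegann^{-1}+\bDeltann^{-1},
\end{equation*}
so the mean of the test statistic under $H_{1}$ is precisely $\bLambdann\trans(\bOmegan^{-1}-\bOmegann^{-1})\bLambdann$. Using the identity $\bOmegan^{-1}-\bOmegann^{-1}=-\bOmegann^{-1}(\bOmegan-\bOmegann)\bOmegan^{-1}$ with $\specnorm{\bOmegan}\leq \zeta_{n}$ on $\Pnon$ and $\specnorm{\bOmegann}=O(1)$, a suitable reverse-norm inequality gives $\specnorm{\bOmegan^{-1}-\bOmegann^{-1}}\gtrsim j\tau_{n}$ whenever $\bOmegan\in G_{j,n}$. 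The same concentration argument as in Step~1, now centered at a non-zero mean of operator norm $\gtrsim j\tau_{n}$, shows that the fluctuation around this mean is $o(j\tau_{n})$ with probability at least $1-\exp(-Knj^{2}\tau_{n}^{2})$, so $\phi_{n}=1$ with the required probability.

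The main obstacle is descending from $\specnorm{\bOmegan^{-1}-\bOmegann^{-1}}$ to a matching lower bound on $\specnorm{\bLambdann\trans(\bOmegan^{-1}-\bOmegann^{-1})\bLambdann}$: the naive inequality $\specnorm{\bLambdann\trans A \bLambdann}\geq \smin{\bLambdann}^{2}\specnorm{A}$ fails in general because the top eigenvector of $A=\bOmegan^{-1}-\bOmegann^{-1}$ need not lie in the column span of $\bLambdann$. To resolve this I would exploit the LRD structure of both matrices: Sherman–Woodbury expansions of $\bOmegan^{-1}$ and $\bOmegann^{-1}$ show that $A$ differs from a matrix supported in the joint column span of $\bLambdan$ and $\bLambdann$ by a term that is small in operator norm, and the dominant part survives sandwiching by $\bLambdann\trans$ thanks to $\liminf\smin{\bLambdann}>0$ from \ref{ass3}. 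Controlling the residual and the possible misalignment between the column spans of $\bLambdan$ and $\bLambdann$ via the sparsity conditions in \ref{ass1} is the delicate step; the Gaussian concentration components, by contrast, are essentially routine.
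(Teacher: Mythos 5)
Your Type I analysis and the overall architecture of your Type II analysis coincide with the paper's proof: under $H_{0}$ the paper likewise reduces the statistic to a $\qnn$-dimensional sample-covariance deviation, bounding $\phi_{n}$ by $\mathbbm{1}\left(\specnorm{\bXinn}\specnorm{\frac{1}{n}\sum_{i}\bz_{i}\bz_{i}\trans-\bI_{\qnn}}>\tau_{n}\right)$ with $\bXinn=\bLambdann\trans\bDeltann^{-1}\bLambdann$ and invoking Vershynin's covariance concentration; under $H_{1}$ it likewise splits $\frac{1}{n}\bV\trans\bV-\bDeltann^{-1}$ into a mean-zero fluctuation plus a deterministic shift and applies the reverse triangle inequality before concentrating the fluctuation at scale $j\tau_{n}$. (Incidentally, your identification of the shift as $\bOmegan^{-1}-\bOmegann^{-1}$ is the correct output of the Sherman--Woodbury computation; the paper labels the same quantity $\bOmegan-\bOmegann$, and your reduction $\specnorm{\bOmegan^{-1}-\bOmegann^{-1}}\geq\specnorm{\bOmegann}^{-1}\specnorm{\bOmegan}^{-1}\specnorm{\bOmegan-\bOmegann}\gtrsim j\tau_{n}$ on $\Pnon$ is a sound way to get back to the metric defining $G_{j,n}$.)

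The genuine gap is exactly the step you flag and then leave unexecuted: a lower bound on $\specnorm{\bLambdann\trans\bA\bLambdann}$ in terms of $\specnorm{\bA}$, uniformly over the relevant difference matrices $\bA$ arising from $\bOmegan\in G_{j,n}$. Without it the deterministic shift in the test statistic is never shown to dominate $j\tau_{n}$, so the Type II bound does not close; your ``Sherman--Woodbury expansion plus column-span alignment'' paragraph is a research direction, not an argument, and the obstruction you name is real rather than technical --- if $\bOmegan$ and $\bOmegann$ differ predominantly in their diagonal parts or in loading directions outside $\mathrm{col}(\bLambdann)$, the sandwiched matrix can be far smaller than $\specnorm{\bA}$, and nothing in \ref{ass1}--\ref{ass2} obviously rules this out. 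For what it is worth, the paper closes this step by asserting $\specnorm{\bLambdann\trans(\bOmegan-\bOmegann)\bLambdann}\geq\smin{\bLambdann\trans\bLambdann}\specnorm{\bOmegan-\bOmegann}$ --- precisely the ``naive inequality'' you correctly identify as false in general for a tall $\bLambdann$ --- so the paper's own proof would not have supplied the missing justification. A complete argument needs either an additional structural condition tying the directions in which $\bOmegan$ deviates from $\bOmegann$ to $\mathrm{col}(\bLambdann)$, or a test statistic that is not blind to directions outside that column span.
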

\begin{proof}
	\begin{description}
		\item[Type-I error:] 
		Under $H_{0}$, we have $\bv_{1:n}\simiid\mn_{\dn}(\bzero,\bDeltann^{-1})$. 
		Letting $\bXinn=\cov \left({\bLambdann\trans\bv_{i}}\right)= \bLambdann\trans\bDeltann^{-1}\bLambdann$, we have,
		$\phi_{n}\leq \mathbbm{1}\left(\specnorm{\bXinn} \specnorm{\frac{1}{n}\sum_{i=1}^{n}\bz_{i}\bz_{i}\trans -\bI_{\qnn}}>\tau_{n}  \right)$, where $\specnorm{\bXinn}=O(1)$ from \ref{ass3}.
		From \citetlatex[Corollary 5.50]{vershynin_book},  $\eE_{H_{0}}\phi_{n} \leq 2\exp(-\wt{C}\qnn\tn^{2} )$ for a universal constant $\wt{C}>0$ and any sequence $\tn$ satisfying $\qnn \tn^{2}\leq n\tau_{n}^{2}$. 
		Since from \ref{ass1}, $\qnn\to\infty$ and $n\tau_{n}^{2}/\qnn \to \infty$ as $n\to\infty$, $\tn$ can be constructed such that $\tn\succsim1$ and hence $\lim_{n\to \infty} \eE_{H_{0}}\phi_{n} =0$.
		\item[Type-II error:] 
		If $\by_{1:n}\simiid\mn_{\dn}(\bzero,\bOmegan)$ with $\bOmegan=\bLambdan\bLambdan\trans+\bDeltan$, then $\bv_{1:n}\simiid\mn_{\dn}(\bzero,\bDeltan^{-1} +  \bnablan)$ where $\bnablan=\bDeltann^{-1} \bLambdann(\bI_{\qnn}+\bLambdann\trans\bDeltann^{-1}\bLambdann)^{-1}\bLambdann\trans\bDeltann^{-1} -\bDeltan^{-1} \bLambdan(\bI_{\qn}+\bLambdan\trans\bDeltan^{-1}\bLambdan)	^{-1}\bLambdan\trans\bDeltan^{-1}$. 
		Notably, $\bDeltan^{-1} -\bDeltann^{-1}+\bnablan=\bOmegan-\bOmegann$. 
		Now,
		\vskip-5ex
		\be
		&& \hskip -25pt 1-\phi_{n} = \mathbbm{1}\left\{\specnorm{ \bLambdann \trans \left(\frac{1}{n} \bV\trans \bV -\bDeltann^{-1}\right) \bLambdann}\leq \tau_{n}  \right\}\notag \\
		&=& \mathbbm{1}\left\{\specnorm{ \bLambdann \trans \left\{\left(\frac{1}{n} \bV\trans \bV -\bDeltan^{-1}-\bnablan\right) + (\bDeltan^{-1} -\bDeltann^{-1}+\bnablan)\right\} \bLambdann}\leq \tau_{n}  \right\}\notag\\
		&\leq& \mathbbm{1}\left\{\specnorm{ \bLambdann \trans \left(\frac{1}{n} \bV\trans \bV -\bDeltan^{-1}-\bnablan\right)\bLambdann }>  \specnorm{\bLambdann\trans (\bOmegan-\bOmegann)\bLambdann }- \tau_{n}  \right\}\notag\\
		&\leq& \mathbbm{1}\left\{\specnorm{  \left\{\frac{1}{n} \bLambdann\trans\bV\trans \bV\bLambdann -\bLambdann\trans(\bDeltan^{-1}+\bnablan)\bLambdann\right\} }> {\smin{\bLambdann\trans \bLambdann}} \specnorm{\bOmegan-\bOmegann} - \tau_{n}  \right\}\notag\\
		&\leq& \mathbbm{1}\left\{ \specnorm{ \frac{1}{n}\sum_{i=1}^{n}\bz_{i}\bz_{i}\trans -\bI_{\qn} }> \frac{\smin{\bLambdann\trans \bLambdann}}{\specnorm{\bDeltan^{-1}+\bnablan} \specnorm{\bLambdann}^{2}}  \left(\specnorm{\bOmegan-\bOmegann} - \frac{\tau_{n}}{\smin{\bLambdann\trans \bLambdann}}\right)  \right\}, ~~~~~\label{eq:type2error}
		\ee
		\noindent where $\bz_{1:n}\simiid\mn_{\qn}(\bzero,\bI_{\qn})$. 
		By construction of $\Pnon$, $\specnorm{\bDeltan^{-1}+\bnablan}\leq \zeta_{n}$ for $\bOmegan\in \Pnon$. 
		Hence, for $\bOmegan\in G_{j,n}$, the RHS of \eqref{eq:type2error} is bounded below by $j\tau_{n}$ for sufficiently large $j$.
		Using \citetlatex[Eqn 5.26]{vershynin_book} again, we have for all $\bOmegan \in G_{j,n} $, $\eE_{\bOmegan}(1-\phi_{n})\leq e^{-Knj^{2}\tau_{n}^{2}}$.		
	\end{description}
	Hence the proof.
\end{proof}
\begin{proof}[{\bf Theorem \ref{th:remainingmass}}]
	For densities $p$ and $q$, define $\klv{p}{q}=\int \left\{\log\frac{p}{q}-\kl{p}{q}\right\}^{2}\de p$.
	When $p$ and $q$ are mean zero multivariate Gaussian densities with precision matrices $\bOmega$ and $\bOmega'$, we simply denote $\klv{\bOmega}{\bOmega'}$. We define the set $B_{n,2}(\bOmegann, \epsilon)=\{\bOmegan\in\Pn: \kl{\bOmegann}{\bOmegan}\leq n\epsilon^{2},~ 
	\klv{\bOmegann}{\bOmegan} \leq n\epsilon^{2}\}$.
	
	From \citetlatex[Theorem 3.1]{banerjee2015bayesian} and the proof of Lemma \ref{lemma:kl_bound}, we have $\klv{\bOmegann}{\bOmegan}=\frac{1}{2\delmin^{4}}\fnorm{\bOmegann-\bOmegan}^{2}$ and therefore $B_{n,2}(\bOmegann, \epsilon)= \{\bOmegan\in\Pn: \fnorm{\bOmegann-\bOmegan}^{2}\leq C\epsilon \}$ for some constant $C>0$.
	
	Let $\{\tn\}_{n=1}^{\infty}$ be an increasing sequence of positive numbers.	
	Then, $ \pin(\specnorm{\bLambdan}\geq \tn )\leq \pin(\fnorm{\bLambdan}\geq \tn )\leq\pin(\norm{\vect(\bLambdan^{*})}_{\ell_{1}}\geq \bnn\tn )\leq 2e^{-C\sqrt{\bnn\tn}}$ for some constant $C>0$ where $\norm{\cdot}_{\ell_{1}}$ is the $\ell_{1}$ norm.
	The last inequality follows from \citetlatex[Lemma 7.4]{pati2014}. 
	Hence, for $\zeta_{n}=C''(\qnn\sn)^{4} \log(\dn\qn)$,	
	\bse
	\pin(\Pntw)\leq \pin \left\{\delta_{\max}^{2} + \specnorm{\bLambdan}^{2}\geq \zeta_{n} \right\} \geq \pin \left\{ \specnorm{\bLambdan^{*}}^{2}\geq K'C''(\qnn\sn\log\overline{\dn\qn})^{4}  \right\}  \leq e^{-CC''\qnn\sn\log (\dn\qn) }.
	\ese

	From Corollary \ref{cor:arbit_tau}, for arbitrary $\wt{\tau}_{n}>n^{-\half}$, $\pin\left\{B_{n,2}(\bOmegann, \wt{\tau}_{n})\right\}\geq e^{-C\qnn\sn\log (\dn\qn)}$.		
	Hence, $\frac{\pin(\Pntw)}{\pin\left\{B_{n,2}(\bOmegann, \wt{\tau}_{n})\right\}}\leq e^{-2C\qnn\sn\log (\dn\qn)}=o(e^{-2n\wt{\tau}_{n}^{2}})$ for $\wt{\tau}_{n}<\tau_{n}$.
	The last display holds for suitable large enough choice of $C''$, which, however, can be chosen independent of $n$.
	Using \citetlatex[Theorem 8.20]{ghosal_book} and subsequent application of DCT we conclude the proof.
\end{proof}

\begin{theorem}[minimax rate] \label{thm: sm minimax}
	If $\widehat{\bOmega}_{n}$ is a sequence of estimators of $\bOmegann\in\Pnn$ with $\qnn=O(1)$, 
	then for some absolute constant $C>0$
		\vskip-6ex
	\bse
	\inf_{\widehat{\bOmega}_{n}} \sup_{\bOmegann\in\Pnn} \specnorm{\widehat{\bOmega}_{n}- \bOmegann}\geq C\sqrt{\frac{\sn\log\dn}{n}}.
	\ese	
\end{theorem}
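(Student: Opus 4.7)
The plan is to reduce estimation to multiple hypothesis testing and apply Fano's inequality to a carefully constructed finite sub-family of $\Pnn$. Specifically, I will exhibit $M+1$ precision matrices $\bOmega_n^{(0)}, \dots, \bOmega_n^{(M)} \in \Pnn$ with (i) pairwise spectral separation of order $\epsilon_n \asymp \sqrt{s_n \log d_n / n}$ and (ii) pairwise KL divergence between the corresponding $n$-sample product measures bounded by $\alpha \log M$ for some small absolute constant $\alpha$, where $\log M \gtrsim s_n \log(d_n / s_n)$.

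Since $q_{0n} = O(1)$, it suffices to work within the $q_{0n} = 1$ slice. Fix constants $c, \delta_0 > 0$ and a calibration parameter $\rho = \rho_n > 0$. For an $(s_n - 1)$-sparse unit vector $v$ supported on $\{2, \dots, d_n\}$, set $\boldsymbol{\lambda}_v = c\, e_1 + \rho\, v \in \rR^{d_n}$ and $\bOmega_n^{(v)} = \boldsymbol{\lambda}_v \boldsymbol{\lambda}_v\trans + \delta_0^2 \bI_{d_n}$. Then $\boldsymbol{\lambda}_v$ is $s_n$-sparse, and viewed as a $d_n \times 1$ matrix it has unique singular value $\|\boldsymbol{\lambda}_v\| = \sqrt{c^2 + \rho^2}$, which is both bounded above and bounded below by positive constants; hence $\bOmega_n^{(v)} \in \Pnn$ for every admissible $v$. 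A Varshamov--Gilbert packing of $(s_n - 1)$-subsets of $\{2, \dots, d_n\}$ with pairwise intersection at most $(s_n - 1)/4$, followed by the choice of a unit vector on each support, yields a family $\{v^{(0)}, \dots, v^{(M)}\}$ with $\log M \gtrsim s_n \log(d_n / s_n) \asymp s_n \log d_n$ and pairwise separations $\|v^{(i)} - v^{(j)}\| \in [c_1, \sqrt{2}]$.

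The essential computation expands the pairwise difference as
\[
\bOmega_n^{(v)} - \bOmega_n^{(w)} = c \rho \bigl( e_1 (v - w)\trans + (v - w) e_1\trans \bigr) + \rho^2 \bigl( v v\trans - w w\trans \bigr).
\]
Because $e_1 \perp (v - w)$, the linear term is rank-two symmetric with eigenvalues $\pm c \rho \|v - w\|$, so both its operator and Frobenius norms are of order $\rho$; the quadratic remainder is $O(\rho^2)$ in both norms and negligible for $\rho = o(1)$. Consequently, $\specnorm{\bOmega_n^{(v)} - \bOmega_n^{(w)}} \asymp \rho$, and by Lemma \ref{lemma:kl_bound} we get $\kl{\bOmega_n^{(v)}}{\bOmega_n^{(w)}} \lesssim \fnorm{\bOmega_n^{(v)} - \bOmega_n^{(w)}}^2 \lesssim \rho^2$, so the $n$-sample KL is $O(n \rho^2)$. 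Calibrating $\rho^2 = \alpha\, s_n \log(d_n / s_n) / n$ with $\alpha$ sufficiently small balances the KL budget with $\log M$, the spectral separation becomes $\epsilon_n \asymp \sqrt{s_n \log d_n / n}$, and Fano's inequality delivers the claimed bound.

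The main technical obstacle is matching three structural requirements simultaneously. First, the condition that $\smin{\bLambda}$ stays bounded away from zero rules out any purely vanishing perturbation; the fixed background $c e_1$ is introduced exactly to preserve the singular-value lower bound as $\rho \to 0$. Second, the spectral separation must be linear rather than quadratic in $\rho$---a pure outer-product packing of sparse vectors with vanishing norm would give $\rho^2$ and consequently the wrong rate---and this linearity is obtained by exploiting the orthogonality $e_1 \perp v - w$, which is engineered by placing all perturbations off the $e_1$ axis. Third, reaching $\log M \asymp s_n \log d_n$ requires a genuinely combinatorial Varshamov--Gilbert style packing of sparse supports rather than a volumetric cover. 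Once these three pieces are aligned, the Fano reduction itself is entirely standard.
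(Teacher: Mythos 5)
Your proposal is correct, and it rests on the same two pillars as the paper's own proof---Fano's inequality applied to a Varshamov--Gilbert packing of $\sn$-sparse supports with $\log M \asymp \sn\log\dn$---but the hypothesis family and the KL control are constructed differently, so the comparison is worth spelling out. The paper takes $\bOmega_{(j)}=\beta_{n}\bI_{\dn}+\gamma_{n}\bg_{j}\bg_{j}\trans+\kappa_{n}\bvarepsilon_{\dn}\bvarepsilon_{\dn}\trans$ with $0/1$ vectors $\bg_{j}$ of norm $\sqrt{\sn}$ and a vanishing amplitude $\gamma_{n}$: the separation $\asymp\gamma_{n}\sn$ comes from the eigenvalues $\pm\gamma_{n}\sqrt{\sn^{2}-p_{n}^{2}}$ of the rank-two difference $\gamma_{n}(\bg_{j}\bg_{j}\trans-\bg_{j'}\bg_{j'}\trans)$, and the Kullback--Leibler divergence is computed exactly via the Woodbury identity. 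You instead anchor every loading at a fixed direction $c e_{1}$ of unit order, perturb by $\rho v$ with $v$ a sparse unit vector orthogonal to $e_{1}$, read the separation off the linear cross term $c\rho\{e_{1}(v-w)\trans+(v-w)e_{1}\trans\}$, and bound the KL by the Frobenius inequality of Lemma \ref{lemma:kl_bound} rather than exactly; both routes reach the same calibration $\rho\asymp\gamma_{n}\sn\asymp\sqrt{\sn\log\dn/n}$. What your construction buys is a more transparent verification that the hypotheses lie in $\Pnn$: your loading vector has $\norm{\blambda_{v}}=\sqrt{c^{2}+\rho^{2}}$ bounded away from zero, so the requirement $\liminf\smin{\bLambdann}>0$ in \ref{ass3} holds by design, whereas the paper's implicit loading matrix $[\sqrt{\gamma_{n}}\bg_{j}\;\;\sqrt{\kappa_{n}}\bvarepsilon_{\dn}]$ has smallest singular value $\sqrt{\gamma_{n}\sn}\to 0$, making that condition less obviously satisfied; what the exact KL computation in the paper buys is sharper constants, which are not needed for the rate. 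Two small points you should make explicit when writing this up: fix the unit vector on each packed support to be the uniform one, $(\sn-1)^{-1/2}$ on each coordinate, so that the intersection bound $\abs{S_{i}\cap S_{j}}\leq(\sn-1)/4$ yields $\norm{v^{(i)}-v^{(j)}}^{2}\geq 3/2$ and hence your lower constant $c_{1}$; and note that $\log(\dn/\sn)\asymp\log\dn$ follows from $\sn^{2}=O(\log\dn)$ in \ref{ass1}, which is what lets the packing exponent match the target rate.
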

\vspace*{-3ex}
\begin{proof}
	We use Fano's lemma to derive a lower bound for the minimax risk.
	Let $\F=\{\bOmega_{(1)},\dots,\bOmega_{(m_{n})}\}$, $m_{n}\geq 2$ be a finite subset of $\Pnn$ and let $\widehat{\bOmega}_{n}$ be an estimate of $\bOmegann$ based on iid samples $\by_{1:n}$.
	Suppose for all $j\neq j'$, we have $e(\bOmega_{(j)},\bOmega_{(j')})\geq e_{m_{n}}$ for some metric $e$ and $\kl{\bOmega_{(j)}}{\bOmega_{(j')}}\leq K_{m_{n}}$. Letting $\eE_{j}$ denote the expectation under $\mn_{\dn}(\bzero,\bOmega_{(j)}^{-1})$, Fano's lemma \citeplatex{yu1997} implies
	\be
	\max_{1\leq j\leq m_{n}} \eE_{j} e(\bOmega_{(j)},\bOmega_{(j')}) \leq \frac{e_{m_{n}}}{2}\left(1- \frac{K_{m_{n}}+\log2}{\log m_{n}}\right).\label{eq:minmaxbound}
	\ee
	
	We now construct the finite class $\F$.
	Let $r_{n}=\dn-1$. Define $\M=\{\bx\in \rR^{r_{n}}: x_{j}\in\{0,1\}\text{ for all } j,~ \sum_{j}x_{j}=\sn \}$ to be the collection of all binary vectors of length $r_{n}$ with exactly $\sn$ ones and
	let $\hamm{\bx}{\by}$ be the Hamming distance between two binary vectors $\bx$ and $\by$.
	Let $\bg_{j}=(\bx_{j},0)$ denote the $\dn$-dimensional vector obtained by appending zero at the end of $\bx_{j}$ with $\bx_{j}\in \M$.
	With this definitions, set
	$\bOmega_{(j)}=\beta_{n}\bI_{\dn}+\gamma_{n} \bg_{j}\bg_{j}\trans+ \kappa_{n} \bvarepsilon_{\dn}\bvarepsilon_{\dn}\trans$ where $\bvarepsilon_{\dn}$ is the vector with 1 in the $\dn\th$ coordinate and zero elsewhere, and $\gamma_{n}\leq\beta_{n}\leq \kappa_{n}$ are positive sequences to be chosen below.
	
	Observe that if $\hamm{\bg_{j}}{\bg_{j'}}=\sn-p_{n}$ for $j\neq j$, then $\bg_{j}\trans \bg_{j'}=p_{n}$.
	Note also that $\bOmega_{(j)}-\bOmega_{(j')}=\gamma_{n}(\bg_{j}\bg_{j}\trans - \bg_{j'}\bg_{j'}\trans)$.
	The nonzero eigenvalues of the matrix $\bB=\bg_{j}\bg_{j}\trans - \bg_{j'}\bg_{j'}\trans$ are $(\sqrt{\sn^{2}-p_{n}^{2}},-\sqrt{\sn^{2}-p_{n}^{2}})$, since $\mathrm{rank}(\bB)=2$, $\trace(\bB)=0$ and $\trace(\bB^{2})=2(\sn^{2}-p_{n}^{2})$.
	This implies that $\specnorm{\bOmega_{(j)}-\bOmega_{(j')}}=\gamma_{n} (\sn^{2}-p_{n}^{2})$.
	Since $\bg_{j}\in\M$ for all $j$, by symmetry $\abs{\bOmega_{(j)}}=\abs{\bOmega_{(j')}}$ for all $j\neq j'$.
	Hence  $\kl{\bOmega_{(j)}}{\bOmega_{(j')}}=\frac{1}{2}\trace\left\{\bOmega_{(j)}\bOmega_{(j')}^{-1}-\dn\right\}$.
	Let $\bA=\beta_{n}(\bA+t\bg_{j}\bg_{j}\trans)$, where $\bA$ is a diagonal matrix with the first $(\dn-1)$ diagonals equaling one and the $\dn\th$ entry being $(1+\kappa_{n}/\beta_{n})$.
	Subsequently, applying the Woodbury matrix inversion identity, we get $(\bA+\tn\bg_{j'}\bg_{j'}\trans)^{-1}=\bA^{-1}-\frac{\tn}{1+\tn \sn}\bg_{j'}\bg_{j'}\trans$ so that
	$\bOmega_{(j)}\bOmega_{(j')}^{-1}=\bI_{\dn}-\frac{\tn}{1+\tn \sn}\bg_{j}\bg_{j}\trans+ \tn\bg_{j'}\bg_{j'}\trans-\frac{\tn^{2}p_{n}}{1+\tn\sn}\bg_{j}\bg_{j'}\trans$ with $\tn=\gamma_{n}/\beta_{n}$.
	Observing that $\trace(\bg_{j}\bg_{j}\trans)=\sn$ and $\trace(\bg_{j}\bg_{j'}\trans)=p_{n}$, we get $\kl{\bOmega_{(j)}}{\bOmega_{(j')}}=\frac{1}{2}\frac{\tn^{2}}{\tn\sn+1}(\sn^{2}-p_{n}^{2})$.
	
	Now, from \citetlatex[Lemma 5.6]{pati2014}, given $\sn\geq 6$, there exists a subset $\M_{0}=\{\bx_{1},\dots,\bx_{m_{n}}\}$ of $\M$ with $m_{n}\asymp \exp(C\sn\log\dn)$ and $\hamm{\bx_{j}}{\bx_{j'}}\geq \sn/3$ for all $1\leq j\neq j'\leq m_{n}$, where $C$ is a positive constant independent of $\dn$. 
	We set $\F=\{\bOmega_{(j)}:\bx_{j}\in\M_{0}\}$.
	Using the aforementioned lemma and preceding discussions, we have that $p_{n}$ is bounded above by $2\sn/3$ fo all pairs $j\neq j' \in\M_{0}$. 
	Hence, we can choose $e_{m_{n}}\geq c_{1}\gamma_{n}\sn$ and $K_{m_{n}}=(t\sn)^{2}=(\gamma_{n}\sn/\beta_{n})^{2}$ in \eqref{eq:minmaxbound}.
	To obtain $e_{m_{n}}$ as a lower bound to the minimax risk up to a constant, we set $K_{m_{n}}/\log m_{n}=C'$ for some $C'\in(0,1)$.
	Since $\log m_{n}\asymp C\sn \log\dn$, we obtain by choosing $\beta_{n}, \kappa_{n}\asymp 1 $, that $e_{m_{n}}^{2}=C(\gamma_{n}\sn)^{2}=C\frac{\sn\log\dn}{n}$.	
\end{proof}

\section{Additional Figures}
\subsection{Uncertainty Quantification}
\begin{figure}[H]
	\begin{subfigure}[b]{.99\linewidth}
		\includegraphics[width=\linewidth]{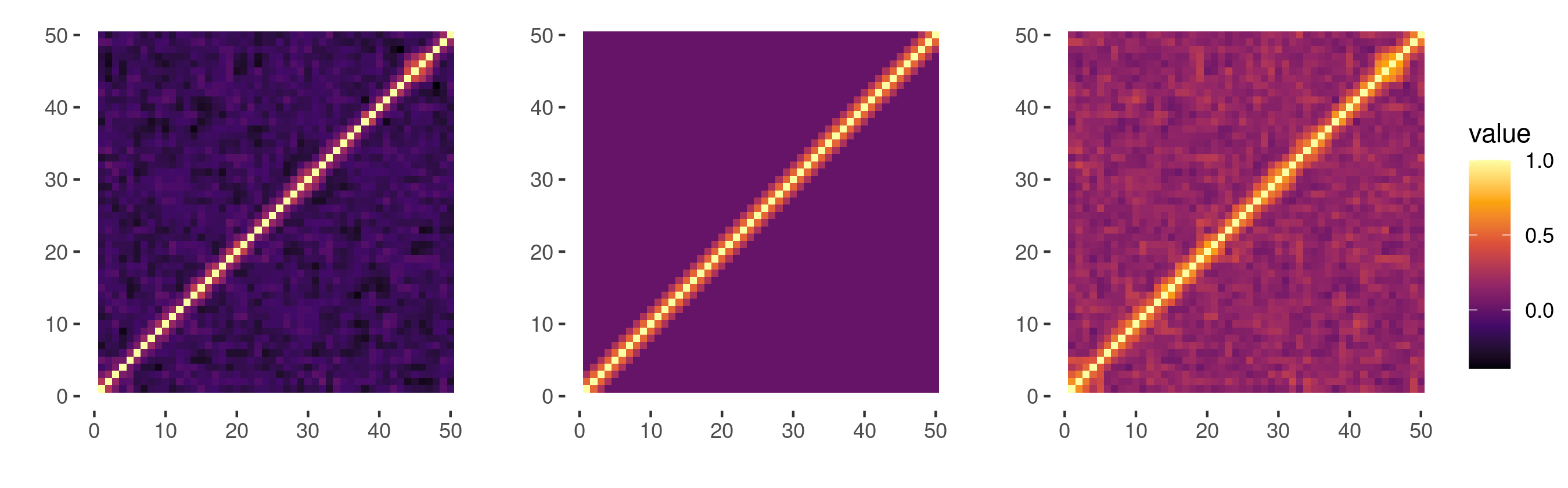}
		\caption{Quantiles for the AR(2) case}
		\label{fig:AR_quant}
	\end{subfigure}
	\vskip 10pt
	\begin{subfigure}[b]{.99\linewidth}
		\includegraphics[width=\linewidth]{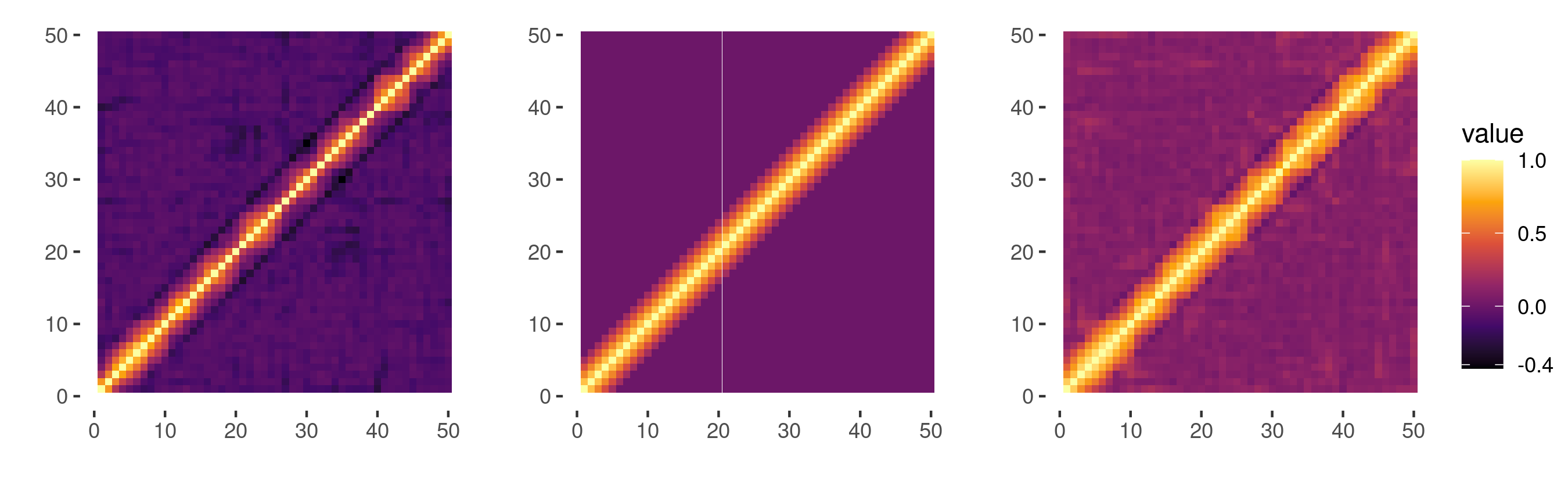}
		\caption{Quantiles for the banded case}
		\label{fig:banded_quant}
	\end{subfigure}
	\vskip 10pt
	\begin{subfigure}[b]{.99\linewidth}
		\includegraphics[width=\linewidth]{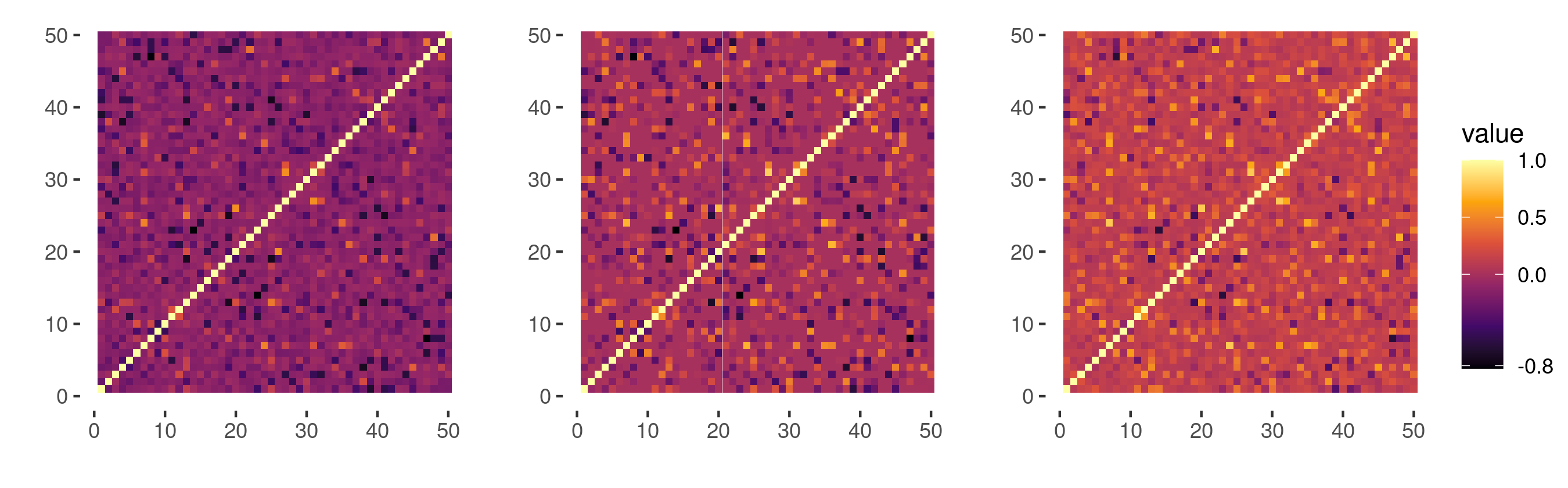}
		\caption{Quantiles for the RSM case}
		\label{fig:RSM_quant}
	\end{subfigure}
	\caption{Results of simulation experiments: Uncertainty quantification: In each panel, the middle heatmap is the simulation truth {of  $\bR=\diag(\bOmega)^{-\half} ~\bOmega~ \diag(\bOmega)^{-\half}$}, the left and the right heatmaps show the lower $2.5\%$ and the upper $97.5\%$ quantiles of $\bR$, respectively, estimated from the MCMC samples.}
	\label{fig:quantiles}
\end{figure}

\clearpage\newpage
\subsection{Graph Estimation}
{True and estimated graphs derived from the respective precision matrices as described in Section \ref{sec: sim studies} of the main paper using the  \texttt{GGally} package \citeplatex{ggally} in \texttt{R}.}
\begin{figure}[H]
	\begin{subfigure}[b]{.99\linewidth}
		\includegraphics[width=0.325\linewidth]{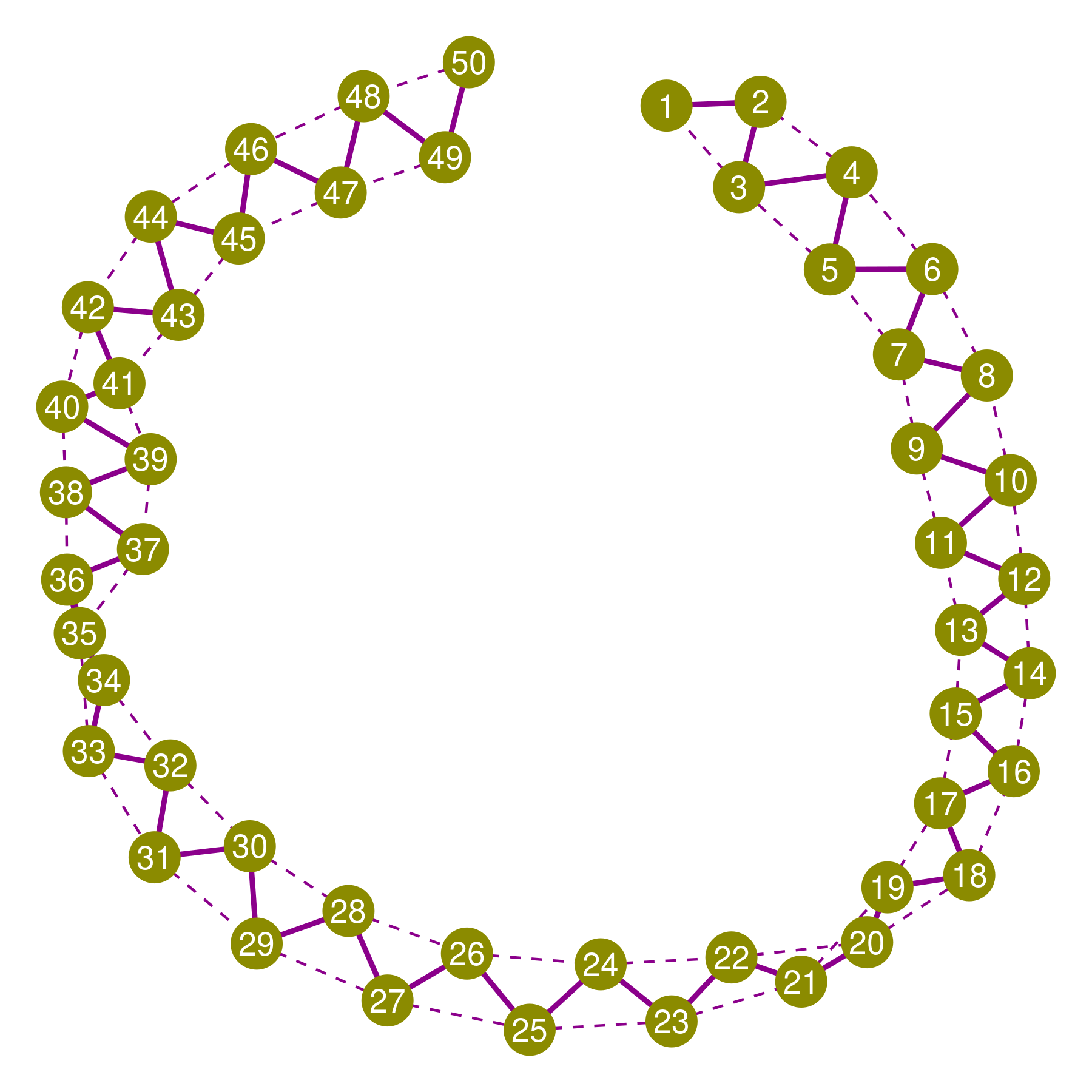}
		\includegraphics[width=0.325\linewidth]{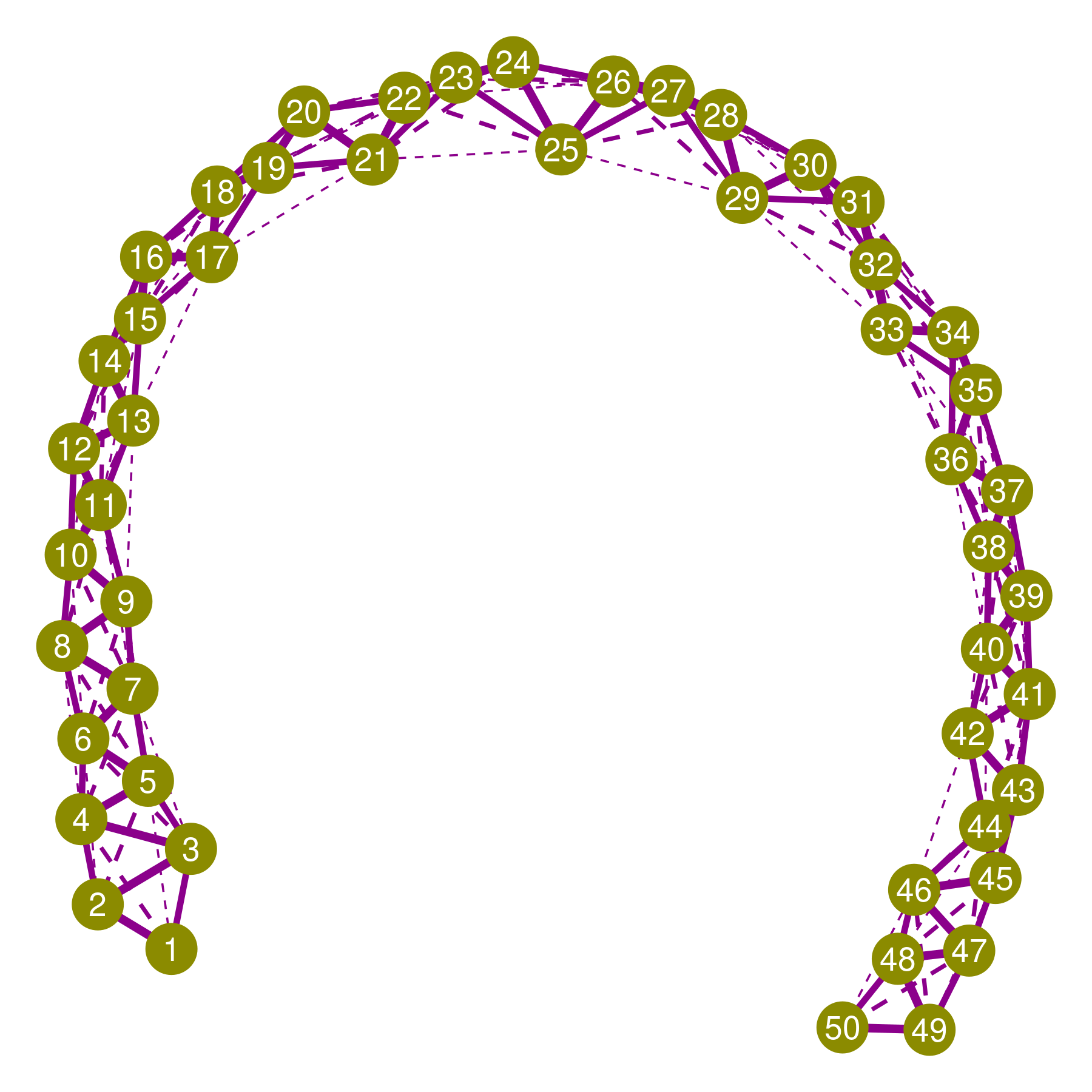}
		\includegraphics[width=0.325\linewidth]{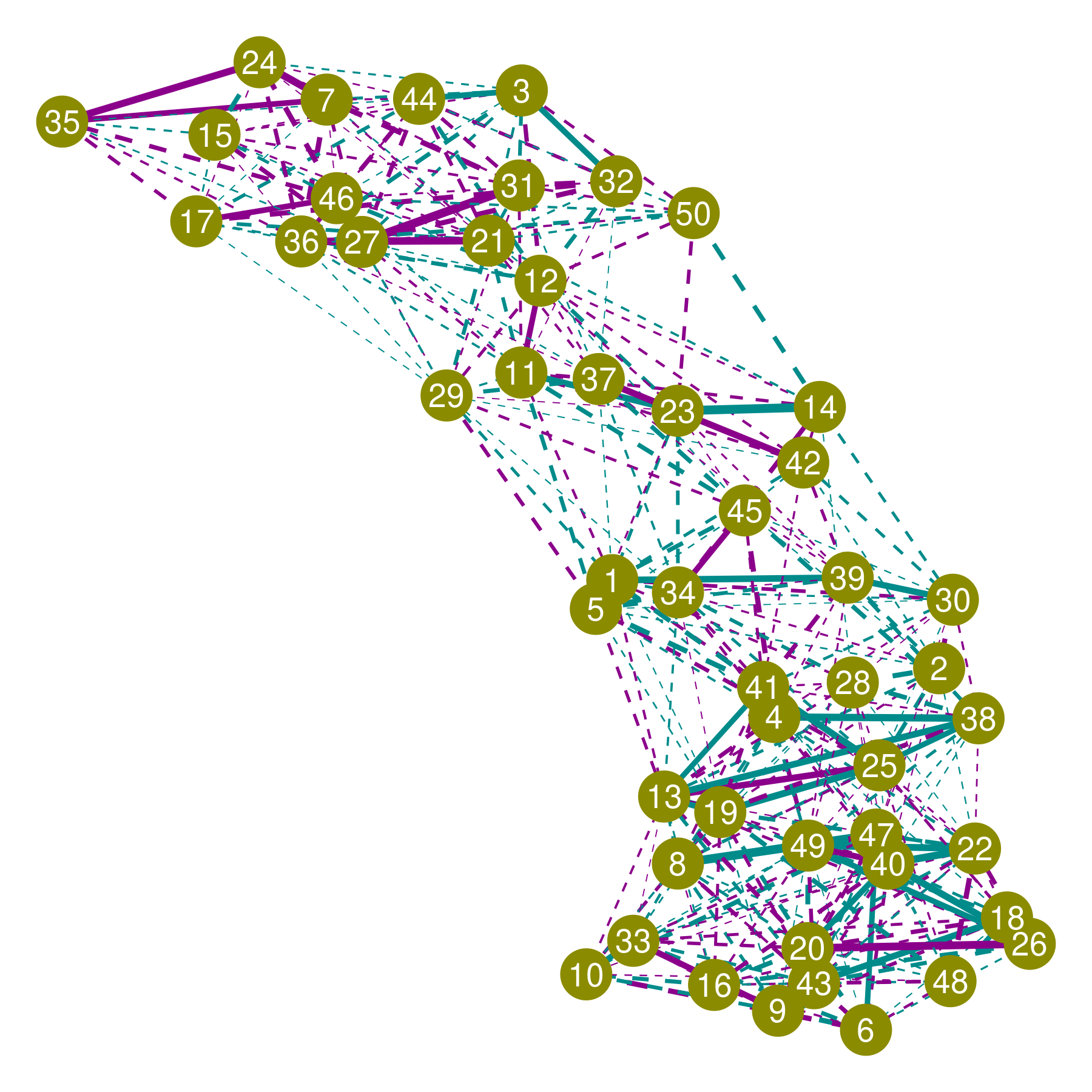}
		\caption{True graphs.}
		\label{fig: sm truegraphs}
	\end{subfigure}
	\vskip 20pt
	\begin{subfigure}[b]{.99\linewidth}
		\includegraphics[width=0.325\linewidth]{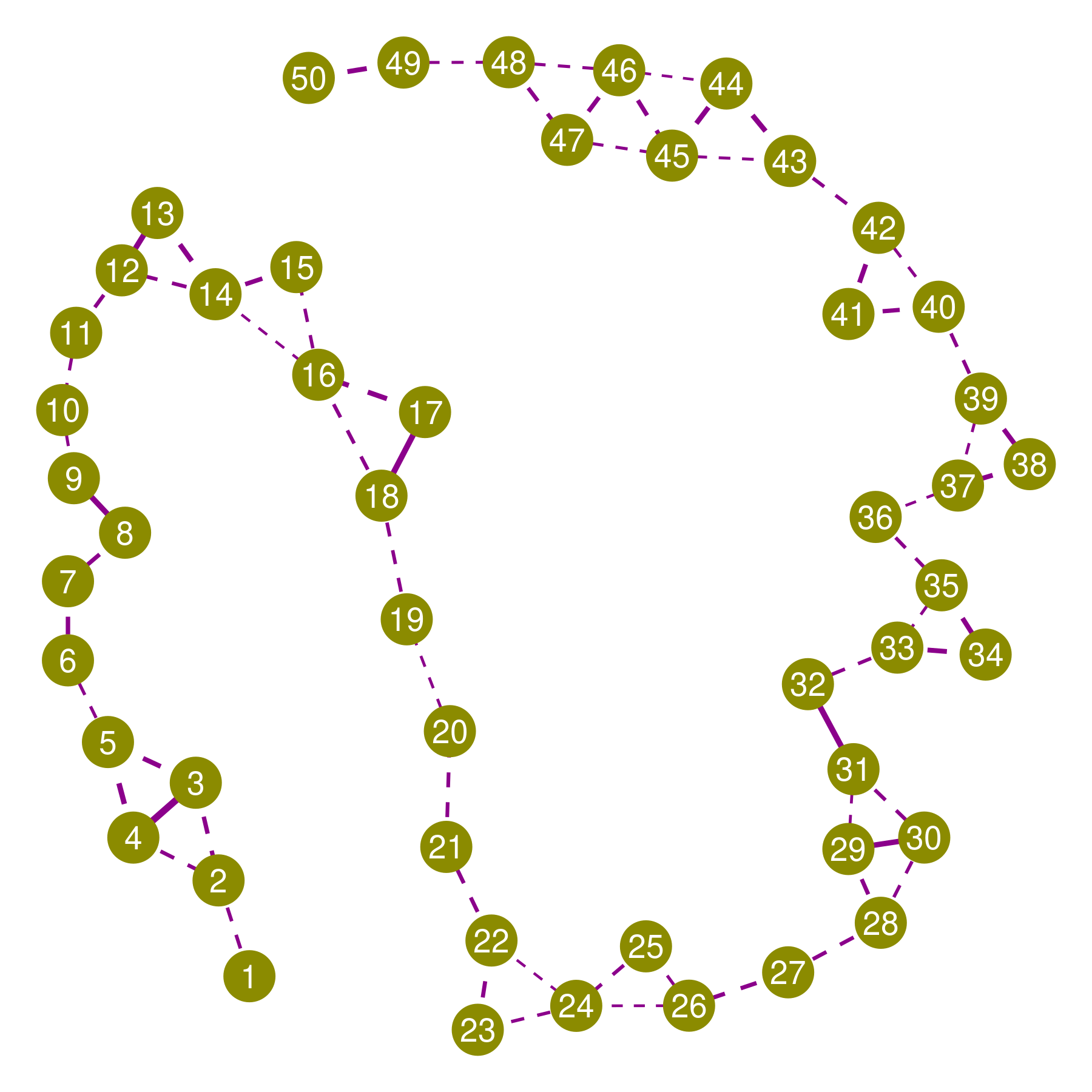}
		\includegraphics[width=0.325\linewidth]{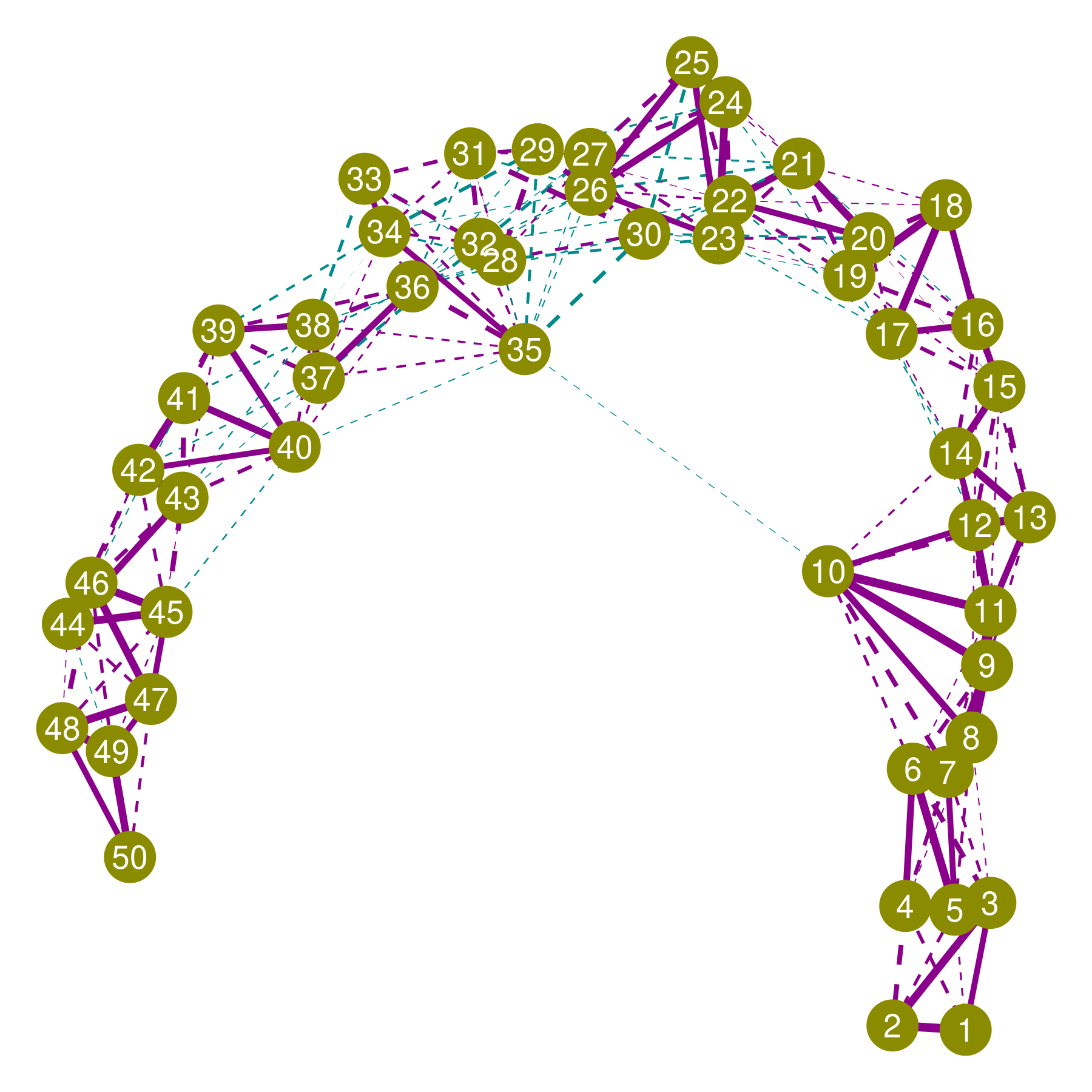}
		\includegraphics[width=0.325\linewidth]{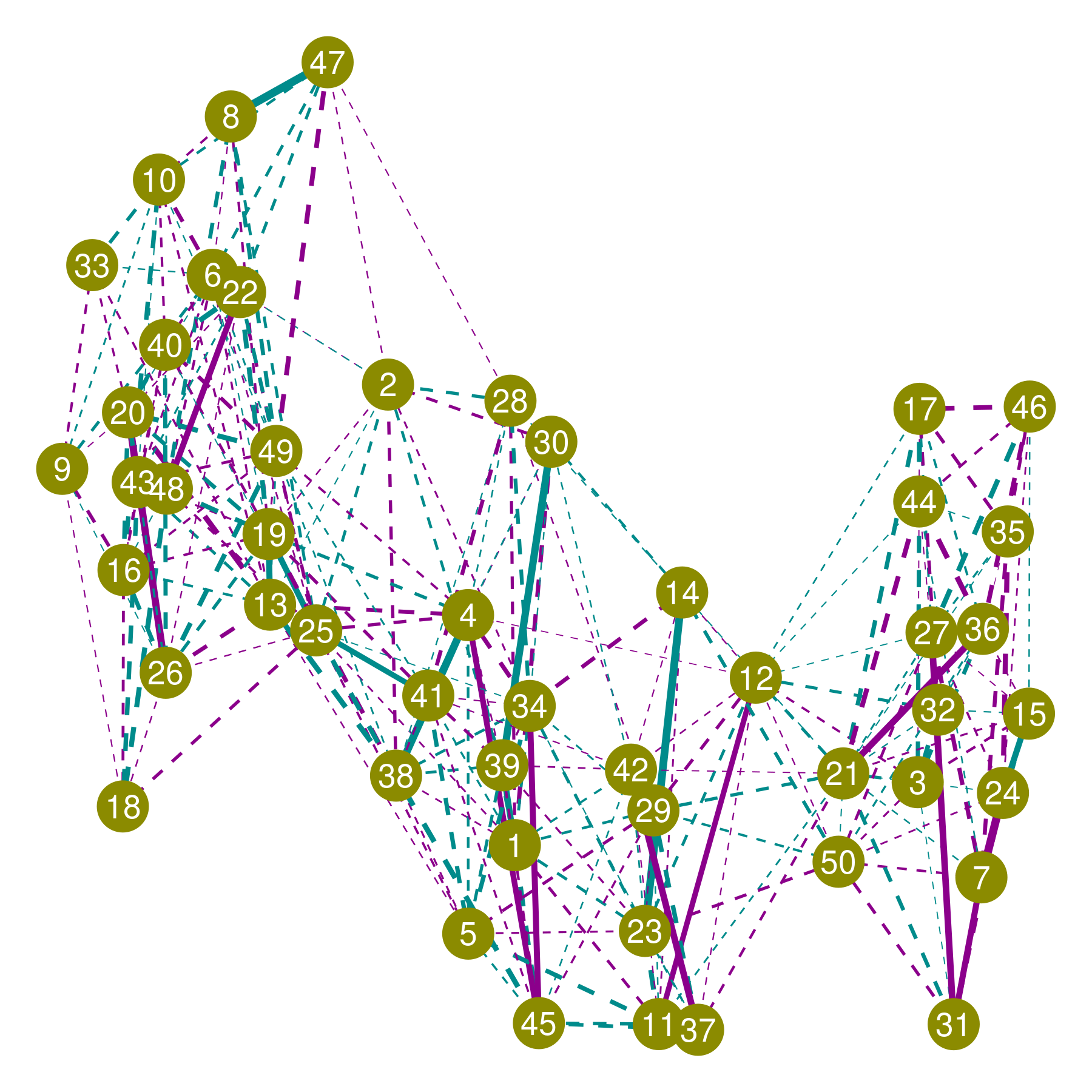}
		\caption{Estimated graphs by the PF method.}
		\label{fig: sm graphs_PF}
	\end{subfigure}
	\vskip 20pt
	\begin{subfigure}[b]{.99\linewidth}
		\includegraphics[width=0.325\linewidth]{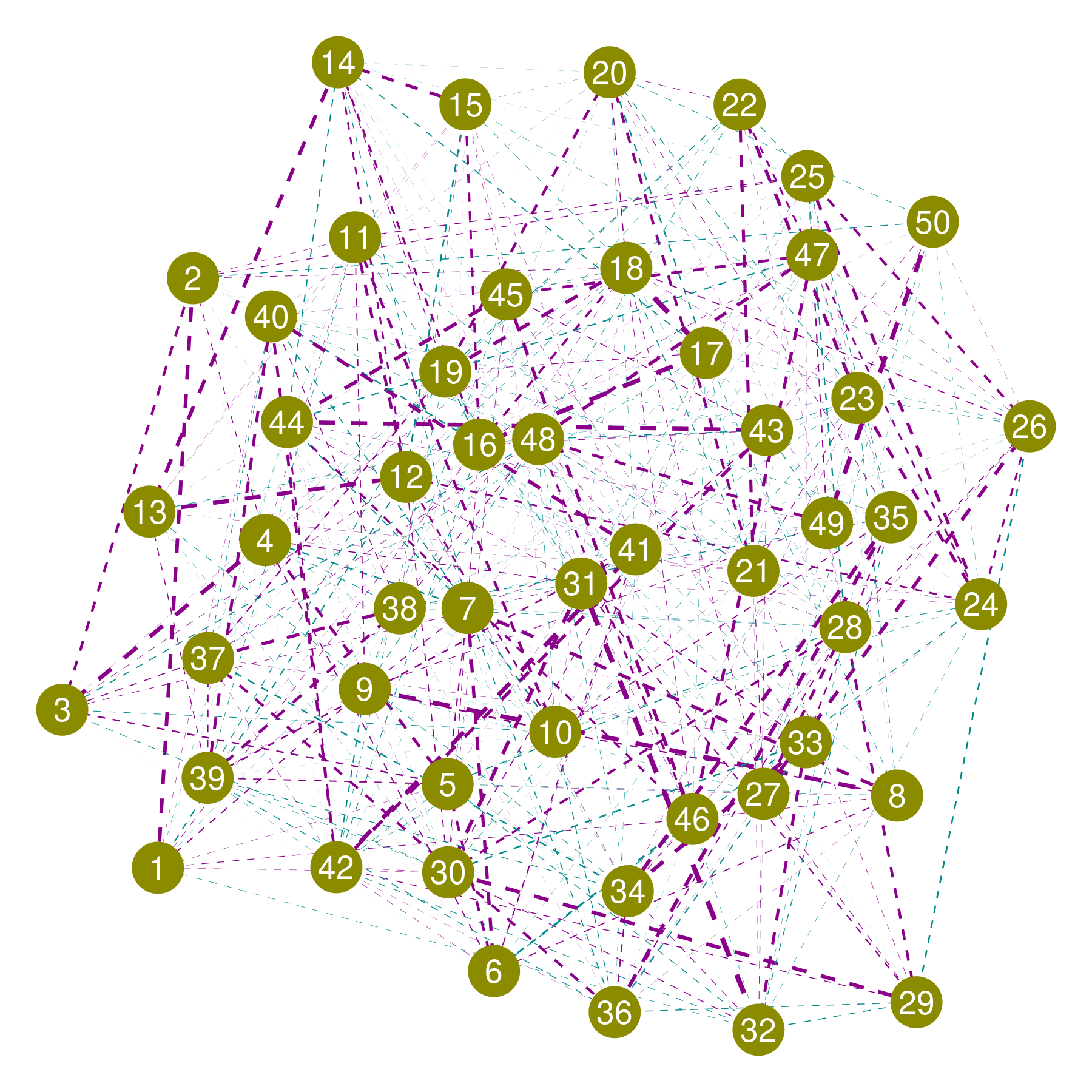}
		\includegraphics[width=0.325\linewidth]{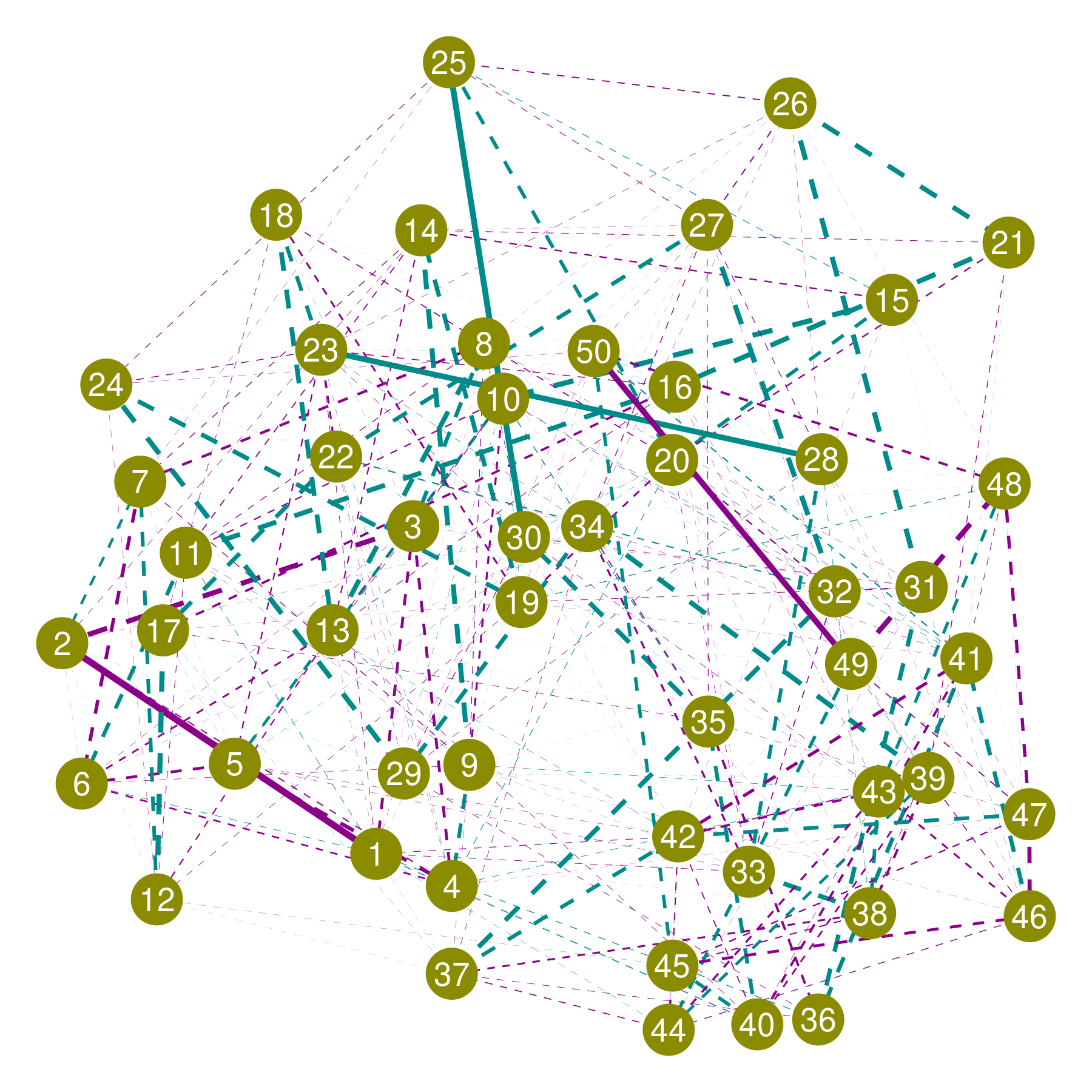}
		\includegraphics[width=0.325\linewidth]{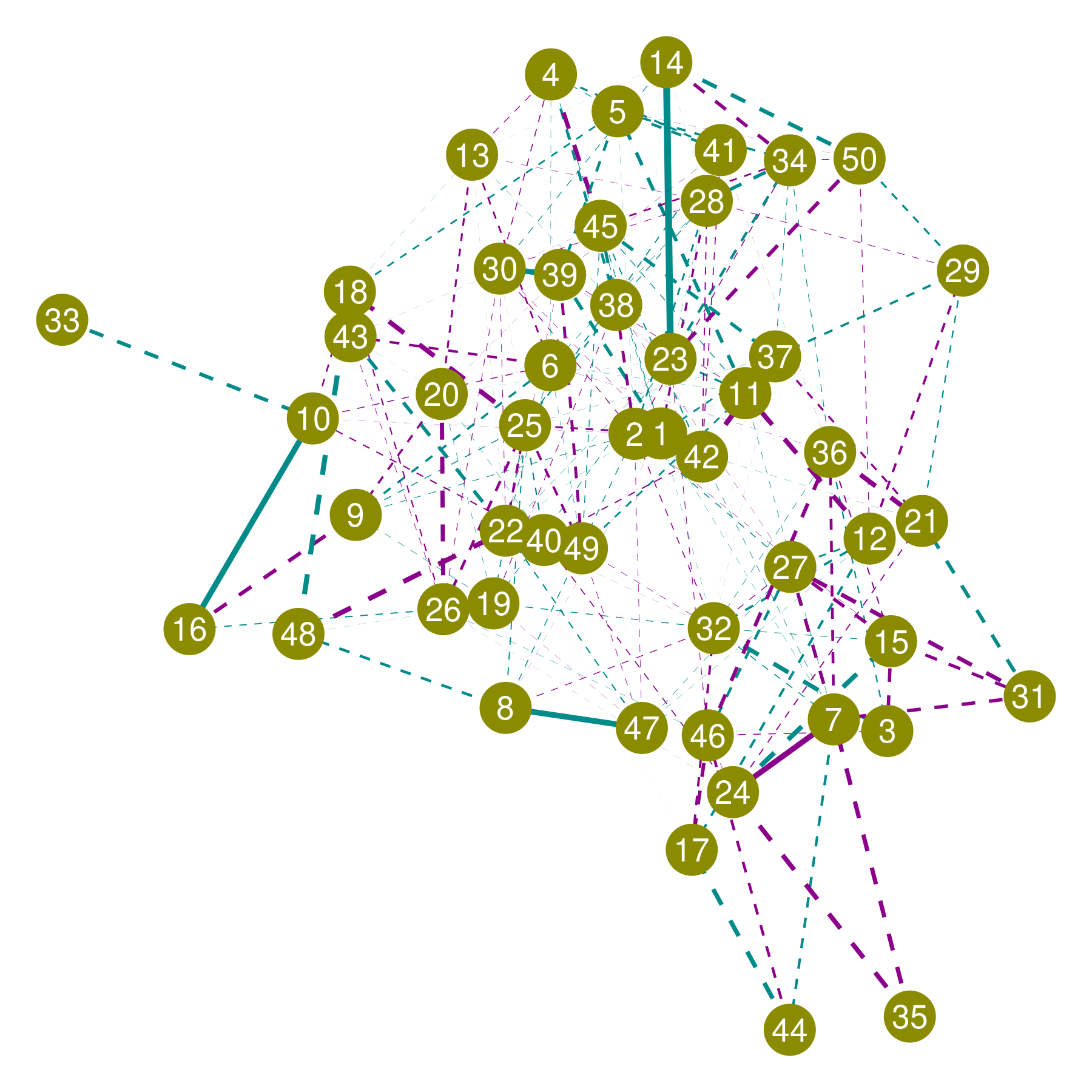}
		\caption{Estimated graphs by Bagus.}
		\label{fig: sm graphs_bagus}
	\end{subfigure}	
\end{figure}
\begin{figure}[H]\ContinuedFloat
	\begin{subfigure}[b]{.99\linewidth}
		\includegraphics[width=0.325\linewidth]{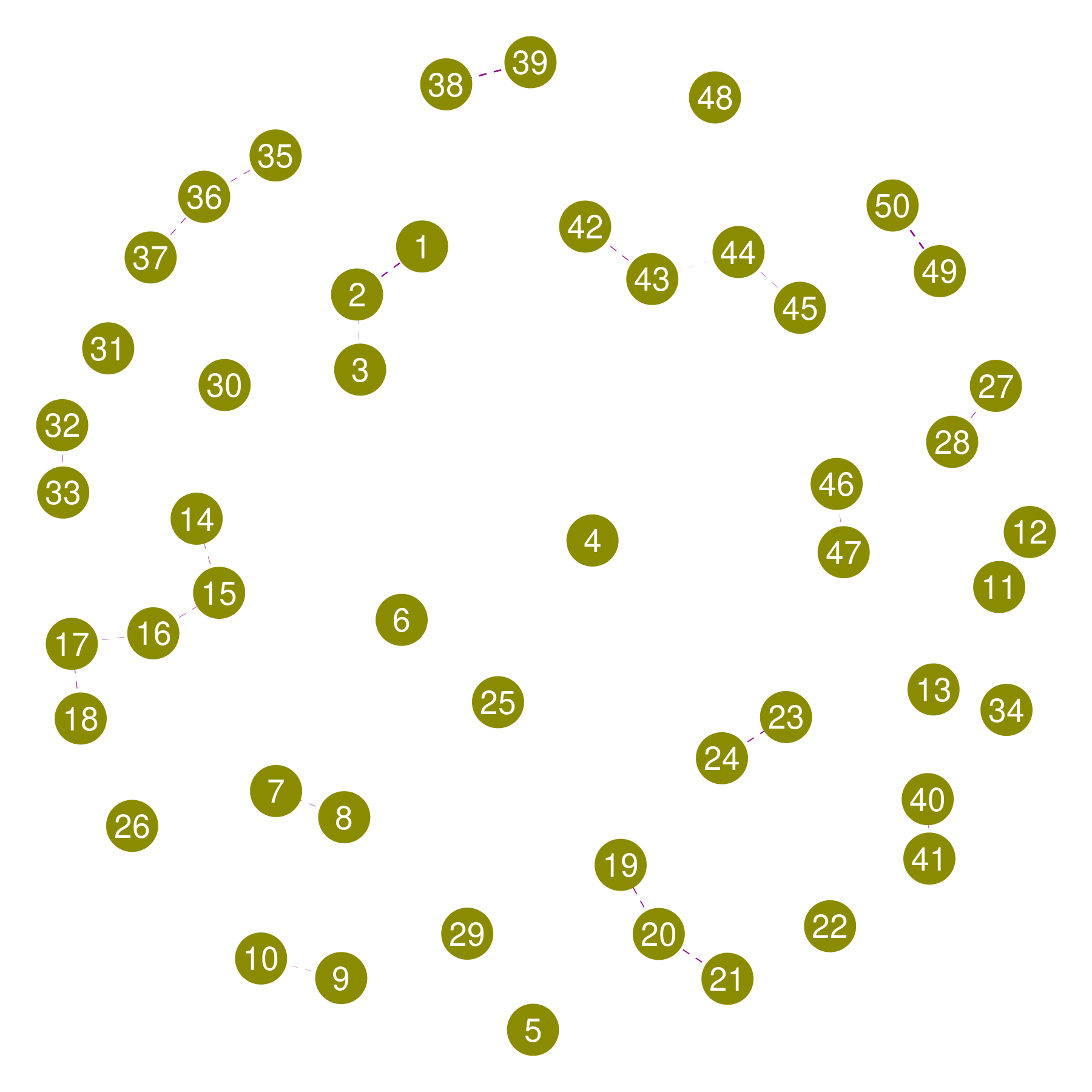}
		\includegraphics[width=0.325\linewidth]{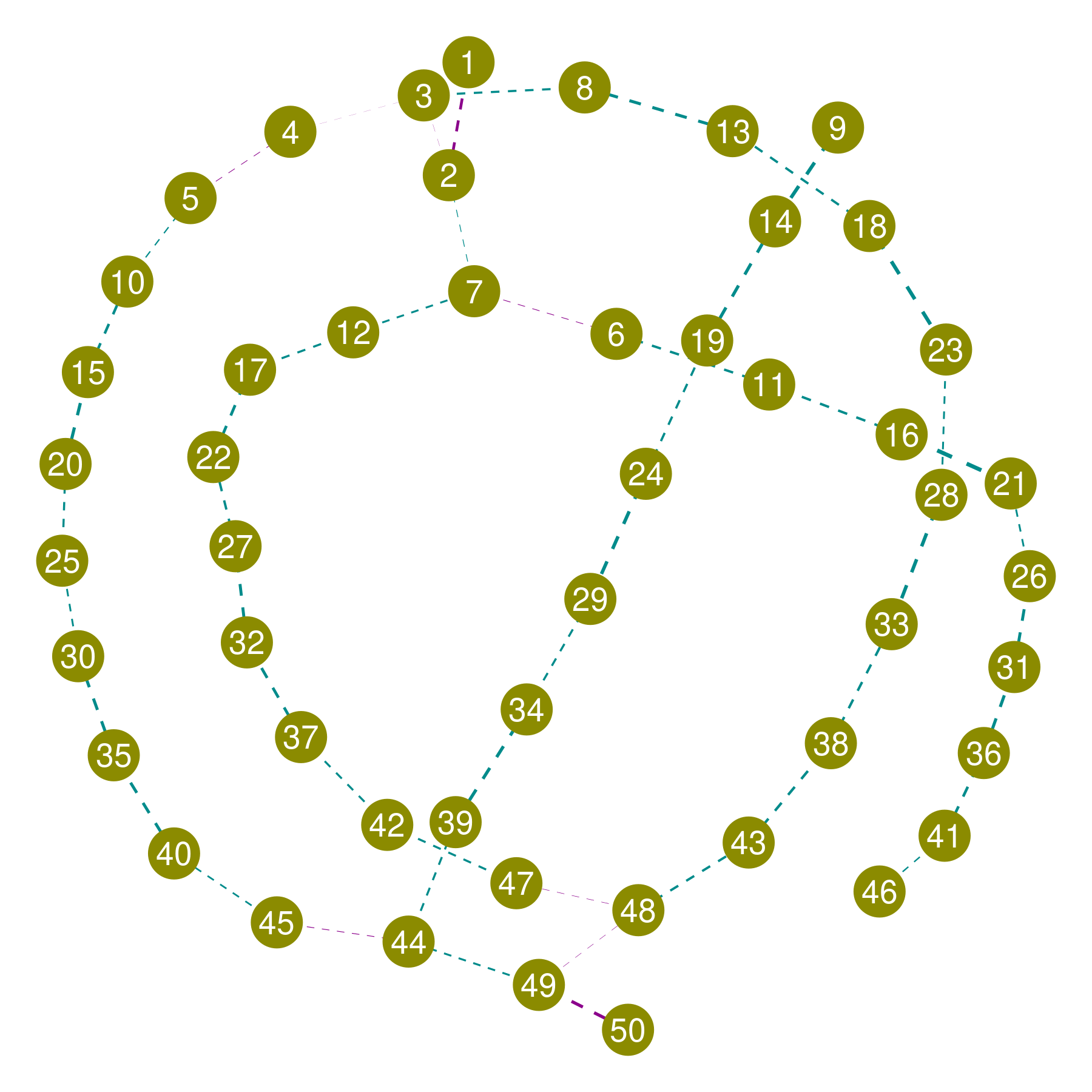}
		\includegraphics[width=0.325\linewidth]{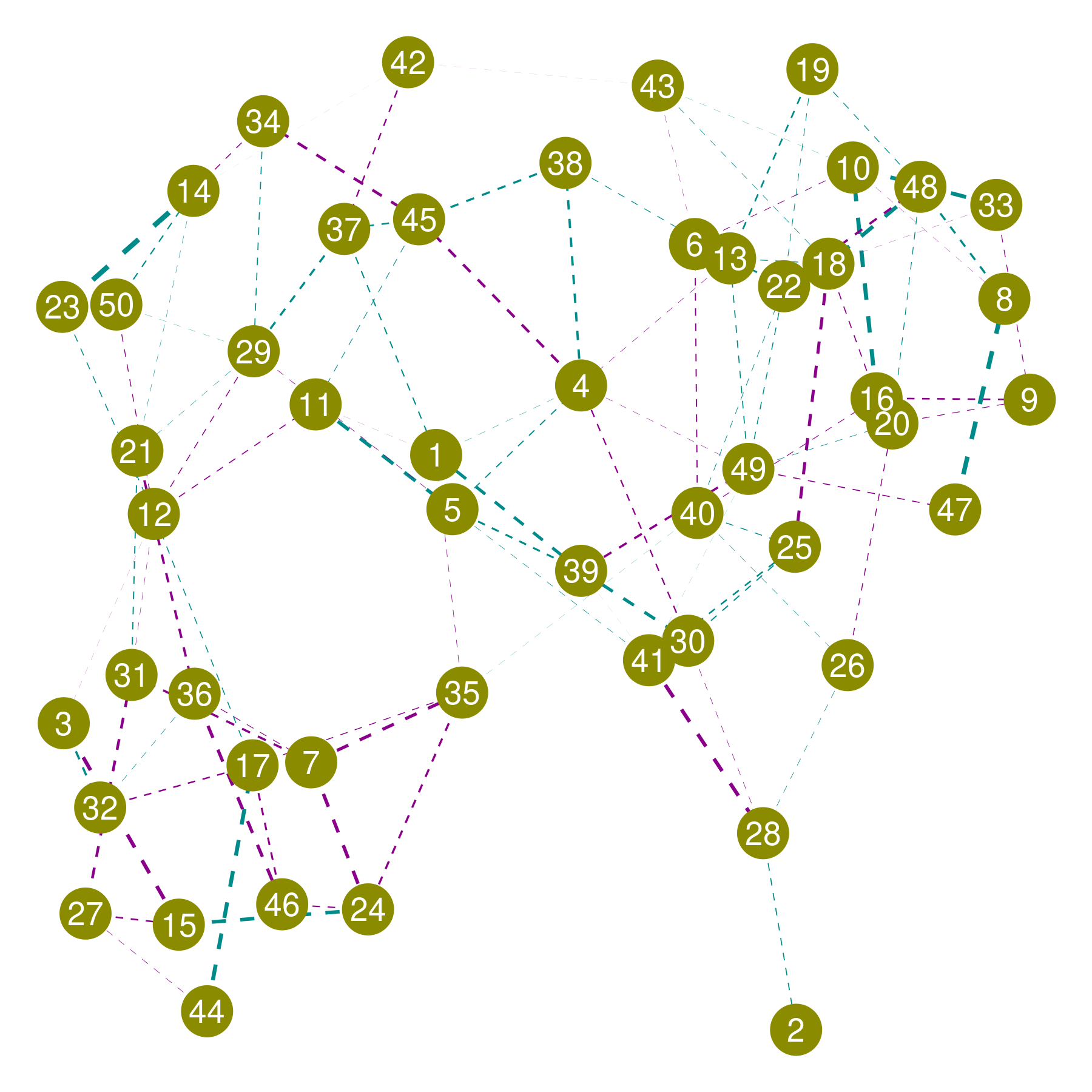}
		\caption{Estimated graphs by M\&B.}
		\label{sm fig: graphs_mnb}
	\end{subfigure}	
	\vskip 20pt
	\begin{subfigure}[b]{.99\linewidth}
		\includegraphics[width=0.325\linewidth]{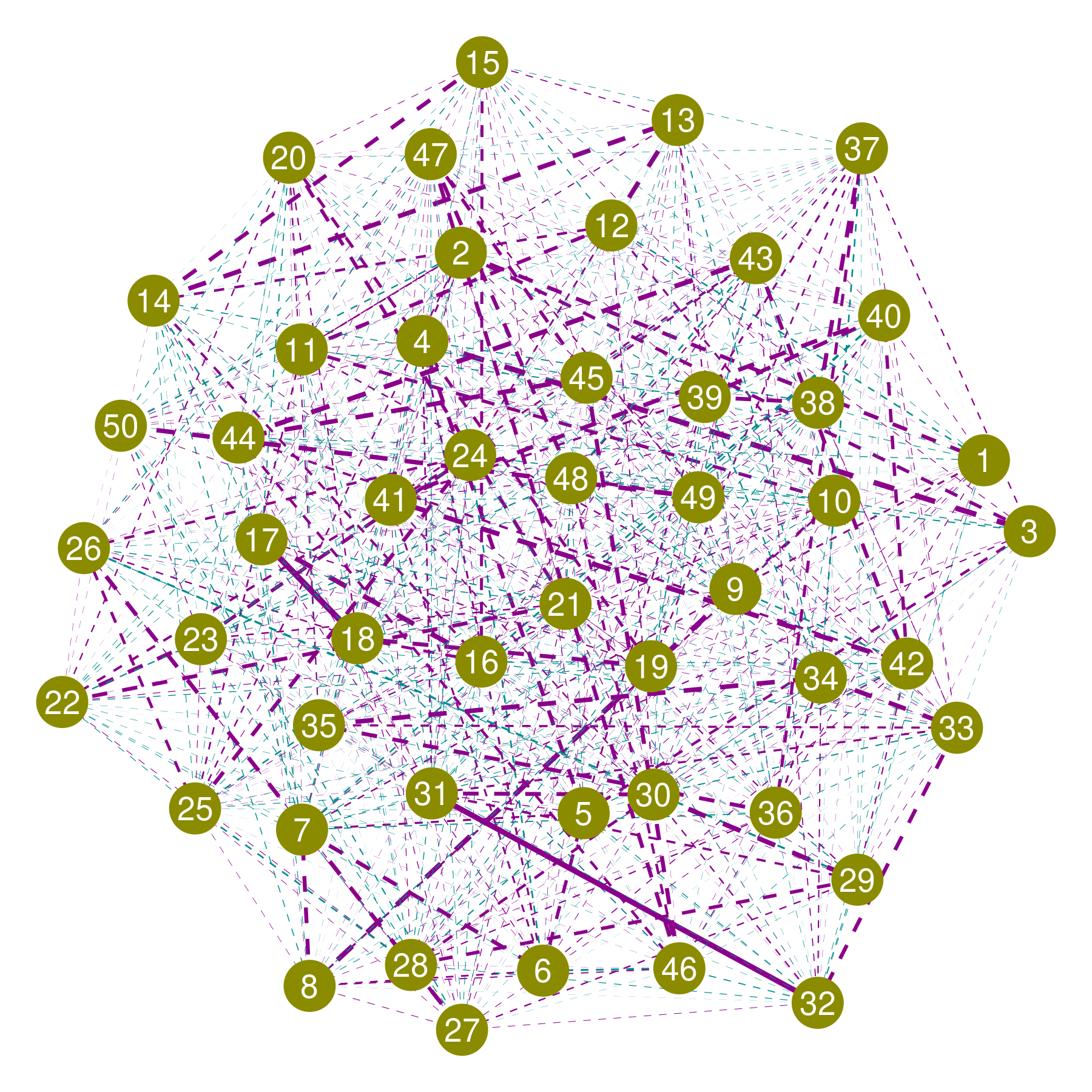}
		\includegraphics[width=0.325\linewidth]{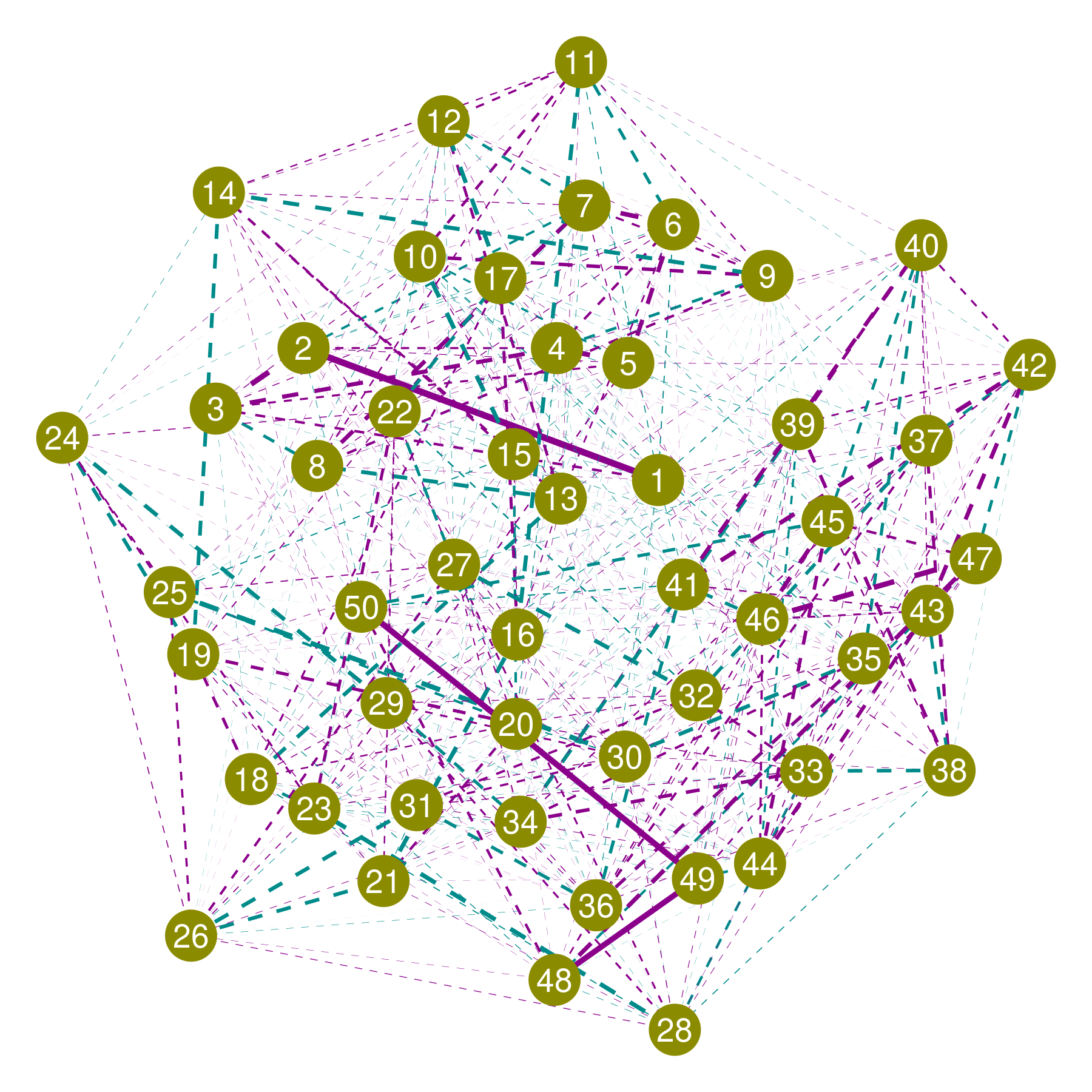}
		\includegraphics[width=0.325\linewidth]{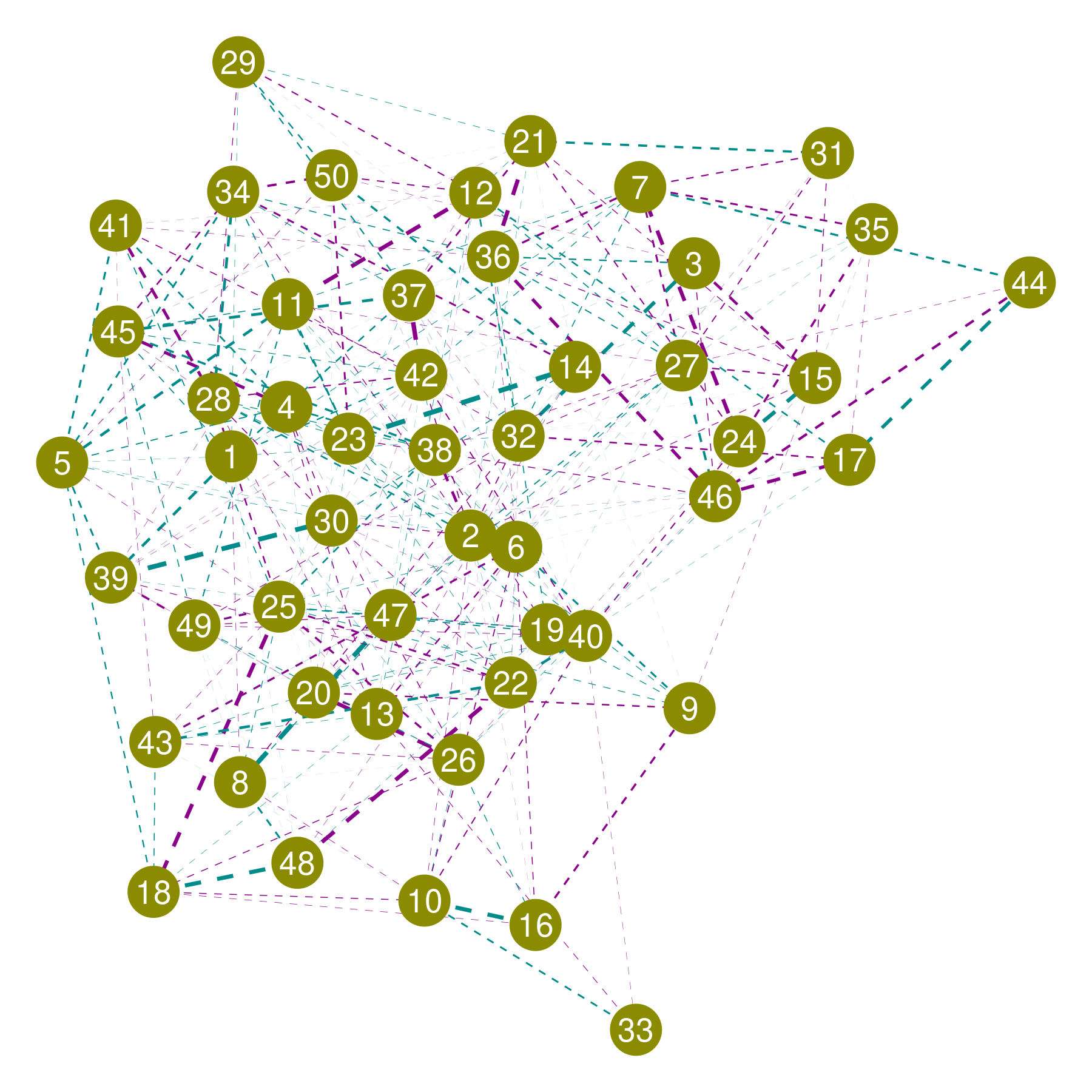}
		\caption{Estimated graphs by Glasso.}
		\label{sm fig: graphs_glasso}
	\end{subfigure}
	\caption{Results of simulation experiments: Graph recovery: Panel (a) shows the true graphs for AR(2), banded and RSM structures from left to right; 
		panels (b), (c), (d) and (e) show the corresponding estimated graphs for the our proposed PF and Bagus, M\&B and Glasso methods, respectively. 
		Positive (negative) associations are represented by blue (magenta) edges and edge-widths are proportional to the association strength. 
		If the absolute value of a partial correlation coefficient is greater (less) than $0.5$, the corresponding edge is represented by a solid (dotted) line.
		}
	\label{fig: sm graphs}
\end{figure}

\newpage
\subsection{Precision Matrix Estimation}
\begin{figure}[H]
	\centering
	\begin{subfigure}[b]{\linewidth}
		\includegraphics[width=.31\linewidth]{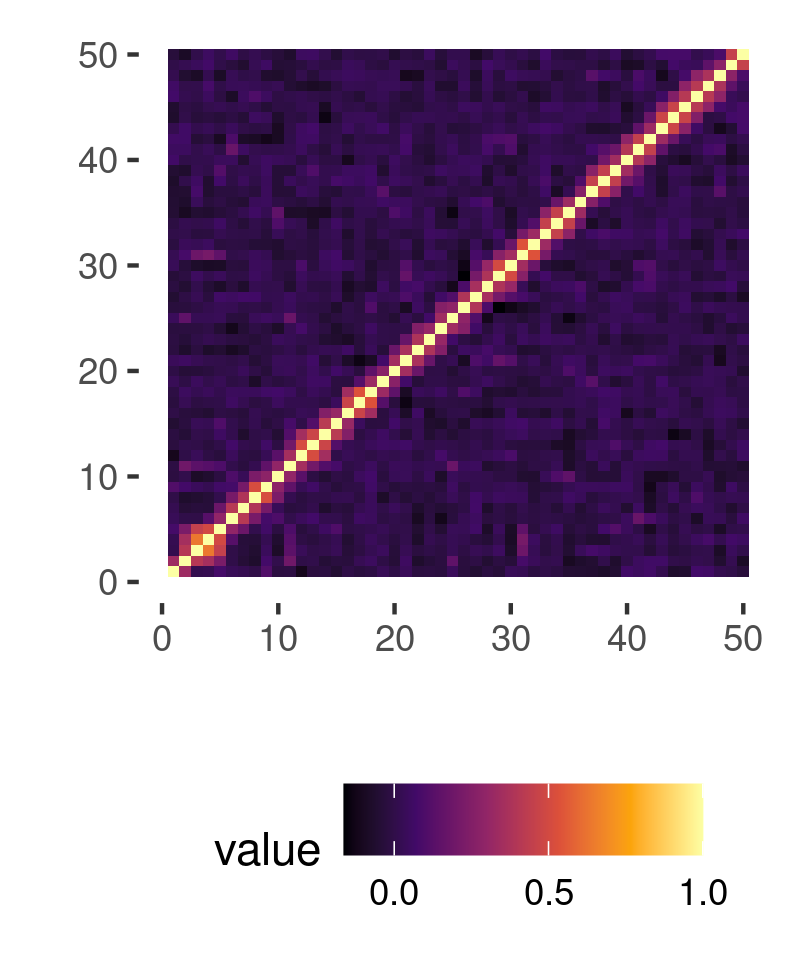}
		\includegraphics[width=.31\linewidth]{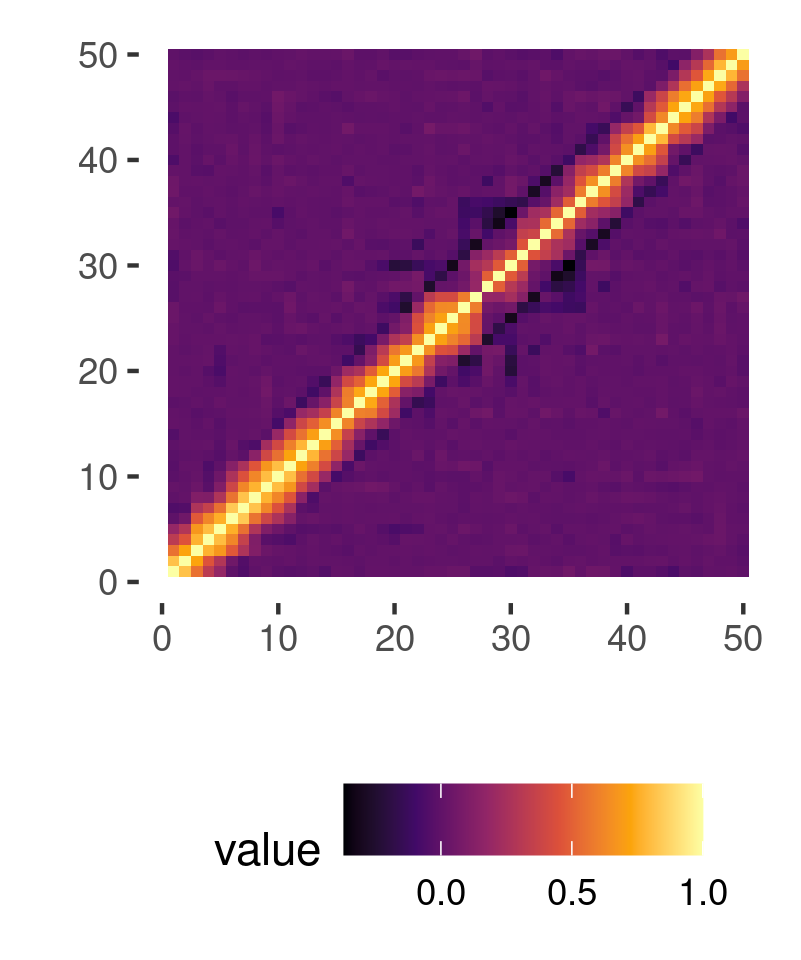}
		\includegraphics[width=.31\linewidth]{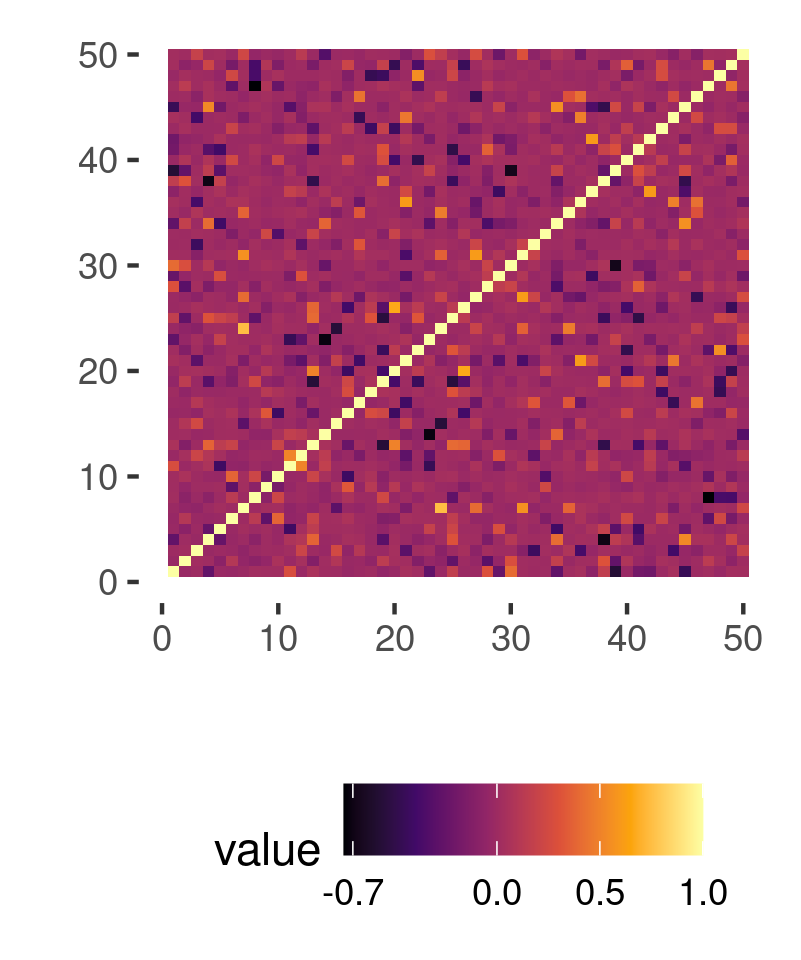}		
		\caption{Results obtained by the proposed PF method. }
		\label{fig:precision_PF}
	\end{subfigure}
	\vskip 10pt
	\begin{subfigure}[b]{.99\linewidth}
		\includegraphics[width=.31\linewidth]{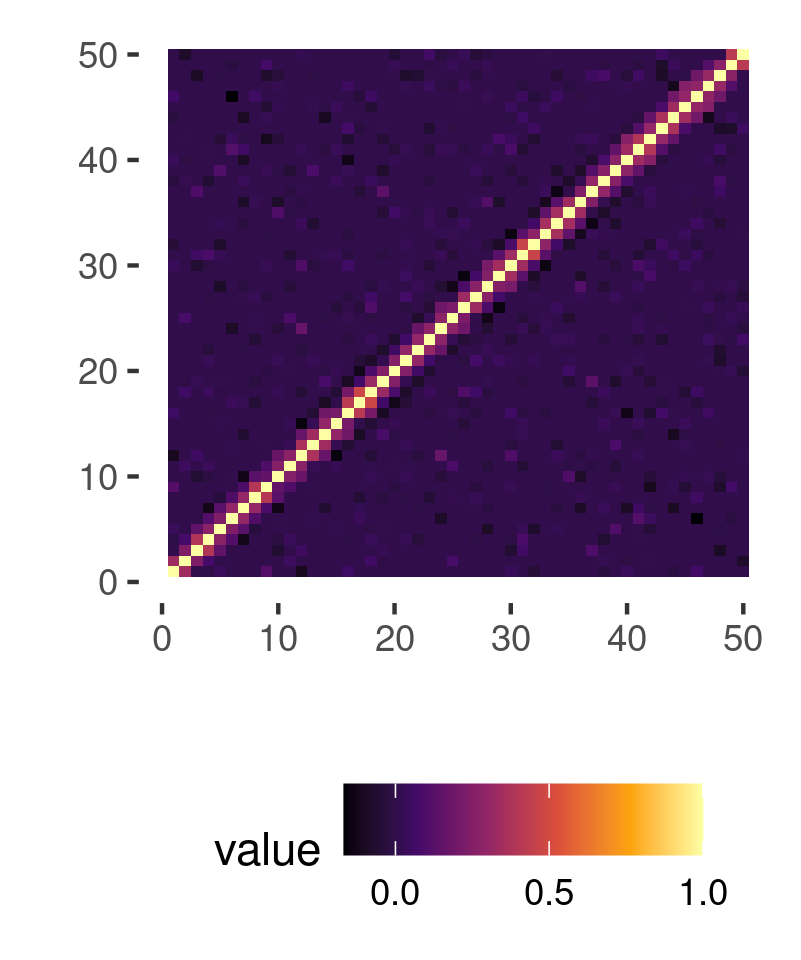}
		\includegraphics[width=.31\linewidth]{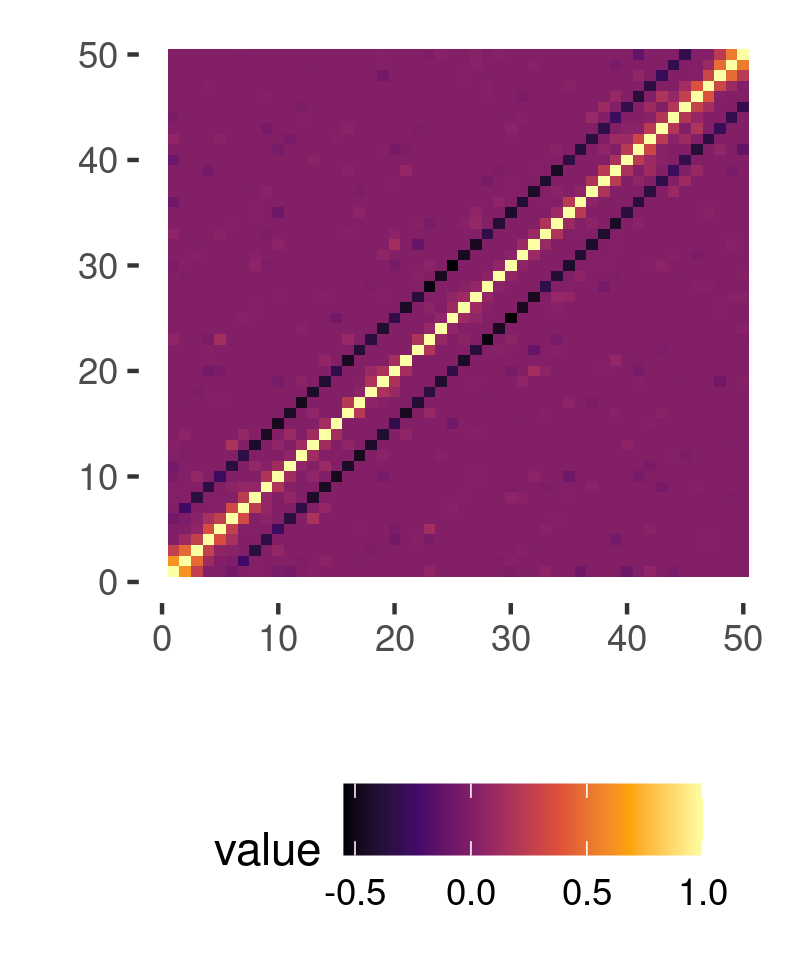}
		\includegraphics[width=.31\linewidth]{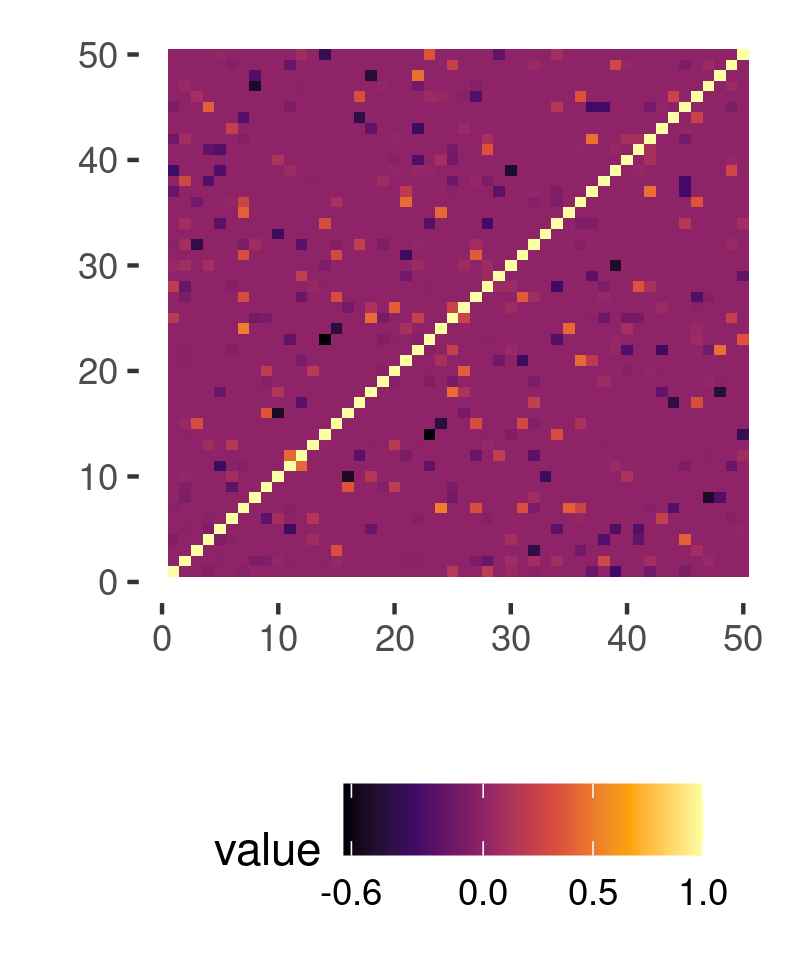}		
		\caption{Results obtained by Bagus.}
		\label{fig:precision_bagus}
	\end{subfigure}
	\vskip 10pt
	\begin{subfigure}[b]{.99\linewidth}
		\includegraphics[width=.31\linewidth]{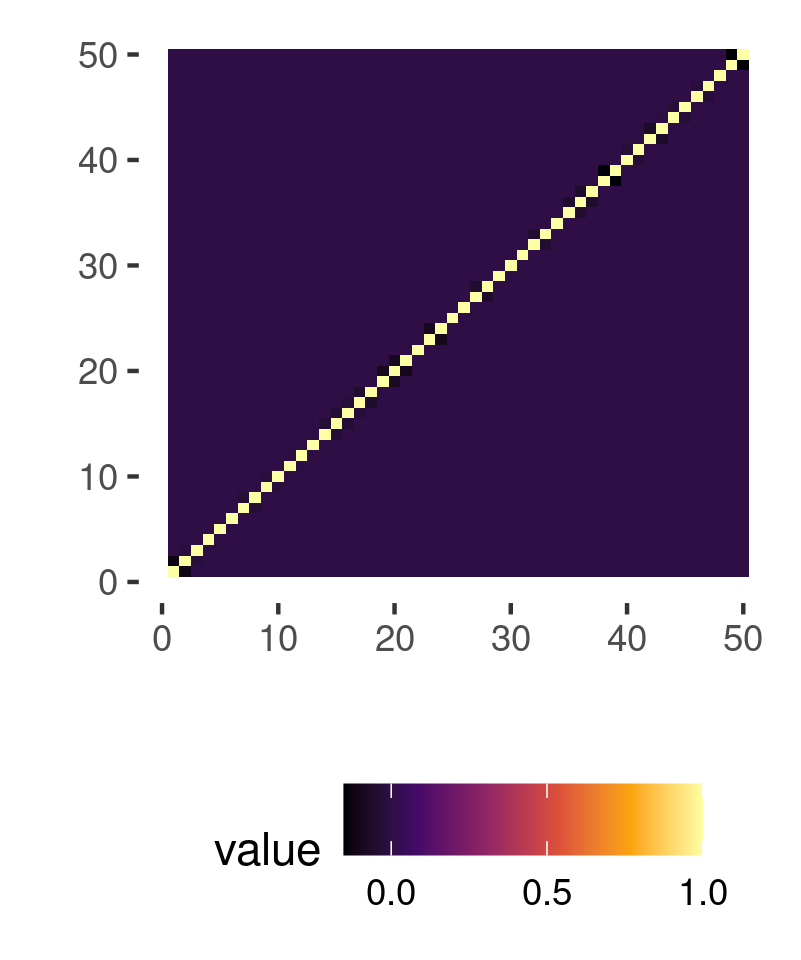}
		\includegraphics[width=.31\linewidth]{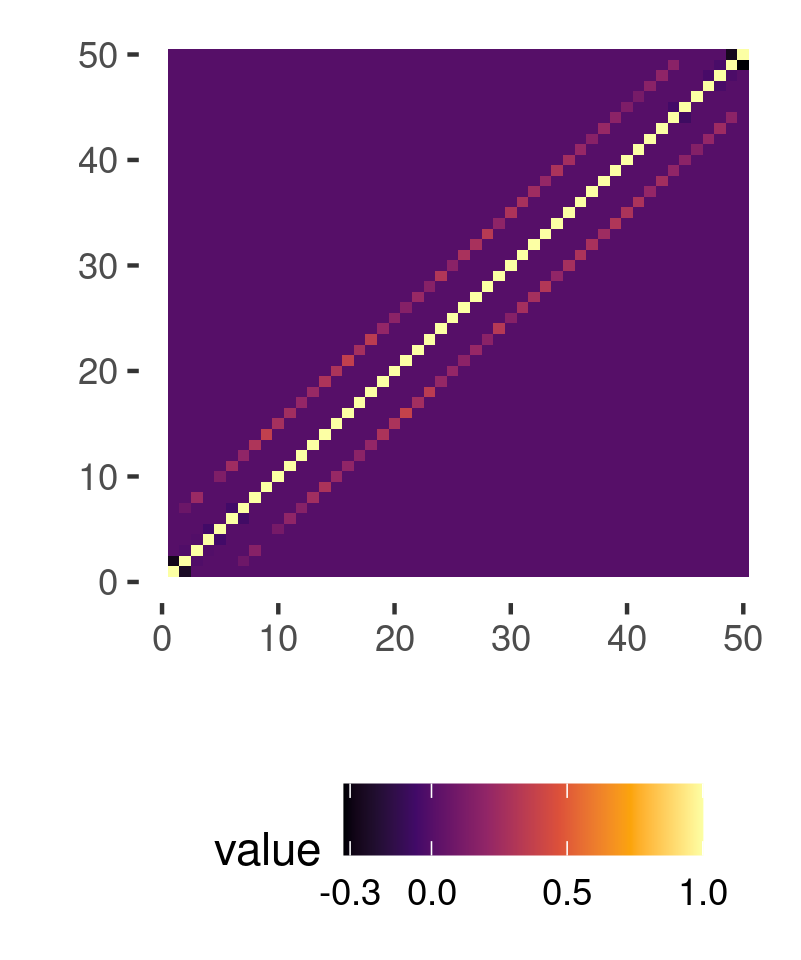}
		\includegraphics[width=.31\linewidth]{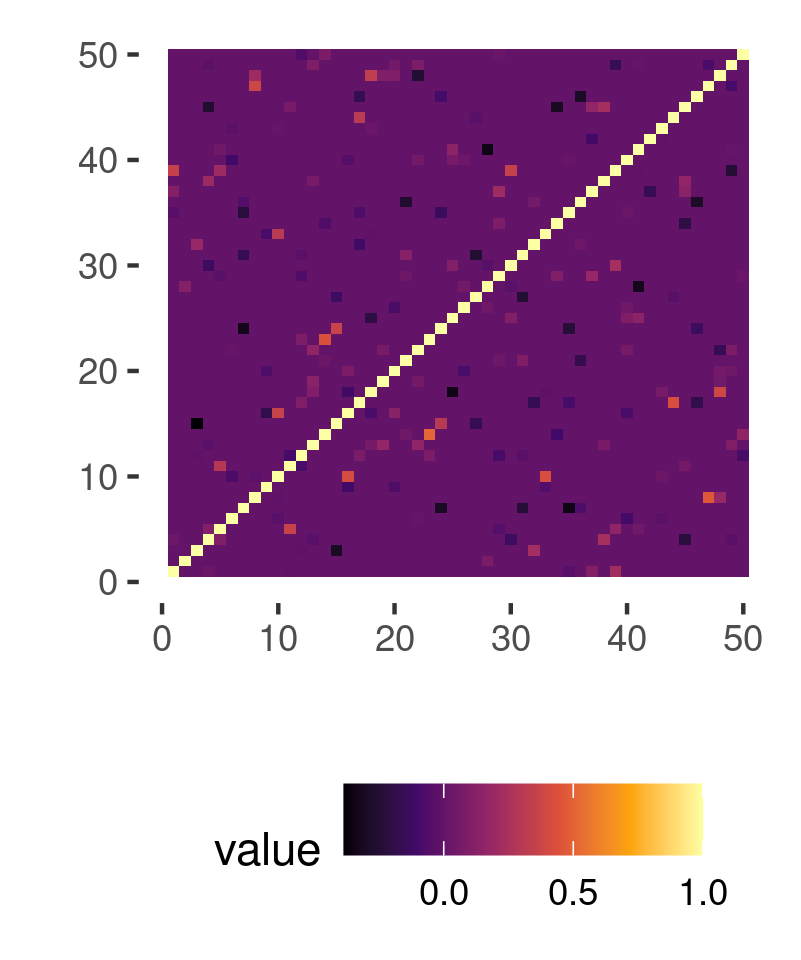}		
		\caption{Results obtained by M\&B.}
		\label{fig:precision_mnb}
	\end{subfigure}
\end{figure}
\begin{figure}[H]\ContinuedFloat
	\begin{subfigure}[b]{.99\linewidth}
		\includegraphics[width=.31\linewidth]{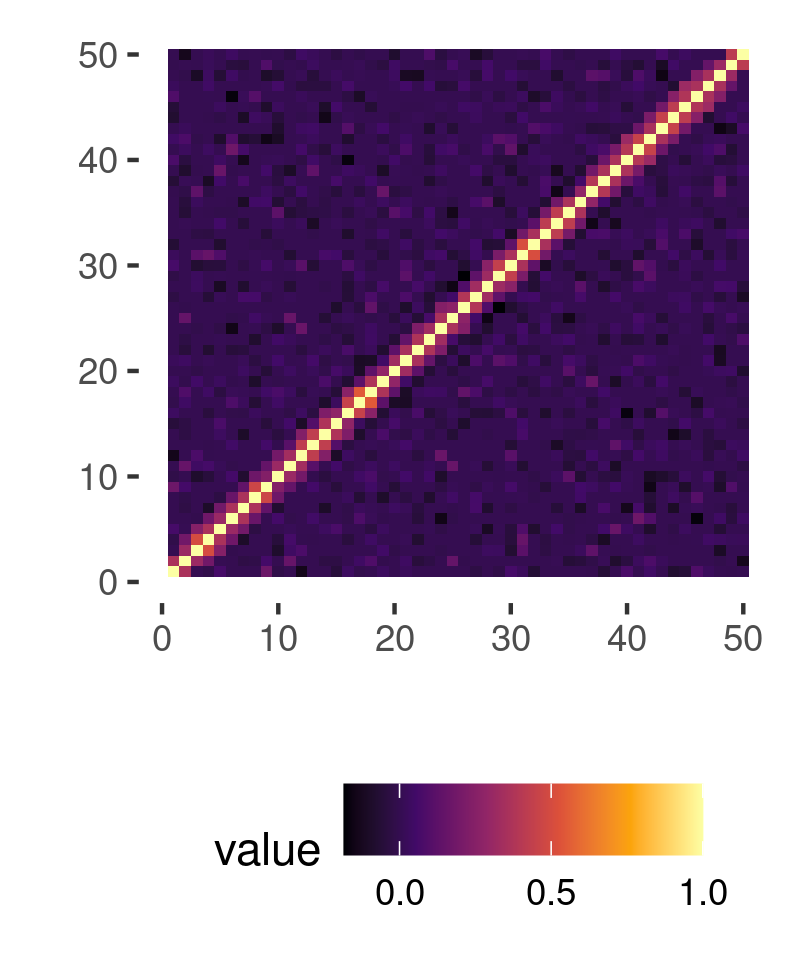}
		\includegraphics[width=.31\linewidth]{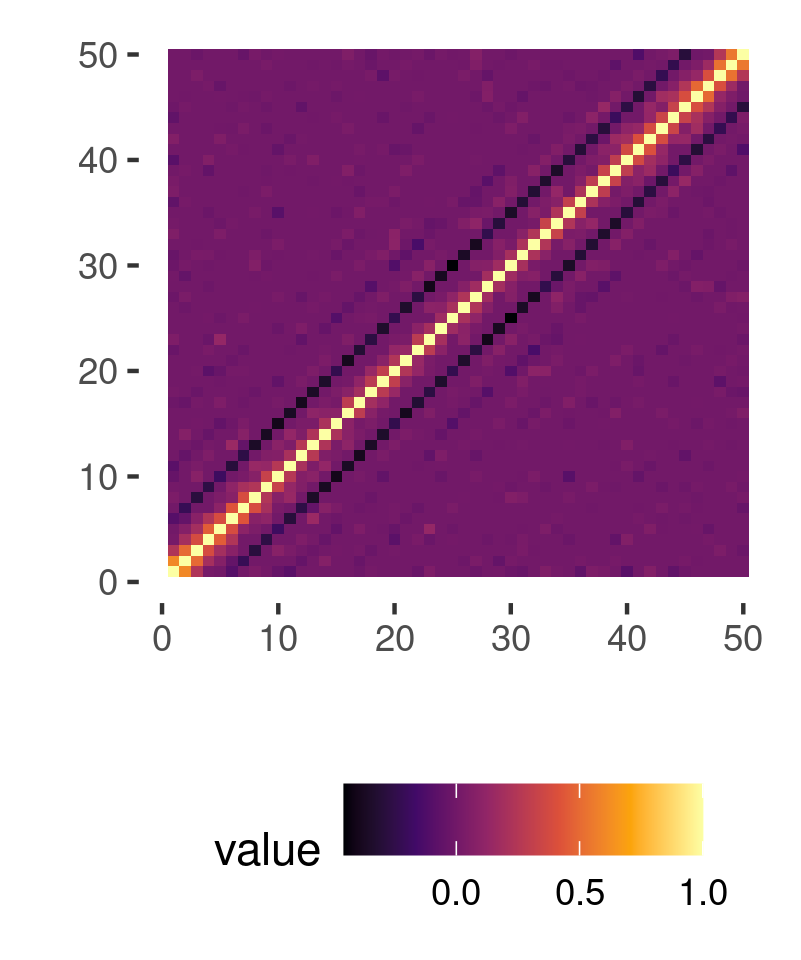}
		\includegraphics[width=.31\linewidth]{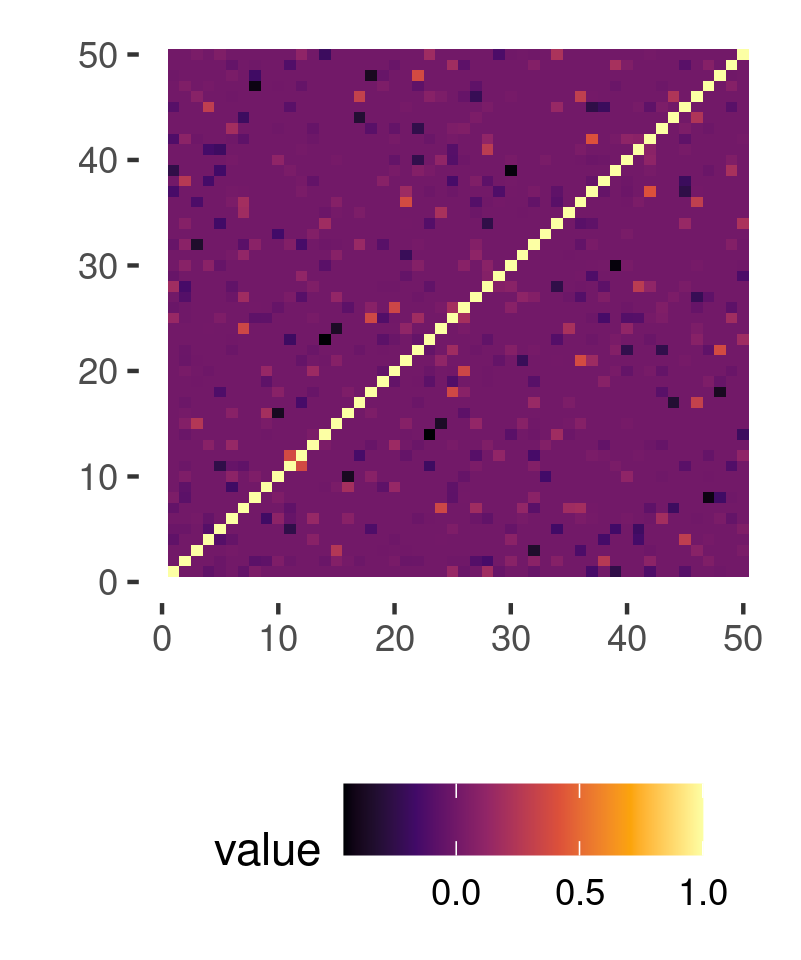}		
		\caption{Results obtained by Glasso.}
		\label{fig:glasso}
	\end{subfigure}
	\caption{Results of simulation experiments: Recovery of the {precision} matrices: 
	Panels (a), (b), (c) and (d) show the heatmaps of the estimated {scaled precision matrices $\bR=\diag(\bOmega)^{-\half} ~\bOmega~ \diag(\bOmega)^{-\half}$} for PF, Bagus, M\&B and Glasso, respectively, 
	in left to right order for the AR(2), banded and RSM model.}
	\label{fig:prec_estimates}
\end{figure}

\section{NASDAQ-100 Stock Price Data}
\label{subsec:finance}
Here we apply the precision factor model to a stock price dataset comprising the top 100 companies listed in NASDAQ for the period Jan 2015 - Dec 2019 recorded every week. 
We obtained the data from \href{https://finance.yahoo.com}{Yahoo finance}. 
After removing missing records, we ended up with data on 91 companies. 
First, we removed the trend by linear trend fitting.
The estimated graph 
shown in Figure \ref{fig: findata} indicates a sparse structure.

\begin{figure}[!ht]
	\centering
	\includegraphics[trim={1.5cm 1.25cm 1.6cm 1.55cm}, clip, width=0.9\linewidth]{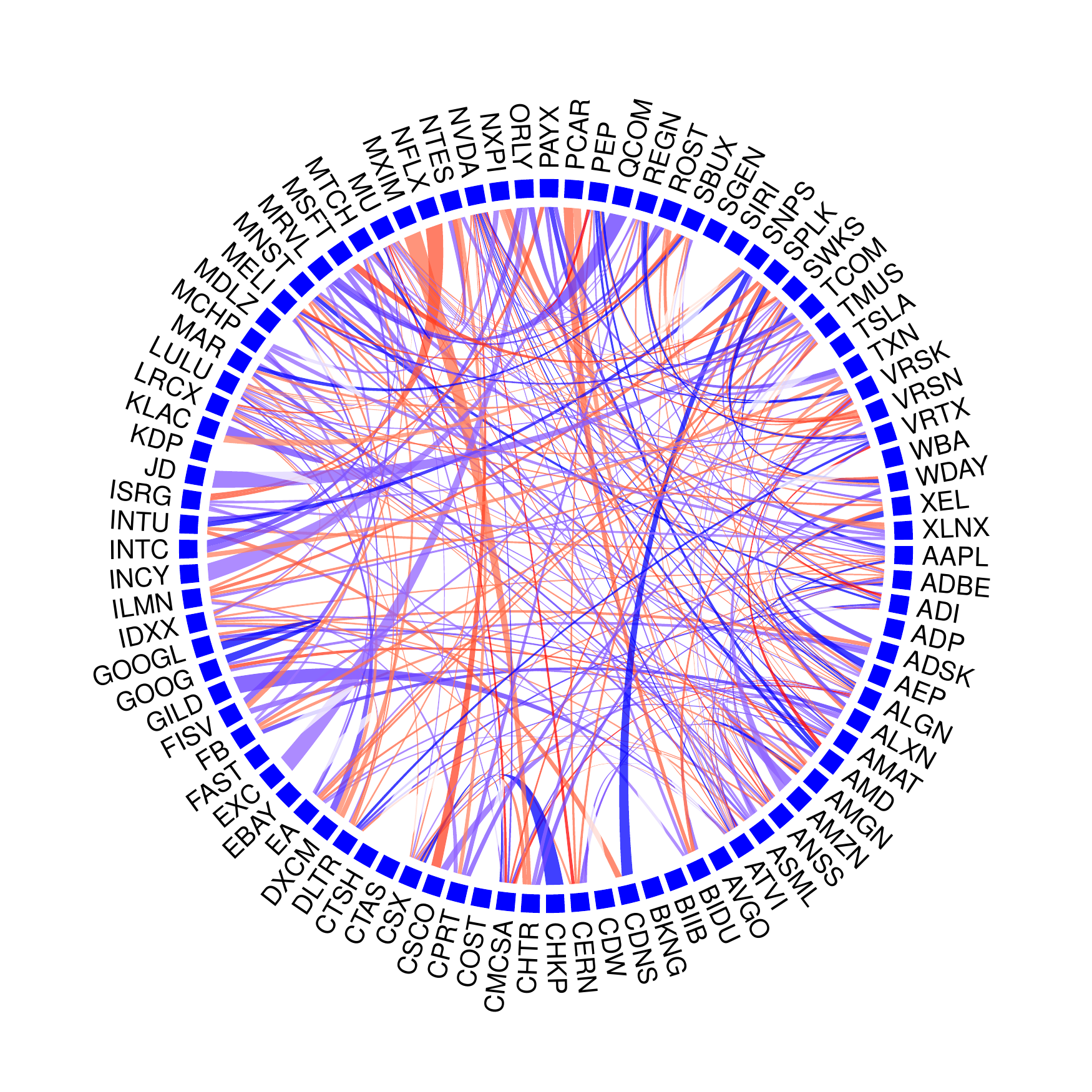}	
	\caption{Results for NASDAQ-100 stock price data: 
		Positive (negative) associations are represented by blue (red) links, 
		their opacities being proportional to the corresponding association strengths. 
		The link widths are inversely proportional to the number of edges associated with the corresponding nodes. 
	}
	\label{fig: findata}
\end{figure}

We highlight some interesting features observed from the analysis.
We observe a strong positive association between the rival GPU developers AMD and Nvidia Corporation (NVDA) consistent with the increasing demand of GPU computing. 
Similarly, a positive association between Qualcomm (QCOM) and Marvell Technology Group (MRVL), 
both of which develop and produce semiconductors and related technology, can be seen.  
The electronic design automation company Synopsys (SNPS) and the software company
Cadence Design Systems, Inc. (CDNS) exhibit strong positive associations.
Electronic Arts (EA) and Activision (ATVI) are video game developers and a Positive association is observed between them. 
ATVI seem to have strong associations with Amazon.com, Inc. (AMZN) and Apple Inc (AAPL).
The semiconductor manufacturer companies 
Texas Instruments (TXN) and Xilinx (XLNX) seem to have a negative association. 
We observe positive association between the computer memory and data storage producing company Micron Technology, Inc. (MU) and Match Group (MTCH).
Interestingly, the American semiconductor company Skyworks Solutions, Inc. (SWKS) exhibit strong negative association with companies like Microsoft (MSFT), Comcast Corporation (CMCSA), MU, etc.
Many of the aforementioned stocks exhibit strong positive association with the Meta Platforms, Inc. (FB) previously known as Facebook Inc.
%
As expected, we observe a strong positive association between GOOGL and GOOG, which are Google shares with and without voting rights, respectively. 

Notably, most of the connected nodes in the NASDAQ-100 listing correspond to technology companies and electronics manufacturers. 
This is in accordance with the ushering in of the `digital era' over the last decade, with technology giants taking over major shares of the market. 
Also, the strong negative associations between several companies in similar domains reflect the competitive nature of the market.

\baselineskip=14pt
\bibliographystylelatex{natbib}
\bibliographylatex{Graphical_Models}
\end{document}